\newcommand{\LTadd}[1]{}
\newcommand{\LTskip}[1]{#1}
\newcommand{\statement}[1]{\paragraph{#1}\pdfbookmark[1]{#1}{#1}} 
\DeclareRobustCommand{\colon}{\nobreak\mskip2mu\mathpunct{}\nonscript
  \mkern-\thinmuskip{\ordinarycolon}\mskip6muplus1mu\relax}
\renewcommand{\eqref}[1]{\hyperref[#1]{(\ref*{#1})}}
\newlength{\ul}
\newlength{\lw}
\newlength{\vertexdiameter}
\tikzset{%
    myLine/.style={line width=#1\lw,draw=black},
    myLine/.default=1,
    myFace/.style={fill=gray!20},
    x=\ul,
    y=\ul,
    text=black,
    set/.default=gray,
    shorten/.style={shorten <=#1,shorten >=#1},
    shorten/.default=2\lw,
    on layer/.code = {
            \pgfonlayer{#1}\begingroup
            \aftergroup\endpgfonlayer
            \aftergroup\endgroup
    },
    on background/.style = {
        preaction = {%
            #1,
            on layer = background,
        },
    },
    every node/.append style = {
        node font = \sffamily,
        inner sep = 4\lw,
    },
    every set label/.style = {
        draw=none,
        outer sep = \lw,
    },
    set/.style = {
        shape = rectangle,
        color = #1,
        draw,
        on background = {
            fill,
            color = #1!20,
        },
        prefix after command = {
            \pgfextra{\tikzset{every label/.style={color=#1, draw=#1, fill=#1!20, every set label}}}
        }
    }
}
\addcolon\linebreak[2]#1}}
\DeclareMathOperator{\sgn}{sgn}
\let\Phi\varPhi
\let\Lambda\varLambda
\let\Lambda\varLambda
\let\Gamma\varGamma
\newcommand{\E}{{\mathrm{e}}}
\newcommand{\I}{\mathrm{i}}
 \newcommand{\R}{ \mathbb{R} }
\newcommand{\C}{ \mathbb{C} }
\newcommand{\N}{ \mathbb{N} }
\newcommand{\Z}{ \mathbb{Z} }
\newcommand{\D}{\mathrm{d}}
\newcommand{\alg}{\mathcal{A}}
\newcommand{\edge}{e}
\newcommand{\unitvec}[1]{{\hat{\mathrm{e}}}_{#1}}
\newcommand{\dmu}{\partial_\mu}
\newcommand{\supedge}{^{\mathrm{edge}}}
\newcommand{\supbulk}{^{\mathrm{bulk}}}
\newcommand{\supany}{^{\mathrm{bulk/edge}}}
\newcommand{\supboundary}{^{\mathrm{boundary}}}
\newcommand{\Iedge}[1][]{I^{\mathrm{edge}}_{#1}}
\newcommand{\tIedge}[1][]{\tilde I^{\mathrm{edge}}_{#1}}
\newcommand{\Iedged}[2][]{I^{#2\,\mathrm{edge}}_{#1}}
\newcommand{\tIedged}[2][]{\tilde I^{#2\,\mathrm{edge}}_{#1}}
\newcommand{\Abulk}{A^{\mathrm{bulk}}}
\newcommand{\Aleft}{A^{\mathrm{left}}}
\newcommand{\Aright}{A^{\mathrm{right}}}
\newcommand{\Abottom}{A^{\mathrm{bottom}}}
\newcommand{\Atop}{A^{\mathrm{top}}}
\newcommand{\Cb}{C_\mathrm{B}}
\newcommand{\Cm}{C_\mathcal{F}}
\newcommand{\CHany}{C_H^{\mkern1mu\mathrm{bulk/edge}}}
\newcommand{\CHbulk}{C_H^{\mkern1mu\mathrm{bulk}}}
\newcommand{\CHedge}{C_H^{\mkern1mu\mathrm{edge}}}
\DeclareMathOperator{\dist}{dist}
\DeclareMathOperator{\diam}{diam}
\DeclareMathOperator{\tr}{tr}
\newcommand{\quadtext}[1]{\quad\text{#1}\quad}
\newcommand{\alignindent}{\hspace{-1cm}}
\newcommand{\sumstack}[2][]{\ifstrempty{#1}{\sum_{\substack{#2}}}{\smashoperator[#1]{\sum_{\substack{#2}}}}}
\theoremstyle{plain}
\newtheorem{theorem}{Theorem}
\newtheorem{proposition}[theorem]{Proposition}
\newtheorem{corollary}[theorem]{Corollary}
\newtheorem{lemma}[theorem]{Lemma}
\newtheorem{maintheorem}{Theorem}
\theoremstyle{definition}
\newtheorem{definition}[theorem]{Definition}
\theoremstyle{remark}
\newtheorem{remark}[theorem]{Remark}
\title{Equality of magnetization and edge current for interacting lattice fermions at positive temperature}
\date{}
\def\blindfootnote{\gdef\@thefnmark{}\@footnotetext}
\author{
    Jonas Lampart%
    \texorpdfstring{%
        \,\orcidlink{0000-0002-6980-3800}
        \footnote{
            \parbox[t]{.7\textwidth}{
                \foreignlanguage{french}{CNRS \& LICB (UMR 6303), Université de Bourgogne,\\
                9 Av.\ A.\ Savary, 21078 Dijon Cedex,} France
            }
        }
    }{}%
    \and Massimo Moscolari%
    \texorpdfstring{%
        \,\orcidlink{0000-0001-7574-1055}
        \footnote{
            \parbox[t]{.7\textwidth}{
                \foreignlanguage{italian}{Dipartimento di Matematica, Politecnico di Milano,\\
                Piazza Leonardo da Vinci~32, 20133~Milano,} Italy
            }
        }
    }{}%
    \and Stefan Teufel%
    \texorpdfstring{%
        \,\orcidlink{0000-0003-3296-4261}
        \footnote{
            \parbox[t]{.7\textwidth}{
                \foreignlanguage{ngerman}{Fachbereich Mathematik, Universität Tübingen,\\
                Auf~der~Morgenstelle~10, 72076~Tübingen,} Germany
            }
        }
    }{}%
    \and Tom Wessel%
    \texorpdfstring{%
        \,\orcidlink{0000-0001-7593-0913}
        \footnotemark[3]
    }{}%
}
\begin{document}

\bgroup

\setlength{\footnotesep}{\dimexpr1.2\baselineskip-\dp\strutbox}

\maketitle

\begin{abstract}
    We prove that the magnetization is equal to the edge current in the thermodynamic limit for a large class of models of lattice fermions with finite-range interactions satisfying local indistinguishability of the Gibbs state, a condition known to hold for sufficiently high temperatures.
    Our result implies that edge currents in such systems are determined by bulk properties and are therefore stable against large perturbations near the boundaries.
    Moreover, the equality persists also after taking the derivative with respect to the chemical potential.
    We show that this form of bulk-edge correspondence is essentially a consequence of homogeneity in the bulk and locality of the Gibbs state.
    An important intermediate result is a new version of Bloch's theorem for two-dimensional systems, stating that persistent currents vanish in the bulk.
\end{abstract}

\thispagestyle{empty}

\blindfootnote{
    Email:\quad%
    \parbox[t]{.7\textwidth}{
        \hypersetup{hidelinks}
        \href{mailto:jonas.lampart@u-bourgonge.fr}{jonas.lampart@u-bourgonge.fr},
        \href{mailto:massimo.moscolari@polimi.it}{massimo.moscolari@polimi.it},
        \newline
        \href{mailto:stefan.teufel@uni-tuebingen.de}{stefan.teufel@uni-tuebingen.de},
        \href{mailto:tom.wessel@uni-tuebingen.de}{tom.wessel@uni-tuebingen.de}
    }
}

\blindfootnote{%
    \parbox[t]{\linewidth-\parindent}{
        A prior version of the article has been accepted for publication, after peer review.
        This version includes some post-acceptance changes but is not the Version of Record and does not reflect post-acceptance improvements, or any corrections.
        The Version of Record is available online at
        \href{http://dx.doi.org/10.1007/s11040-024-09495-8}{doi:\,10.1007/s11040-024-09495-8}.
    }
}

\egroup

\newpage

\tableofcontents

\section{Introduction}

We show that extended fermion systems subject to homogeneous magnetic fields exhibit a form of bulk-edge correspondence in the thermodynamic limit, namely exact equality of magnetization and edge current, at positive temperatures.
Roughly speaking, our assumptions are finite-range interactions, homogeneity in the bulk of the Hamiltonian, and local indistinguishability of the Gibbs state.
The first two are explicit assumptions on the Hamiltonian, the last is known to hold for sufficiently high temperatures~\cite{KGK2014} and expected to hold much more generally.

In~\cite{CMT2021} a similar result was established for non-interacting fermion systems.
There it is also shown, how this result relates to the better known bulk-edge correspondence of the transport coefficients: Under the assumption of a gapped ground state and in the zero temperature limit, the derivative of the magnetization with respect to the chemical potential converges to the Hall conductivity and the derivative of the edge current with respect to the chemical potential converges to the edge conductance.
In this paper we establish the differentiability of the magnetization with respect to the chemical potential also for interacting systems and thus also the equality of the corresponding derivatives.

Let us now be more specific.
We consider a system of interacting fermions modelled by a sequence of finite-range Hamiltonians \((H_L(b))_L\) defined on boxes \(\Lambda_L=\{-L,\dotsc,L\} \times \{0,\dotsc,2L\}\) and dependent on a homogeneous magnetic field \(b\) perpendicular to the plane.
We think of \(\Lambda_L\) as a subset of the upper half plane of \(\Z^2\) and consider a strip \(\{-L,\dotsc,L\} \times \{0,\dotsc,D-1\}\) of fixed width \(D\) as the edge region and its complement as the bulk.
In the bulk we assume translation invariance of the Hamiltonian with respect to magnetic translations.

For inverse temperature \(\beta>0\), chemical potential \(\mu \in \R\), and magnetic field \(b\in\R\) the Gibbs, or thermal, state is defined as
\begin{equation}
    \label{eq:Gibbs}
    \rho_L(\beta,\mu,b)
    :=
    \frac{\E^{-\beta (H_L(b) - \mu \,\mathcal{N}_L)}}{\mathcal{Z}_L(\beta,\mu,b)}
    ,
\end{equation}
where \(\mathcal{N}_L\) is the number operator and \(\mathcal{Z}_L(\beta,\mu,b):=\tr\bigl(\E^{-\beta (H_L(b) - \mu \,\mathcal{N}_L)}\bigr)\) is the partition function.
In the absence of interactions, \(\rho_L(\beta,\mu,b)\) is naturally a local object, namely it has an integral kernel in which it is possible to identify a bulk and an edge region, see e.g.~\cite{CMT2021}.
However, in the interacting setting, the locality of \(\rho_L(\beta,\mu,b)\) is a delicate issue, which has been investigated e.g.\ in~\cite{DFF1996a,DFF1996,Yarotsky2005,KGK2014,FU2015}, see also the more recent~\cite{CMTW2023}.
In the present work locality of the Gibbs state in the form of \emph{local indistinguishability} is one of the crucial assumptions: Let \(X\subset\Lambda'\subset \Lambda_L\), then we assume that the expectation value of an observable \(A \in \alg_X\) can be approximated by the Gibbs state of the Hamiltonian restricted to \(\Lambda'\),
\begin{equation*}
    \tr\bigl( \rho_L(\beta,\mu,b) \, A \bigr)
    \approx
    \tr\bigl( \rho_{\Lambda'}(\beta,\mu,b) \, A \bigr),
\end{equation*}
up to terms that vanish in the distance of \(X\) to the boundary \(\partial\Lambda'\) of \(\Lambda'\).
A subtle point here is that the definition of \(\partial\Lambda'\) depends on whether we consider \(\Lambda'\) as a subset of \(\Z^2\) or as a subset \(\Lambda_L\).
In the first case \(\partial\Lambda'\) could include parts of the physical edge \(\{-L,\ldots,L\}\times \{0\}\) and local indistinguishability is only demanded for \(X\) in the bulk of the system.
In the second case local indistinguishability is also required for \(X\) located at the edge of the system.
For this reason we speak of \emph{local indistinguishability in the bulk} for the former case and \emph{local indistinguishability everywhere} for the latter.
Note that in our setting a sufficient condition implying local indistinguishability everywhere is a sufficiently high temperature~\cite{KGK2014}.
However, for systems with short-range interactions, one may generally expect local indistinguishability to hold away from critical points, i.e.~whenever the system has a unique thermal state in the thermodynamic limit.
Such a state has decaying correlations~\cite[chapter~4]{Simon1993}, which implies local indistinguishability, at least if the decay is sufficiently fast~\cite{CMTW2023}.

The magnetization is defined as the derivative of the grand canonical pressure \(p_L(\beta,\mu,b) := - \lvert\Lambda_L\rvert^{-1} \, \beta^{-1} \ln \bigl(\mathcal{Z}_L(\beta,\mu,b)\bigr)\) with respect to the magnetic field~\(b\), namely
\begin{equation}
    \label{eq:magnetization}
    m_L(\beta,\mu,b) := \frac{\partial}{\partial b} \, p_L(\beta,\mu,b)
    .
\end{equation}

Our main result states that whenever the family of finite volume Gibbs states satisfies \emph{local indistinguishability in the bulk} then the magnetization approximately equals the bond current \(\Iedge[L](\beta,\mu,b)\) through an orthogonal line of length~\(L\) at the lower edge of the sample, see Figure~\ref{fig:results} and equation~\eqref{eq:edge-current},
\begin{equation}
    \label{eq:introBEC}
    \bigl\lvert
        m_L(\beta,\mu,b) - \Iedge[L](\beta,\mu,b)
    \bigr\rvert
    =
    \mathcal{O}(L^{-1}).
\end{equation}
Moreover, this current is very well localized near the edge and thus called \emph{edge current}.
Both statements are contained in Theorem~\ref{thm:main} and depicted in Figure~\ref{fig:results}.
Furthermore, if the finite volume Gibbs state satisfies \emph{local indistinguishability everywhere}, then we show that the thermodynamic limit \(m(\beta,\mu,b):=\lim_{L\to\infty} m_L(\beta,\mu,b)\) exists and obtain an equality between the edge current \(\Iedge(\beta,\mu,b)\) and the magnetization \(m(\beta,\mu,b)\) in the infinite volume system, see Theorem~\ref{thm:limit-I}\@.
And while the orbital magnetization \(m_L(\beta,\mu,b)\) and the edge current \(\Iedge[L](\beta,\mu,b)\) of the finite systems in principle depend on the bulk and edge part of the system, we show that the limits \(m(\beta,\mu,b)\) and \(\Iedge(\beta,\mu,b)\) are independent of the specific shape of the interactions at the edge of the system.

However, since we obtain the infinite volume magnetization \(m(\beta,\mu,b)\) from a limit of finite systems with edges converging to a system on the upper half plane with an infinite edge, one might ask whether \(m(\beta,\mu,b)\) can be considered a pure bulk quantity.
To answer this question in the affirmative, we show in Theorem~\ref{thm:limit-p} that the magnetization obtained from any KMS state at \((\beta,\mu)\) for the translation invariant bulk Hamiltonian defined on the entire plane coincides with \(m(\beta,\mu,b)\).

Finally, in Theorem~\ref{thm:muderivative} we establish the differentiability of \(m(\beta,\mu,b)\) with respect to~\(\mu\).
By comparison to the non-interacting setting, one would expect that, in the presence of a spectral gap and with weak interactions, the zero temperature limit of \(\dmu m(\beta,\mu,b)\) converges to the quantized Hall conductivity.
While this result is not present in the literature and out of the scope of the paper, we show here a preliminary regularity result of \(m(\beta,\mu,b)\) with respect to~\(\mu\).

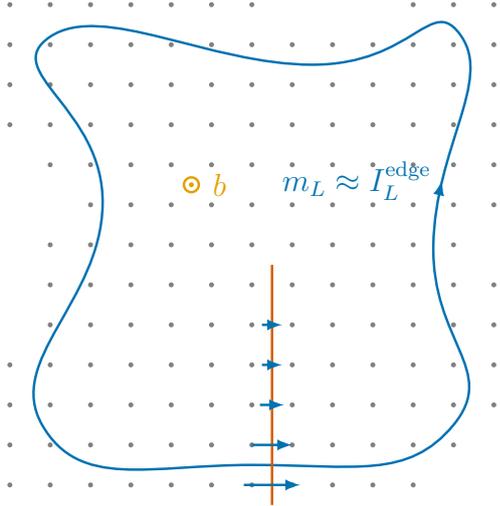
\begin{figure}
    \fcapside[.42\textwidth]{
        \setlength{\vertexdiameter}{2\lw}
        \setlength{\ul}{(.42\textwidth-\vertexdiameter-1ex)/12}
        \begin{tikzpicture}[myLine,x=\ul,y=\ul]
            \path[save path=\shape] (-6.2,12.2) -- ++ (0.0,-3.7) -- ++ (1.4,-1.0) -- ++ (0.0,-1.0) -- ++ (-0.7,-0.5) -- ++ (0.0,-2.0) -- ++ (-0.7,-0.5) -- (-6.2,-0.2) -- (4.5,-0.2) -- ++ (1.7,1.7) -- ++ (0.0,4.0) -- + (-0.7,0.5) -- ++ (0.0,1.0) -- (6.2,12.2) -- ++ (-2.7,0.0) -- ++ (-0.5,-0.7) -- ++ (-2.0,0.0) -- ++ (-0.5,0.7) -- cycle;
            \begin{scope}
                \clip[use path=\shape];
                \foreach \x in {-6,...,6}{
                    \foreach \y in {0,...,12}{
                        \node[style=circle, inner sep=0pt, outer sep=0, minimum width=\vertexdiameter, fill=gray,
                        ] (\x;\y) at (\x,\y) {};
                    }
                }
            \end{scope}
            \begin{scope}
                \clip(current bounding box.south west) rectangle (current bounding box.north east);
                \draw[
                    color=col_1,
                    line width=\lw,
                    decoration={
                        markings,
                        mark=at position 0.15 with {
                            \arrow{latex}
                            \node[left] {\(m_L \approx \Iedge[L]\)};
                        }
                    },
                    postaction=decorate
                ] plot[smooth cycle, tension=.9] coordinates {(0.0,0.5) (5.0,1.5) (4.5,6.0) (5.2,11.2) (1.5,10.5) (-5.0,11.2) (-3.7,7.0) (-5.2,1.5)};
                \path[every node/.append style={style=circle, inner sep=0pt, color=col_3}] (-1.5,7.5)
                    node[minimum width=2\lw, fill] {}
                    node[minimum width=6\lw, draw] {}
                    node[right,inner sep=4\lw] {\(b\)};
            \end{scope}
            \begin{scope}[draw=col_1]
                \path[draw=col_2] (0.5,-0.5) -- (0.5,5.5);
                \foreach \y / \linelen in {0/.7,1/.5,2/.3,3/.25,4/.25}{
                    \draw[-latex] (0.5,\y) +(-\linelen,0) -- +(\linelen,0);
                }
            \end{scope}
        \end{tikzpicture}
    }{
        \caption{
            Pictorial representation of our main results:
            For locally interacting fermions on a two-dimensional lattice with perpendicular magnetic field \(b\), satisfying local indistinguishability (see details in section~\ref{sec:Hamiltonian}) at positive temperature, the edge current \(\Iedge[L]\), which is the bond current through the vertical line, is localized near the boundary and approximately equals the magnetization \(m_L\) (Theorem~\ref{thm:main}\@).
            The latter is a bulk quantity, i.e.\ it converges in the thermodynamic limit \(L\to\infty\) and does not depend on the details near the boundary (Theorem~\ref{thm:limit-I}\@).
            This independence allows for the rough edges in the picture.
        }
        \label{fig:results}
    }
\end{figure}

A crucial ingredient to our proofs is a new version of \nameref{prop:Bloch-theorem} for two-dimensional systems.
We show that local indistinguishability together with current conservation implies that currents decay quickly with the distance to the edge, or, put differently, that in equilibrium currents can only flow near the edge of a sample.
See~\cite{Watanabe2019,BF2021} for recent related results and the discussion below.

Let us end the introduction with a few more comments on the literature.
As already mentioned, analogous mathematical results relating magnetization in the bulk with edge currents for non-interacting systems were obtained in~\cite{CMT2021}, with predecessors e.g.\ in~\cite{ANB1975,MMP1988,Kunz1994}.
Notice that the equality between edge current and magnetization in two-dimensional systems can be also interpreted as a quantum mechanical, microscopic version of Ampère’s law, as it is sometimes addressed in the physics literature, see for instance~\cite{NRLB2017}, where the effect of a time-dependent magnetic field on the magnetization of localized states is analyzed in a discrete, non-interacting setting.
The existence and properties of edge states of magnetic Schrödinger operators were studied e.g.\ in~\cite{FGW2000,DP2002}.
The mathematical literature on bulk-edge correspondence for transport coefficients is vast but concerns almost exclusively non-interacting systems at zero temperature and with a gap in the bulk, e.g.~\cite{SKR2000,EGS2005}.
In~\cite{Froehlich2018,FK1991,FS1993}
the authors derive, starting from the assumption of an incompressible bulk, effective actions for the bulk and the edge system.
While they do not start from a many-body fermion model as we do, they are able to derive much more far-reaching consequences for quantum Hall systems from a seemingly innocuous assumption about the bulk.
In microscopic models of interacting fermions the bulk-edge correspondence of transport coefficients was established at zero-temperature for weakly interacting gapped systems in~\cite{GMP2017,MP2022}.

\section{Mathematical framework and main results}
\label{sec:results}

\subsection{The Hamiltonian}
\label{sec:Hamiltonian}

Let \(\Z_+ = \Z\cap [0,\infty)\) and \(\Z^2_+ = \Z \times \Z_+\), both equipped with the \(1\)-metric \(\dist(x,y) := \lvert x_1-y_1 \rvert + \lvert x_2-y_2 \rvert\).
For any finite subset \(X\Subset\Z^2\) let \(\mathfrak{h}_X:= \ell^2(X, \C^s)\) be the one-body space and \(\mathfrak{F}_X:= \mathfrak{F}^- (\mathfrak{h}_X)\) the corresponding fermionic Fock space.
By \(\alg_X\) we denote the algebra of all bounded operators in \(\mathcal{L}(\mathfrak{F}_X)\) that commute with the number operator \(\mathcal{N}_X:= \sum_{x\in X}a^*_x \, a^{}_x:=\sum_{x\in X}\sum_{j=1}^s a^*_{x,j} \, a^{}_{x,j}\) and by \(\alg_\mathrm{loc} := \bigcup_{X\Subset\Z^2} \alg_X\) the algebra of all local observables that preserve particle number.
Its closure
\begin{equation*}
    \alg := \overline{\alg_\mathrm{loc}}^{\lVert\cdot\rVert}
\end{equation*}
is a \(C^*\)-algebra and called the quasi-local algebra.

We consider sequences \((H_L(b))_{L\in\N}\) of Hamiltonians defined on boxes \(\Lambda_L := \bigl( [-L,L] \times [0,2L] \bigr) \cap \Z^2\) that are of the form
\begin{equation}
    \label{eq:Hb}
    \begin{aligned}
        H_L(b)
        &=
        \begin{aligned}[t]
            &
            \sum_{x,y \in \Lambda_L} a^*_x \, T\supbulk_b(x, y) \, a^{}_y
            + \sum_{X\subset \Lambda_L} \Phi\supbulk(X)
            \\&
            +\sum_{x,y \in \Lambda_L} a^*_x \, T\supedge_b(x, y) \, a^{}_y
            + \sum_{X\subset \Lambda_L} \Phi\supedge(X)
        \end{aligned}
        \\&=:
        \sum_{x,y \in \Lambda_L} a^*_x \, T_b(x, y) \, a^{}_y
        + \sum_{X\subset \Lambda_L} \Phi(X)
        .
    \end{aligned}
\end{equation}
We sometimes need to restrict this Hamiltonian to other finite sets \(\Lambda'\subset \Lambda_L\).
In this case we write \(H_L(b)\big|_{\Lambda'}\) which means that the sums in~\eqref{eq:Hb} only run over \(x,y\in \Lambda'\) and \(X\subset \Lambda'\), respectively.

The Hamiltonian is split into a “bulk” part, which is invariant under magnetic translations and an “edge” part, which lives on the lower edge.
Each consists of two contributions:
The kinetic terms
\begin{equation}
    \label{eq:definition-Tb}
    T\supany_b(x,y)
    := \E^{ \I \frac{x_2+y_2 \xmathstrut{.2}}{2} b (x_1-y_1)} \, T\supany(x,y)
\end{equation}
are a Peierls phase times a hopping amplitude \(T\supany\colon \Z^2\times\Z^2 \to \mathcal{L}(\C^s)\), which is uniformly bounded \(\sup_{x,y\in \Z^2} \bigl\lVert T\supany(x,y) \bigr\rVert \leq C\) and satisfies \(T\supany(x,y) = {T\supany(y,x)}^*\).
The interactions
\begin{equation*}
    \Phi\supany\colon \{X\Subset\Z^2\}\to \alg_\mathrm{loc}
    ,\quad
    X\mapsto \Phi\supany(X)\in\alg_X
\end{equation*}
are self-adjoint, the terms are uniformly bounded, \(\sup_{X\subset \Z^2}\lVert\Phi\supany(X)\rVert \leq C\), and the corresponding operators \(\sum_{X\subset\Lambda_L} \Phi\supany(X)\) are assumed to commute with all local number operators \(\mathcal{N}_{\{z\}}\) for \(z\in \Lambda_L\).
The last condition is satisfied, e.g., for density-density interactions or external potentials.

Furthermore, all terms are assumed to be of finite range \(R\in \N\), i.e.\ \(T\supany(x,y)=0\) if \(\dist(x,y) > R\) and \(\Phi\supany(X)=0\) if \(\diam(X) > R\).
As mentioned above, the bulk contributions are assumed to be invariant under magnetic translations, i.e.\ \(T\supbulk(x-z,y-z)=T\supbulk(x,y)\) only depends on the difference \(x-y\) and \(\Phi\supbulk\) satisfies~\eqref{eq:invariance-under-magnetic-translations}.
And the edge contributions are supported on a strip of fixed width \(D\) along the lower edge, i.e.\ \(T\supedge(x,y)=0\) unless \(x,y\in \Z\times\{0,1,\dotsc,D-1\}\) and \(\Phi\supedge(X)=0\) unless \(X\subset \Z\times\{0,1,\dotsc,D-1\}\).
Without loss of generality we choose \(D \geq R\), since the presence of the boundary already modifies the Hamiltonian in \(\Z\times\{0,1,\dotsc,R-1\}\).

A canonical example of a magnetic Hamiltonian with interactions is the Hofstadter-Hubbard model, i.e.\ the second quantization of the discrete magnetic Laplacian together with an on-site density-density interaction.
More precisely, for the Hofstadter-Hubbard model we have
\(\mathfrak{h}_{\{x\}}=\C^2\),
\begin{equation*}
    T\supbulk(x, y) = {\mathrm{id}_{\C^2} \cdot\delta_{\lvert x-y \rvert=1}},
    \quadtext{and}
    \Phi\supbulk(X) = {a^*_{x,1} \, a^{}_{x,1} \, a^*_{x,2} \, a^{}_{x,2} \cdot \delta_{X=\{x\}}}
    ,
\end{equation*}
which leads to
\begin{equation*}
    H_L^\mathrm{HH}(b)
    =
    \sumstack[lr]{x,y\in\Lambda_L:\\\lvert x-y \rvert=1}
    \,\E^{ \I \frac{x_2+y_2 \xmathstrut{.2}}{2} b (x_1-y_1)}
    \sumstack[lr]{j\in \{1,2\}} a^*_{x,j} \, a^{}_{y,j}
    + \sum_{x\in\Lambda_L} a^*_{x,1} \, a^{}_{x,1} \, a^*_{x,2} \, a^{}_{x,2}
    .
\end{equation*}
Near the edge, one could, for example, add an external potential \(\Phi\supedge(X) = \phi(x) \, \mathcal{N}_{\{x\}} \, \delta_{X=\{x\}}\) or effectively remove individual sites by subtracting all hoppings connected to them.

For a finite subset \(\Lambda\Subset\Z^2_+\), a Hamiltonian \(H\in \alg_\Lambda\), inverse temperature \(\beta>0\), chemical potential \(\mu \in \R\), and magnetic field \(b\in\R\) we denote the grand canonical partition function by
\begin{equation*}
    \mathcal{Z}_\Lambda[H](\beta,\mu) := \tr \bigl(\E^{-\beta (H - \mu \,\mathcal{N}_\Lambda)} \bigr)
    ,
\end{equation*}
and the Gibbs state by
\begin{equation*}
    \rho_{\Lambda}[H](\beta,\mu) := \frac{\E^{-\beta (H - \mu \,\mathcal{N}_\Lambda)}}{\mathcal{Z}_\Lambda[H](\beta,\mu)}
    .
\end{equation*}
When we consider truncated Hamiltonians in the proofs, we drop the index and informally write \(\rho\bigl[H_L(b)\big|_{\Lambda'}\bigr](\beta,\mu) \equiv \rho_{\Lambda'}\bigl[H_L(b)\big|_{\Lambda'}\bigr](\beta,\mu)\).
On the boxes \(\Lambda_L\) we abbreviate
\begin{equation*}
    \mathcal{N}_L := \mathcal{N}_{\Lambda_L}
    ,\quad
    \mathcal{Z}_L(\beta,\mu,b) := \mathcal{Z}_{\Lambda_L}\bigl[H_L(b)\bigr](\beta,\mu)
    \quadtext{and}
    \rho_L(\beta,\mu,b) := \rho_{\Lambda_L}\bigl[H_L(b)\bigr](\beta,\mu)
    .
\end{equation*}

The key hypothesis for our results is that the Gibbs state is locally determined by the local terms in the Hamiltonian.
This property is often called \emph{local indistinguishability} and made precise in the following definition.

\begin{definition}[Local indistinguishability of the Gibbs state]\label{def:local-indistinguishability}
    Let \(\zeta\colon\N_0\to \R_+\) be non-increasing with \(\lim_{n\to \infty}\zeta(n)=0\) and \(g\colon\R_+\to \R_+\) non-decreasing with \(g\bigl((2R+1)^2+1\bigr)=1\).
    The family of Hamiltonians \((H_L(b))_{L\in\N}\) is said to satisfy \emph{local indistinguishability of the Gibbs state at \((\beta,\mu,b)\)} with \(\zeta\)-decay,
    \begin{enumerate}[label=(\alph*)]
        \item \emph{in the bulk} if and only if for all \(L\in \N\), \(X\subset \Lambda'\subset \Lambda_L\) and \(A\in \mathcal{L}(\mathcal{F}_X) \subset \alg\)
            \begin{equation}\label{eq:def-local-indistinguishability-bulk}
                \bigl\lvert
                    \tr \bigl(\rho_L(\beta,\mu,b) \, A\bigr)
                    - \tr \bigl(\rho_{\Lambda'}\bigl[H_L(b)\big|_{\Lambda'}\bigr](\beta,\mu) \, A\bigr)
                \bigr\rvert
                \leq
                \lVert A \rVert
                \, g\bigl( \lvert X \rvert \bigr)
                \, \zeta\bigl( \dist(X,\Z^2\setminus\Lambda')\bigr),
            \end{equation}

        \item \emph{everywhere} if and only if for all \(L\in \N\), \(X\subset \Lambda'\subset \Lambda_L\) and \(A\in \mathcal{L}(\mathcal{F}_X) \subset \alg\)
            \begin{equation}\label{eq:def-local-indistinguishability-everywhere}
                \bigl\lvert
                    \tr \bigl(\rho_L(\beta,\mu,b) \, A\bigr)
                    - \tr \bigl(\rho_{\Lambda'}\bigl[H_L(b)\big|_{\Lambda'}\bigr](\beta,\mu) \, A\bigr)
                \bigr\rvert
                \leq
                \lVert A \rVert
                \, g\bigl( \lvert X \rvert \bigr)
                \, \zeta\bigl( \dist(X,\Lambda_L \setminus \Lambda')\bigr)
                .
            \end{equation}
    \end{enumerate}
\end{definition}

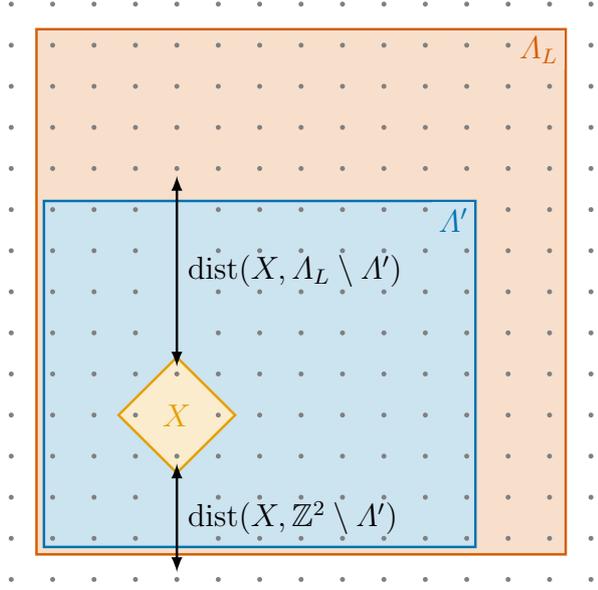
\begin{figure}
    \fcapside[.55\textwidth]{
        \setlength{\vertexdiameter}{2\lw}
        \setlength{\ul}{(.5\textwidth-\vertexdiameter-1ex)/14}
        \begin{tikzpicture}[myLine,x=\ul,y=\ul]
            \foreach \x in {-7,...,7}{
                \foreach \y in {-1,...,13}{
                    \node[style=circle, inner sep=0pt, outer sep=0pt, minimum width=\vertexdiameter, fill=gray] (\x;\y) at (\x,\y) {};
                }
            }
            \begin{scope}[every node/.append style={inner sep=2.5\lw}]
                \node[set=col_2, inner sep=5.5\lw, fit={(-6;0) (6;12)},
                    label={[anchor=north east]north east:\(\Lambda_L\)}] (Lambda L) {};
                \node[set=col_1, fit={(-6;0) (4;8)}, save path=\LambdaPrime,
                    label={[anchor=north east]north east:\(\Lambda'\)}] (Lambda Prime) {};
                \node[set=col_3, diamond, fit={(-4;3) (-2;3)},
                    label={[anchor=center]center:\(X\)}] (set X) {};
            \end{scope}
            \begin{scope}[draw, shorten, latex-latex, every edge/.append style={preaction={draw, line width=3\lw, arrows=-, shorten=\ul-3\lw, color=col_1!20}}]
                \path (-3;2) edge node[right] {\(\dist(X,\Z^2\setminus\Lambda')\)} (-3;-1);
                \path (-3;4) edge node[right] {\(\dist(X,\Lambda_L\setminus\Lambda')\)} (-3;9);
            \end{scope}
        \end{tikzpicture}
    }{
        \caption{
            Sketch of the two distances used in Definition~\ref{def:local-indistinguishability}.
            If the Hamiltonian satisfies local indistinguishability in the bulk, the bound~\eqref{eq:def-local-indistinguishability-bulk} decays in the distance to the boundary of \(\Lambda'\) (in \(\Z^2\)).
            While local indistinguishability everywhere also gives a good estimate if \(X\) is close to the boundary of \(\Lambda'\) as long as the boundaries of \(\Lambda'\) and \(\Lambda_L\) coincide in that region.
        }
        \label{fig:distances-local-indistinguishability}
    }
\end{figure}

Note the difference between \(\Z^2\setminus \Lambda'\) and \(\Lambda_L\setminus \Lambda'\) in the distance in~\eqref{eq:def-local-indistinguishability-bulk} and~\eqref{eq:def-local-indistinguishability-everywhere}, if \(\Lambda'\) includes parts at the boundary of \(\Lambda_L\), see Figure~\ref{fig:distances-local-indistinguishability}.
In particular \(\dist(X,\Z^2\setminus\Lambda') = \min \{\dist(X,\Lambda_L\setminus{\Lambda'}), \dist(X,\Z^2\setminus\Lambda_L)\}\), so indistinguishability everywhere implies the property in the bulk.
This distinction is useful, because we expect a better decay in the bulk and a worse decay at the boundary due to the presence of edge states.
For some of the statements we however need local indistinguishability also near the boundary and might accept a slower decay.
In particular, local indistinguishability directly implies decay of correlations, see Lemma~\ref{lem:decay-of-correlations}, and we do not expect that to hold with good decay near the boundary due to edge states, see e.g.~\cite{MP2022}.

For most of our results, we will require local indistinguishability with decay at least \(\zeta\in \ell^1\), but any better decay will yield better results, in particular concerning localization near the boundary.
For example, local indistinguishability everywhere with exponential decay function \(\zeta\) is known to hold for sufficiently high temperature in systems with finite-range interactions~\cite[Corollary~2 and~5]{KGK2014}.
The decay rate and constants depend on \(\beta\), but can be chosen uniformly for small~\(\beta\).
As is shown in~\cite{CMTW2023}, decay of correlations at some positive temperature implies local indistinguishability at the same temperature, and the converse also holds, see Lemma~\ref{lem:decay-of-correlations}.

The normalization in Definition~\ref{def:local-indistinguishability} is chosen such that all later bounds, where we always restrict to sets \(\lvert X \rvert \leq (2R+1)^2 + 1\) so that \(g\bigl(\lvert X \rvert\bigr)\leq 1\), do not depend on \(g\).
We need to allow for larger \(X\) only to define the thermodynamic limit \(\rho_\infty(\beta,\mu,b)\).

\subsection{The edge current and the magnetization}
Denote by \(B_L^x(\ell) := \{\, y\in \Lambda_L \,\vert\, \dist(x,y) \leq \ell \,\}\) the ball around \(x\) in \(\Lambda_L\) with radius \(\ell\).
The set \(B_L^x(R)\) contains all points which have non-vanishing interaction with \(x\).
Then the current operator \(J_L(b)\) has components (\(k=1,2\))
\begin{align*}
    J_{k,L}(b)
    & := \I \, \bigl[X_{k,L}, H_L(b)\bigr]
    = \I \Biggl[
        \sum_{z\in \Lambda_L} z_k \, a^*_z \, a^{}_z ,
        \sum_{x,y \in \Lambda_L} a_x^* \, T_b(x,y) \, a^{}_y
    \Biggr]
    \\ &\LTadd{=} \mathrel{\phantom{:=}\llap{\(=\)}} \I \sum_{x \in \Lambda_L} \sum_{y \in B_L^x(R)} (x_k-y_k) \, a_x^*\, T_b (x,y) \,a^{}_y.
\end{align*}
We now rewrite this sum as a sum of currents through edges of the dual lattice.
For that, denote by \(\edge_{k,z}\subset \R^2\) the dual edge which intersects the edge between lattice points \(z\) and \(z+\unitvec{k}\) and by \(\overline{\edge_{k,z}}\) the edge together with the attached vertices (see Figure~\ref{fig:dual-lattice-and-current-operator}).
Here, \(\unitvec{k}\) denotes the unit vector in \(k\) direction, e.g.\ \(\unitvec{1}=(1,0)\).
Moreover, denote by \(\overline{xy}\subset \R^2\) the line connecting \(x\) and \(y\).
We define the current through the dual edge \(\edge_{k,z}\) as
\begin{equation}
    \label{eq:current-through-dual}
    J_{k,L}^z(b)
    := \frac{\I}{2}\biggl(
        \sumstack[r]{x,y\in \Lambda_L:\\\overline{xy}\,\cap\, \edge_{k,z} \neq \emptyset} \sgn(x_k-y_k) \, a_x^*\, T_b (x,y) \,a^{}_y
        + \sumstack[r]{x,y\in \Lambda_L:\\\overline{xy}\,\cap\, \overline{\edge_{k,z}} \neq \emptyset} \sgn(x_k-y_k) \, a_x^*\, T_b (x,y) \,a^{}_y
    \biggr)
    .
\end{equation}
Thus, each hopping term \(a_x^*\, T_b (x,y) \,a^{}_y\) is included once in \(J_{k,L}^z(b)\) if \(\overline{xy}\) intersects the dual edge \(e_{k,z}\), or half if \(\overline{xy}\) intersects only the endpoints \(\overline{e_{k,z}}\setminus e_{k,z}\) of the dual edge.
In the latter case it appears for twice as many different~\(z\).
Since \(\overline{xy}\) intersects \(\lvert x_k - y_k \rvert\) vertical lines we can rewrite
\begin{equation*}
    J_{1,L}(b)
    = \sum_{m=-L}^{L-1} \sum_{n=0}^{2L} J_{1,L}^{(m,n)}(b)
    \quadtext{and}
    J_{2,L}(b)
    = \sum_{m=-L}^{L} \sum_{n=0}^{2L-1} J_{2,L}^{(m,n)}(b)
\end{equation*}
by summing over all edges.
Note, that the \(L\)-dependence of \(J_{k,L}^z(b)\) only stems from missing hopping terms near the boundary, and we define
\begin{equation*}
    J_{k}^z(b) := J_{k,L}^z(b)
\end{equation*}
for all \(L > \lvert z_1 \rvert + R\) and \(2L > z_2 +R\) consistently.

\begin{figure}
    \fcapside{
        \setlength{\ul}{1cm}
        \setlength{\vertexdiameter}{3\lw}
        \colorlet{color_int}{col_1}
        \colorlet{color_point}{col_2}
        \colorlet{color_edge}{col_2}
        \begin{tikzpicture}[myLine,x=\ul,y=\ul, line cap=round]
            \coordinate (z) at (0,0);
            \path (z) +(0.5,0) coordinate (center);
            \draw[color=color_int] (center) +(-.5,0) -- +(.5,0);
            \foreach \mx in {-1,1}{
                \draw[color=color_int] (center) [yshift=\mx*2\lw] +(-\mx*0.5,0) -- +(\mx*1.5,0);
                \foreach \my in {-1,1}{
                    \draw[color=color_int!60] (center) +(-\mx*0.5,0) -- +(\mx*0.5,\my);
                }
            }
            \foreach \x in {-1,-0,...,4}{
                \foreach \y in {-2,-1,...,2}{
                    \path[fill=gray] (\x,\y) circle (\vertexdiameter);
                }
            }
            \fill[color=color_point] (z) circle (\vertexdiameter) node[anchor=south east] {\(z\)};
            \draw[color=color_edge, line cap=round, line width=2\lw] (center) +(0,-.5) node[anchor=west] {\(e_{1,z}\)} -- +(0,.5);
            \fill[color=color_point] (3,1) coordinate (z') circle (\vertexdiameter) node[anchor=north west] {\(z'\)};
            \path (z') +(0,0.5) coordinate (center');
            \draw[color=color_edge, line cap=round, line width=2\lw] (center') +(.5,0) node[anchor=west] {\(e_{2,z'}\)} -- +(-.5,0);
        \end{tikzpicture}
    }{
        \caption{
            The figure shows a small section of \(\Lambda_L\) and two points \(z\) and \(z' \in \Lambda_L\) with their dual edges \(e_{1,z}\) and \(e_{2,z'}\), respectively. 
            For \(z\), also all lines \(\overline{xy}\) which contribute to \(J_{1,L}^z(b)\) for \(R=2\) are drawn.
            The light blue lines only intersect the endpoints of \(\overline{e_{1,z}}\) and thus come with a prefactor \(1/2\) in~\eqref{eq:current-through-dual}. 
        }
        \label{fig:dual-lattice-and-current-operator}
    }
\end{figure}
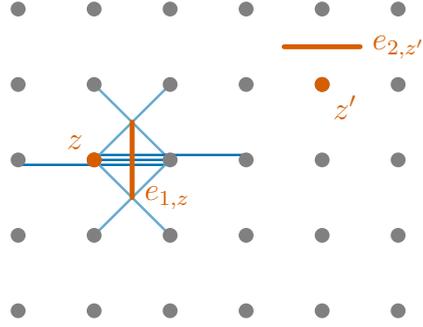

Moreover, for \(d\in \{1,\dotsc,L\}\), we define the edge current as
\begin{equation}
    \label{eq:edge-current}
    \Iedged[L]{d}(\beta,\mu,b)
    := \sum_{n=0}^{d-1} \tr \bigl( \rho_L(\beta,\mu,b) \, J_{1,L}^{(0,n)}(b)\bigr)
\end{equation}
and introduce the shorthand \(\Iedge[L]=\Iedged[L]{L}\) for the current along the lower half of the system \(\Lambda_L\).
By current conservation it equals the currents through lines connecting the center of \(\Lambda_L\) with the midpoints of the other boundaries of \(\Lambda_L\), see proof of Proposition~\ref{prop:m_L-equals-I_L^d}, we only choose this edge because it persists in the thermodynamic limit \(L\to\infty\) in our geometry.

It remains to recall the definition of (orbital) magnetization.
For inverse temperature \(\beta>0\), chemical potential \(\mu \in \R\), and magnetic field \(b\in\R\), the grand canonical pressure is given by
\begin{equation}
    \label{eq:definition-pressure-finite-volume-partition-function}
    p_L(\beta,\mu,b)
    :=
    -(2L+1)^{-2}\beta^{-1} \ln \bigl(\mathcal{Z}_L(\beta,\mu,b)\bigr)
    ,
\end{equation}
and the magnetization by
\begin{equation*}
    m_L(\beta,\mu,b)
    :=
    \frac{\partial}{\partial b} \, p_L(\beta,\mu,b)
    .
\end{equation*}

\subsection{Main results}
\label{subsec:results}
Our first main result deals with the magnetization and the edge current at finite volume.
For this type of result only local indistinguishability of the Gibbs state in the bulk is needed.

\belowpdfbookmark{Theorem~\ref{thm:main}}{thm:main}
\begin{maintheorem}
    \label{thm:main}
    Let \(\zeta\supbulk\in \ell^1\) and \((H_L(b))_{L\in\N}\) be a family of Hamiltonians of the form~\eqref{eq:Hb}.
    Then, there exists a null sequence \(\theta\) and a constant \(C>0\) such that the following holds:
    If \((H_L(b))_{L\in \N}\) satisfies local indistinguishability of the Gibbs state in the bulk at \((\beta,\mu,b)\) with \(\zeta\supbulk\)-decay in the sense of Definition~\ref{def:local-indistinguishability}, then
    \begin{equation}\label{eq:thm-m_L-equals-I_L}
        \bigl\lvert
            m_L(\beta,\mu,b) - \Iedge[L](\beta,\mu,b)
        \bigr\rvert
        \leq
        \theta(L)
        \qquad\text{for all \(L \geq D+R\).}
    \end{equation}
    Moreover, the edge current is localized near the edge in the sense that for all \(L \geq d \geq R+D\)
    \begin{equation}\label{eq:thm-localization-I_L}
        \bigl\lvert
            \Iedged[L]{d}(\beta,\mu,b) - \Iedge[L](\beta,\mu,b)
        \bigr\rvert
        \leq
        C \smashoperator[lr]{\sum_{n=d-R-D}^\infty} \zeta\supbulk(n).
    \end{equation}
\end{maintheorem}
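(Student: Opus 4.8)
The plan is to reduce the magnetization to the expectation of a current operator, rewrite that operator as a height‑weighted sum of the dual‑edge currents $J_{1,L}^{(m,n)}(b)$ by a discrete Stokes identity, and then collapse the weighted sum onto $\Iedge[L]$ using current conservation in the Gibbs state together with \nameref{prop:Bloch-theorem}.

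First I would differentiate the pressure. Since the interaction terms $\Phi(X)$ do not depend on $b$ and $[\rho_L(\beta,\mu,b),H_L(b)]=0$, Duhamel's formula and cyclicity of the trace give $m_L(\beta,\mu,b)=(2L+1)^{-2}\,\tr\bigl(\rho_L(\beta,\mu,b)\,\partial_b H_L(b)\bigr)$ with $\partial_b H_L(b)=\I\sum_{x,y\in\Lambda_L}\tfrac12(x_2+y_2)(x_1-y_1)\,a_x^*\,T_b(x,y)\,a_y$; this is precisely the current operator $J_{1,L}(b)=\I\sum_{x,y}(x_1-y_1)a_x^*T_b(x,y)a_y$ with each bond weighted by its mean height $\tfrac12(x_2+y_2)$. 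Using the elementary identity $\tfrac12(x_2+y_2)(x_1-y_1)=-\int_{\overline{xy}}s_2\,\D s_1$ and splitting this line integral over the vertical dual lines crossed by $\overline{xy}$ — the half‑weights at dual vertices matching exactly the closed‑edge terms in~\eqref{eq:current-through-dual} — I would obtain $\partial_b H_L(b)=\sum_{m=-L}^{L-1}\sum_{n=0}^{2L}n\,J_{1,L}^{(m,n)}(b)+\mathcal E_L$, where $\mathcal E_L$ is a bounded operator with $\lVert\mathcal E_L\rVert=O(1)$ uniformly in $L$, coming from the difference between true crossing heights and their nearest integers. Hence $(2L+1)^2 m_L=\sum_{m,n}n\,c_{m,n}+O(1)$ with $c_{m,n}:=\tr(\rho_L J_{1,L}^{(m,n)}(b))$, and moreover $\sum_{m,n}c_{m,n}=\tr\bigl(\rho_L\,\I[X_{1,L},H_L(b)]\bigr)=0$, which lets the weight $n$ be recentred.

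Next I would bring in conservation and Bloch. Put $S_{m,N}:=\sum_{n=0}^{N}c_{m,n}$, the net current through the vertical cut at $x_1=m+\tfrac12$ up to height $N$. Current conservation, $\tr(\rho_L[\mathcal{N}_{\{z\}},H_L(b)])=0$, applied via a discrete divergence theorem to the rectangle $[-L,m]\times[0,N]$, gives $S_{m,N}=-\sum_{m'=-L}^{m}\tr(\rho_L J_{2,L}^{(m',N)}(b))$; in particular $S_{m,2L}=0$. \nameref{prop:Bloch-theorem} bounds each $\lvert\tr(\rho_L J_{k,L}^{z}(b))\rvert$ by $C\,\zeta\supbulk\bigl(\dist(z,\partial\Lambda_L)-c_0\bigr)$ for a constant $c_0=O(R+D)$, and combined with the previous display this shows that, for $m$ well inside the bulk, $S_{m,N}$ is within a $\zeta\supbulk$‑tail of a single number — namely $S_{0,L-1}=\Iedge[L]$ — once $N$ is far enough into the bulk; the same input gives $\lvert\Iedge[L]\rvert=O(1)$ since $\zeta\supbulk\in\ell^1$. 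Abel summation in $n$ turns the identity of the previous step into $(2L+1)^2 m_L=-\sum_{m=-L}^{L-1}\sum_{n=0}^{2L-1}S_{m,n}+O(1)$, which I would estimate term by term: the bulk of the $(m,n)$ range yields $(2L+1)^2\Iedge[L]$ up to lower‑order terms; the \emph{ramps} near $n=0$ and $n=2L$, where $S_{m,n}$ interpolates between $0$ and $\Iedge[L]$, contribute a bounded amount per column; the $o(L)$ columns with $m$ within a slowly growing distance of $\pm L$ are negligible because there $\lvert S_{m,n}\rvert=O\bigl(\dist(m,\{\pm L\})\bigr)$; and the remaining bulk corrections are summed against $\zeta\supbulk$ and made $o(L^2)$ by a dominated‑convergence argument in $L$, which is also what makes $\theta$ a genuine null sequence. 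This proves~\eqref{eq:thm-m_L-equals-I_L}.

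The localization bound~\eqref{eq:thm-localization-I_L} follows directly from \nameref{prop:Bloch-theorem}: for $L\ge d\ge R+D$ one has $\Iedge[L]-\Iedged[L]{d}=\sum_{n=d}^{L-1}\tr(\rho_L J_{1,L}^{(0,n)}(b))$, and since $e_{1,(0,n)}$ sits at distance at least $n-(D-1)$ from the edge strip $\Z\times\{0,\dots,D-1\}$ while the hoppings defining $J_{1,L}^{(0,n)}$ reach at most $R$ further, each summand is bounded by $C\,\zeta\supbulk(n-R-D+1)$; summing this tail gives the claim. The main obstacle is the last step of the previous paragraph together with \nameref{prop:Bloch-theorem} itself: making the boundary and \emph{ramp} contributions genuinely of lower order rests on current conservation forcing the accumulated current to return to zero at the far boundary and to be column‑independent in the bulk, and on the $\ell^1$‑summability of $\zeta\supbulk$; carrying this out uniformly in $L$ with only \emph{local indistinguishability in the bulk}, which gives no direct control near $\partial\Lambda_L$, is where the real difficulty lies.
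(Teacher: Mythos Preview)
Your strategy is essentially the paper's: Duhamel gives $m_L=(2L+1)^{-2}\tr(\rho_L\,\partial_b H_L)$, one rewrites $\partial_b H_L$ as a height-weighted sum of dual-edge currents, and current conservation together with \nameref{prop:Bloch-theorem} collapses this onto $\Iedge[L]$. Your Abel summation in $n$ (converting $\sum_n n\,c_{m,n}$ to $-\sum_n S_{m,n}$ using $S_{m,2L}=0$) is a reorganisation of the paper's explicit five-region decomposition of~\eqref{eq:magnetization-as-sum-over-currents} into bulk/left/right/top/bottom; the paper also partially integrates in $n$ for the lateral strips, so the two arguments are interchangeable. Your treatment of the localisation bound~\eqref{eq:thm-localization-I_L} is identical to Proposition~\ref{prop:localization-I_L}.

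There is, however, one genuine gap. You claim $\partial_b H_L = \sum_{m,n} n\, J_{1,L}^{(m,n)} + \mathcal E_L$ with $\lVert\mathcal E_L\rVert = O(1)$, the error ``coming from the difference between true crossing heights and their nearest integers''. That difference is $O(1)$ \emph{per bond}; summed over $O(L^2)$ bonds the resulting hopping operator is the second quantisation of a banded one-body matrix of bounded spectral norm but rank $O(L^2)$, so its Fock-space norm is $O(L^2)$, not $O(1)$ (evaluate on the fully occupied state). Then $\tr(\rho_L\,\mathcal E_L)/(2L+1)^2$ would be $O(1)$, the same order as $m_L$ itself, and the argument would fail. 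What actually holds is $\mathcal E_L=0$ \emph{exactly}: for each bond $(x,y)$ the crossings of $\overline{xy}$ with the vertical dual lines pair up under the point symmetry about $(x+y)/2$, and since $x_2+y_2\in\Z$ the paired rounding errors cancel identically. This is precisely the coefficient comparison the paper carries out after~\eqref{eq:magnetic-derivative-H}. With $\mathcal E_L=0$ in hand, the rest of your outline goes through.
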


\begin{remark}
    \label{rem:decay-function-theta}
    The sequence \(\theta\) is given as a function of \(\zeta\supbulk\) by~\eqref{eq:theta-def}.
    If \(\zeta\supbulk(r) \leq C' \, (r+1)^{-(n+1)}\) with \(n>1\), then \(\theta(L) \leq C'' \, L^{-n/(n+2)}\) and~\eqref{eq:thm-localization-I_L} is bounded by \(C''' \, (d-R-D)^{-n}\).
    While~\eqref{eq:thm-localization-I_L} scales basically like~\(\zeta\supbulk\), the best possible decay in~\eqref{eq:thm-m_L-equals-I_L} is~\(\theta(L) \sim 1/L\), which results from the fact that the fraction of the total area occupied by the boundary scales like~\(1/L\) in two dimensions.
    However, unless both the magnetization and the edge current vanish, no better decay can be expected even in the non-interacting case, cf.~\cite{CMT2021}.
\end{remark}

Then, if we further assume local indistinguishability of the Gibbs state everywhere we can also analyze the thermodynamic limit of~\eqref{eq:thm-localization-I_L}.
First, notice that if the family of Hamiltonians \((H_L(b))_{L\in\N}\) satisfies local indistinguishability with \(\zeta\supedge\)-decay everywhere, then for all finite \(X\subset \Z^2_+\) and all observables \(A\in \alg_X\) the expectation values \(\tr\bigl(\rho_L(\beta,\mu,b)\, A\bigr)\), which are defined if \(\Lambda_L \supset X\), form a Cauchy sequence in \(L\), and we define \(\rho_\infty(\beta,\mu,b)(A)\) to be its limit.
Thus, there exists a unique thermodynamic limit state \(\rho_\infty(\beta,\mu,b)\) on \(\alg_\mathrm{loc}\), and we can define the edge current in this state by
\begin{equation}
    \Iedged{d}(\beta,\mu,b)
    :=
    \rho_\infty(\beta,\mu,b)\biggl(
        \sum_{n=0}^{d} J_{1}^{(0,n)}(b)
    \biggr).
\end{equation}

\belowpdfbookmark{Theorem~\ref{thm:limit-I}}{thm:limit-I}
\begin{maintheorem}
    \label{thm:limit-I}
    Let \(\zeta\supbulk\in \ell^1\) and \(\zeta\supedge\) tend to zero.
    Let \((H_L(b))_{L\in\N}\) be a family of Hamiltonians of the form~\eqref{eq:Hb} satisfying local indistinguishability of the Gibbs state at \((\beta,\mu,b)\) with \(\zeta\supbulk\)-decay in the bulk and \(\zeta\supedge\)-decay everywhere, in the sense of Definition~\ref{def:local-indistinguishability}.
    Then the thermodynamic limit
    \begin{equation}
        \label{eq:thm-limit-magnetization}
        m(\beta,\mu,b) := \lim_{L\to\infty} m_L(\beta,\mu,b)
    \end{equation}
    exists,
    and the total edge current
    \begin{equation} \label{eq:definition-infinite-edge-current}
        \Iedge(\beta,\mu,b)
        :=
        \lim_{d\to\infty} \, \Iedged[]{d}(\beta,\mu,b),
    \end{equation}
    satisfies
    \begin{equation*}
        m(\beta,\mu,b) = \Iedge(\beta,\mu,b).
    \end{equation*}
    The edge current is localized near the edge in the sense that there is \(C>0\) so that for all \(d \geq D + R\)
    \begin{equation*}
        \bigl\lvert
            \Iedged{d}(\beta,\mu,b) - \Iedge(\beta,\mu,b)
        \bigr\rvert
        \leq
        C \smashoperator[lr]{\sum_{n=d-R-D}^\infty} \zeta\supbulk(n).
    \end{equation*}
    Moreover, \(m(\beta,\mu,b)\) and \(\Iedge(\beta,\mu,b)\) are independent of the specific edge contributions~\(T_b\supedge\) and~\(\Phi\supedge\).
\end{maintheorem}
The precise statement on the independence from the edge Hamiltonian is given in Proposition~\ref{prop:different-edges}, where we also derive an explicit bound on the difference of the finite-volume edge currents, see~\eqref{eq:boundary_independence}.

Note, that local indistinguishability everywhere with \(\zeta\supedge\)-decay implies local indistinguishability in the bulk with decay at most \(\zeta\supbulk \leq \zeta\supedge\).
But we assume the decay in the bulk separately in order to take into account the possibility that the localization properties of the edge current might be better than its speed of convergence in the thermodynamic limit, cf.\ Proposition~\ref{prop:I_infinity-Bloch-theorem-localization-convergence}, which is the situation in the non-interacting setting~\cite{CMT2021,MS2022}.

\begin{remark}\label{rem:theorem-bulk-edge-correspondence}
    We present Theorems~\ref{thm:main} and~\ref{thm:limit-I} with an edge interaction only at the lower boundary for simplicity.
    The proofs allow for a more general setting where an interaction can be added on all four sides.
    Moreover, then the magnetization \(m(\beta,\mu,b)\) and the edge current \(\Iedge[](\beta,\mu,b)\) are independent of these boundary perturbations.
    More precisely, we have the following:
    For each \(L\in \N\) let \(\Phi\supboundary_L\) be a finite-range interaction supported on \(Q_L := \Lambda_L \setminus [-L+D,L-D]\times[0,2L-D]\), a strip of width \(D\) around the remaining boundaries of \(\Lambda_L\), commuting with all local number operators \(\mathcal{N}_{\{z\}}\) and with interaction terms bounded uniformly in \(L\), and let \(T_b\supboundary\) be a hopping as in~\eqref{eq:definition-Tb} with \(T\supboundary\) supported on \(Q_L\).
    Let \((H_L(b))_{L\in\N}\) be a family of Hamiltonians of the form~\eqref{eq:Hb} and let
    \begin{equation*}
        \tilde H_L(b)
        :=
        H_L(b) + \sum_{x,y \in \Lambda_L} a^*_x \, T\supboundary_b(x, y) \, a^{}_y + \sum_{X\subset \Lambda_L} \Phi\supboundary_L(X)
        .
    \end{equation*}
    Then the respective statements of Theorem~\ref{thm:main} and~\ref{thm:limit-I} also hold for the family of Hamiltonians \((\tilde H_L(b))_{L\in\N}\).
    Moreover, if both \(\tilde H_L(b)\) and \(H_L(b)\) satisfy local indistinguishability, then \(\tilde m(\beta,\mu,b) = m(\beta,\mu,b)\) and \(\tIedge(\beta,\mu,b) = \Iedge(\beta,\mu,b)\) are independent of the boundary terms.
\end{remark}

Supposing that the assumptions of Theorem~\ref{thm:limit-I} hold uniformly in some open interval of chemical potentials around \(\mu\), we obtain differentiability of \(m(\beta,\mu,b)\) with respect to~\(\mu\) and thus, as explained in the introduction, a further step in the direction of proving the equality of transport coefficients.

\belowpdfbookmark{Theorem~\ref{thm:muderivative}}{thm:muderivative}
\begin{maintheorem}
    \label{thm:muderivative}
    Let \(n\mapsto n^2\,\zeta\supedge(n)\in \ell^1\).
    Let \((H_L(b))_{L\in\N}\) be a family of Hamiltonians of the form~\eqref{eq:Hb} satisfying local indistinguishability of the Gibbs state everywhere with \(\zeta\supedge\)-decay at \((\beta,\mu,b)\), for \(\mu\) in some open interval.
    Then \(m(\beta,\mu,b)\) and \(\Iedge(\beta,\mu,b)\) defined in Theorem~\ref{thm:limit-I} are differentiable, and thus
    \begin{equation*}
        \dmu\, m(\beta,\mu,b)
        =
        \dmu\, \Iedge(\beta,\mu,b)
        .
    \end{equation*}
\end{maintheorem}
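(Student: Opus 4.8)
The plan is to reduce everything to finite volumes and then pass to the limit. Since Theorem~\ref{thm:limit-I} already gives \(m(\beta,\mu,b)=\Iedge(\beta,\mu,b)\) for every \(\mu\) in the open interval under consideration, it suffices to prove that \(m(\beta,\cdot,b)\) is differentiable; the equality of the two \(\mu\)-derivatives then follows from this identity of functions, and in particular \(\Iedge(\beta,\cdot,b)\) is differentiable as well. For fixed \(L\) the function \(m_L(\beta,\cdot,b)=\partial_b p_L(\beta,\cdot,b)\) is real-analytic in \(\mu\) (the system is finite-dimensional and \(\mathcal Z_L\) is entire and strictly positive), so by the standard criterion for differentiating under a limit it is enough to show that \(\partial_\mu m_L(\beta,\mu,b)\) converges as \(L\to\infty\), uniformly for \(\mu\) in compact subintervals.

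First I would differentiate the identity \(m_L=(2L+1)^{-2}\tr\bigl(\rho_L(\beta,\mu,b)\,\partial_b H_L(b)\bigr)\). Because \(H_L(b)\) commutes with \(\mathcal N_L\), the \(\mu\)-derivative of the Gibbs state is the \emph{differentiated Gibbs functional} \(\tau_L:=\partial_\mu\rho_L=\beta\bigl(\mathcal N_L-\tr(\rho_L\mathcal N_L)\bigr)\rho_L\), which is traceless, self-adjoint and \emph{stationary} for \(H_L(b)\), i.e.\ \(\tr\bigl(\tau_L[H_L(b),A]\bigr)=0\) for all \(A\) (differentiate in \(\mu\) the analogous identity for \(\rho_L\)). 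Splitting the Peierls derivative as \(\partial_b H_L(b)=\tfrac{\I}{2}[G_L,H_L(b)]+A_{2,L}\), with \(G_L:=\sum_z z_1z_2\,\mathcal N_{\{z\}}\) (here I use the hypothesis that the interactions commute with every \(\mathcal N_{\{z\}}\)) and \(A_{2,L}:=\tfrac{\I}{2}\sum_{x,y\in\Lambda_L}(x_1y_2-x_2y_1)\,a_x^*T_b(x,y)a_y\) the discrete orbital–magnetization operator, and using \(\tr\bigl(\tau_L[G_L,H_L(b)]\bigr)=\partial_\mu\tr\bigl(\rho_L[G_L,H_L(b)]\bigr)=0\) by stationarity of \(\rho_L\), I obtain
\[
  \partial_\mu m_L(\beta,\mu,b)=(2L+1)^{-2}\,\tr\bigl(\tau_L\,A_{2,L}\bigr).
\]
This is exactly the object whose analysis (via current conservation and Bloch's theorem) underlies the proof of Theorem~\ref{thm:main}, but with \(\tau_L\) in place of \(\rho_L\). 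The idea is therefore that the arguments behind Theorems~\ref{thm:main} and~\ref{thm:limit-I} carry over to \(\tau_L\): they use only stationarity, current conservation, homogeneity of the bulk Hamiltonian and local indistinguishability with summable decay, and one checks that none of the steps relies on positivity of the state. Positivity enters only through the a priori bound \(|\rho_L(A)|\le\lVert A\rVert\), and for \(\tau_L\) this is replaced by the uniform-in-\(L\) bound \(\bigl|\tr(\tau_LA)\bigr|=\beta\bigl|\sum_z\rho_L(\mathcal N_{\{z\}};A)\bigr|\le C\lVert A\rVert\sum_k k\,\zeta\supedge(k)\) for observables \(A\) of bounded support, which is finite by the summability hypothesis; here \(\rho_L(B;C):=\tr(\rho_LBC)-\tr(\rho_LB)\tr(\rho_LC)\), and decay of this connected correlation is Lemma~\ref{lem:decay-of-correlations}.

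The technical heart, and what I expect to be the main obstacle, is to show that \(\tau_L\) inherits \emph{local indistinguishability everywhere} (and likewise decay of correlations) with an \(\ell^1\)-summable decay function. Starting from \(\tr(\tau_LA)-\tr(\tau_{\Lambda'}A)=\beta\bigl(\sum_{z\in\Lambda_L}\rho_L(\mathcal N_{\{z\}};A)-\sum_{z\in\Lambda'}\rho_{\Lambda'}(\mathcal N_{\{z\}};A)\bigr)\) for \(A\in\alg_X\), \(X\subset\Lambda'\subset\Lambda_L\) (with \(\tau_{\Lambda'}:=\partial_\mu\rho[H_L(b)|_{\Lambda'}]\)), I would split the \(z\)-sum at distance \(\tfrac12\dist(X,\Lambda_L\setminus\Lambda')\) from \(X\): for the far sites the whole difference is controlled by decay of correlations, while for the finitely many near sites one expands into the three expectations \(\rho(\mathcal N_{\{z\}}A)\), \(\rho(\mathcal N_{\{z\}})\), \(\rho(A)\) and applies the assumed local indistinguishability of \(\rho_L\) to each — noting \(\mathcal N_{\{z\}}A\in\alg_{\{z\}\cup X}\). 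The bookkeeping yields a local-indistinguishability estimate for \(\tau_L\) with a decay function \(\tilde\zeta\) satisfying \(\tilde\zeta(\ell)\to0\) and \(\sum_{\ell}\tilde\zeta(\ell)\lesssim\sum_n n^2\,\zeta\supedge(n)<\infty\); the two extra powers of \(n\) come from the two-dimensional lattice sum over the insertion site \(z\) out to the distance of \(X\) from the boundary, together with the requirement that the resulting decay still be summable, and this is precisely why the hypothesis \(n^2\zeta\supedge(n)\in\ell^1\) is needed. The same computation applied to the three-point connected function \(\sum_z\rho_L(\mathcal N_{\{z\}};A;B)\) gives \(\tilde\zeta\)-decay of correlations for \(\tau_L\).

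With these properties in hand I would run the arguments of Theorems~\ref{thm:main} and~\ref{thm:limit-I} at finite volume with \(\tau_L\) in place of \(\rho_L\), obtaining: (i) \((2L+1)^{-2}\tr(\tau_LA_{2,L})=\sum_{n=0}^{L-1}\tr\bigl(\tau_LJ_{1,L}^{(0,n)}(b)\bigr)+O\bigl(\tilde\theta(L)\bigr)\) for a null sequence \(\tilde\theta\); (ii) by Bloch's theorem (Proposition~\ref{prop:Bloch-theorem}) and the localization estimate, \(\bigl|\sum_{n=0}^{L-1}\tr(\tau_LJ_{1,L}^{(0,n)}(b))-\sum_{n=0}^{d}\tr(\tau_LJ_{1,L}^{(0,n)}(b))\bigr|\le C\sum_{k\ge d-R-D}\tilde\zeta(k)\) for all \(d\le L\); and (iii) for each fixed \(d\), the finite sum \(\sum_{n=0}^{d}\tr(\tau_LJ_{1,L}^{(0,n)}(b))\) converges as \(L\to\infty\), since the local indistinguishability of \(\tau_L\) makes \(\bigl(\tr(\tau_LA)\bigr)_L\) a Cauchy sequence for every fixed local \(A\); moreover, comparing with the convergence \(\tr(\rho_LA)\to\rho_\infty(\beta,\mu,b)(A)\) and using that these \(\mu\)-derivatives converge uniformly on compacts identifies the limit with \(\partial_\mu\bigl[\rho_\infty(\beta,\mu,b)\bigl(J_1^{(0,n)}(b)\bigr)\bigr]\). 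All bounds in (i)–(iii) are uniform for \(\mu\) in compact subintervals, so combining them shows that \(\bigl(\partial_\mu m_L(\beta,\mu,b)\bigr)_L\) is uniformly Cauchy there; hence it converges locally uniformly, \(m(\beta,\cdot,b)\) is differentiable, and, via \(m=\Iedge\) from Theorem~\ref{thm:limit-I}, so is \(\Iedge(\beta,\cdot,b)\), with \(\partial_\mu m(\beta,\mu,b)=\partial_\mu\Iedge(\beta,\mu,b)=\sum_{n\ge0}\partial_\mu\bigl[\rho_\infty(\beta,\mu,b)\bigl(J_1^{(0,n)}(b)\bigr)\bigr]\).
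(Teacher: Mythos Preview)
Your overall strategy is the paper's: introduce \(\tau_L=\partial_\mu\rho_L\) (the paper writes \(\mathcal{F}_L\)), note that it is stationary, establish a Bloch-type decay of bulk currents in \(\tau_L\), and rerun the machinery behind Theorems~\ref{thm:main} and~\ref{thm:limit-I}. The paper chooses to prove differentiability of \(\Iedge\) (Proposition~\ref{prop:differentiability_I}) rather than of \(m\), but given \(m=\Iedge\) that is only cosmetic. Your \(G_L\)/\(A_{2,L}\) splitting is correct but unnecessary: the paper simply differentiates~\eqref{eq:magnetization-as-sum-over-currents} term by term.

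There is, however, a genuine gap exactly at the step you flag as ``the technical heart''. To run Bloch's theorem for \(\tau_L\) you need to compare \(\tau_L\) with \(\tau_{\Lambda'}:=\partial_\mu\rho[H_L|_{\Lambda'}]\) for \(\Lambda'\) a \emph{ball} \(B^{(0,n)}(\ell)\), not just for \(\Lambda'=\Lambda_{L'}\). In your sketch, the ``far sites'' contribution from \(\Lambda'\) is \(\sum_{z\in\Lambda',\,z\text{ far}}\rho_{\Lambda'}(\mathcal N_{\{z\}};A)\), and bounding this requires decay of correlations in the \emph{restricted} Gibbs state \(\rho_{\Lambda'}=\rho[H_L|_{\Lambda'}]\). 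But the hypothesis is local indistinguishability of the family \((H_L)_L\); via Lemma~\ref{lem:decay-of-correlations} this yields decay of correlations only in the full states \(\rho_L\) (and hence in \(\rho_{L'}\), which is why Lemma~\ref{lem:convergence-F-L} works), not in \(\rho[H_L|_{\Lambda'}]\) for an arbitrary \(\Lambda'\subset\Lambda_L\). Replacing the covariance by three expectations and invoking local indistinguishability for \(\mathcal N_{\{z\}}A\in\alg_{\{z\}\cup X}\) does not help either: when \(z\) is near \(\partial\Lambda'\), \(\dist(\{z\}\cup X,\Lambda_L\setminus\Lambda')\) is small, and summing over such \(z\) gives a volume factor.

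The paper identifies precisely this obstruction (discussion preceding~\eqref{eq:definition-F-Z-L}) and circumvents it by comparing \(\mathcal F_L\) not with \(\partial_\mu\rho_{\Lambda'}\) but with the \emph{truncated} object
\[
  \mathcal F^Z\bigl[H_L\big|_{\Lambda'}\bigr]
  :=
  \beta\bigl(\mathcal N_Z-\langle\mathcal N_Z\rangle_{\rho[H_L|_{\Lambda'}]}\bigr)\,\rho\bigl[H_L\big|_{\Lambda'}\bigr],
\]
where \(Z\) is a fixed ball around the observation point. Then the far-site sum appears only for \(\rho_L\), where decay of correlations is available, while \(\mathcal F^Z[H_L|_{\Lambda'}]\) is still covariant under magnetic translations, so the Bloch argument goes through (Lemma~\ref{lem:local-indistinguishability-F} and Proposition~\ref{prop:vanishing-bulk-current-derivative}). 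Inserting this modification into your outline makes the proof complete.
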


\begin{remark}
    As we already know from Theorem~\ref{thm:limit-I} that \(m(\beta,\mu,b)=\Iedge(\beta,\mu,b)\), Theorem~\ref{thm:muderivative} will follow from differentiability of \(\Iedge(\beta,\mu,b)\), see Proposition~\ref{prop:differentiability_I}.
    Additionally, we prove a quantitative bound for the difference of the two quantities in finite volume in Propositions~\ref{prop:dmu-m_L-equals-dmu-I_L}, and localization near the boundary in Proposition~\ref{prop:differentiability_I}.
\end{remark}

An important ingredient in the proof of Theorem~\ref{thm:main} is the vanishing of the equilibrium current in the bulk.
This result is known in the literature as Bloch’s theorem (see e.g.~\cite{Watanabe2019,BF2021} and references therein) and has an importance on its own.
In our setting with open boundary conditions, it is just a consequence of current conservation coupled with the local indistinguishability of the Gibbs state.
This allows for better decay rates than in the setting with periodic boundary conditions.

\begin{restatable}[Bloch’s Theorem]{proposition}{blochtheorem}
    \label{prop:Bloch-theorem}
    Let \(\zeta\supbulk\) be a null sequence and \((H_L(b))_{L\in\N}\) be a family of Hamiltonian of the form~\eqref{eq:Hb}.
    There exists \(\Cb>0\) such that the following holds:
    If \((H_L(b))_{L\in\N}\) satisfies local indistinguishability of the Gibbs state in the bulk at \((\beta,\mu,b)\) with \(\zeta\supbulk\)-decay in the sense of Definition~\ref{def:local-indistinguishability}, then
    \begin{equation}\label{eq:prop-Bloch-theorem}
        \bigl\lvert \tr \bigl( \rho_L(\beta,\mu,b) \, J_{k,L}^{z}(b) \bigr) \bigr\rvert
        \leq
        \Cb \, \zeta\supbulk\Bigl(\bigl[\dist(z,\Z^2\setminus\Lambda_L)-D-R\bigr]_+\Bigr).
    \end{equation}
\end{restatable}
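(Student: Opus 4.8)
The plan is the following. It is enough to bound the horizontal currents $J^z_{1,L}(b)$; the estimate for $J^z_{2,L}(b)$ follows by the identical argument with the two lattice directions interchanged. Since $J^z_{1,L}(b)$ is a sum of $\mathcal O(R^2)$ hopping terms $a_x^*T_b(x,y)a_y$, each supported on two sites and of operator norm at most the uniform hopping bound $C$, I would apply every instance of local indistinguishability below to such two‑site terms separately and sum the $\mathcal O(R^2)$ resulting errors; this keeps the normalisation $g(|X|)\le 1$ harmless and gives $\lVert J^z_{1,L}(b)\rVert\le C'$ uniformly in $L$ and $b$. If $\dist(z,\Z^2\setminus\Lambda_L)\le D+R$ the claimed bound reads $|\tr(\rho_L(\beta,\mu,b)\,J^z_{1,L}(b))|\le\Cb\,\zeta\supbulk(0)$, which holds once $\Cb\ge C'/\zeta\supbulk(0)$; so from now on let $r:=\dist(z,\Z^2\setminus\Lambda_L)\ge D+R$, set $\rho:=r-D-R$, and observe that $z$ then lies strictly above the edge strip $\Z\times\{0,\dots,D-1\}$.

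The first ingredient is current conservation. Because $\rho_L(\beta,\mu,b)$ commutes with $H_L(b)-\mu\,\mathcal N_L$ and $\mathcal N_A$ commutes with $\mathcal N_L$ for any $A\subset\Lambda_L$, the expectation of $\I\,[\mathcal N_A,H_L(b)]$ vanishes; and by the way~\eqref{eq:current-through-dual} is arranged, $\I\,[\mathcal N_A,H_L(b)]$ is precisely the signed sum of the dual‑edge current operators $J^{z'}_{k,L}(b)$ crossing the boundary curve of $A$. Taking $A$ to be one side of a lattice cut, this says that the expected current through any cut of $\Lambda_L$ vanishes; applied at single sites it says the expected current field $z'\mapsto\tr(\rho_L J^{z'}_{k,L}(b))$ is lattice‑divergence‑free. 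Since $\Lambda_L$ is simply connected, the field is therefore the discrete curl of a stream function $\phi$ on the dual faces of $\Lambda_L$, which one normalises to vanish on the faces adjacent to $\Z^2\setminus\Lambda_L$; the quantity to be estimated is the difference of the two values of $\phi$ on either side of the dual edge $\edge_{1,z}$.

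The second ingredient is homogeneity. For a translation vector $\vec s$ such that both a given dual edge and its $\vec s$‑translate carry a square neighbourhood of radius $\rho'$ that lies in $\Lambda_L$ and is disjoint from the edge strip, one may: localise each of the two currents to its own $\rho'$‑neighbourhood by local indistinguishability in the bulk; observe that on such neighbourhoods the restricted Hamiltonian is the bulk part only, which is covariant under magnetic translations, so that conjugating by the magnetic translation by $\vec s$ identifies the two localised traces exactly; and thus conclude that the two currents agree up to $\mathcal O(\zeta\supbulk(\rho'))$. In particular the discrete gradient of $\phi$ is, at distance $\gtrsim\rho'$ from the edge strip and from the three remaining (bare) sides of $\Lambda_L$, almost invariant under translations — i.e.\ $\phi$ is almost affine there — and the current at $z$ itself differs from ``the bulk value at the height of $z$'' by at most $\mathcal O(\zeta\supbulk(\rho))$. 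Since $\phi$ vanishes on the entire boundary frame of $\Lambda_L$ (which is not contained in a line) while its non‑affine part is confined to neighbourhoods of the four sides, consistency of these facts should pin the affine part — hence the current at $z$ — down to $\mathcal O(\zeta\supbulk(\rho))$.

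The step I expect to be the real obstacle is this last, quantitative, one. Carried out naively it compares the current at $z$ with the $\sim\rho$ currents along a cut and thereby pays $\sum_{k\le\rho}\zeta\supbulk(k)$, which cannot be controlled since only $\zeta\supbulk(n)\to 0$ is assumed. One must instead exhibit the current at $z$ as an exactly vanishing total current $\I\,[\mathcal N_A,H_L(b)]$ — produced by a suitable cut of $\Lambda_L$ itself — plus a \emph{bounded} number of correction operators, each supported in a box of radius of order $\rho$ and hence of expectation $\mathcal O(\zeta\supbulk(\rho))$ by a single application of local indistinguishability. Making this decomposition work uniformly in the position of $z$ relative to the four sides — where the facts that the edge strip borders only one side and that $\Lambda_L$ is a square that is large compared to $D+R$ are exploited — is the technical heart of the argument; the bound for $J^z_{2,L}(b)$ follows word for word after exchanging the coordinate directions.
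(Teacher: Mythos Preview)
You have the right ingredients—current conservation along a cut and approximate translation invariance via local indistinguishability—and you correctly diagnose the obstacle: a naive comparison along a cut of length $\sim L$ accumulates $\sum_k\zeta\supbulk(k)$, which is uncontrolled when only $\zeta\supbulk(n)\to 0$ is assumed. But your proposed resolution is not a proof: you assert that the current at $z$ can be written as a vanishing total plus a \emph{bounded} number of $\mathcal O(\zeta\supbulk(\rho))$ corrections, without exhibiting such a decomposition. I do not see one, and your remark that local indistinguishability alone makes a localised operator's expectation $\mathcal O(\zeta\supbulk(\rho))$ is not right either—local indistinguishability only says the expectation is \emph{close to} the localised expectation, not that it is small.

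The paper's mechanism is different and is the idea you are missing. Take the vertical cut at $x_1=0$; current conservation gives $\sum_{n=0}^{2L} j^{(0,n)}=0$ with $j^{(0,n)}:=\tr(\rho_L J_{1,L}^{(0,n)})$. Split off the $2d$ terms with $n<d$ or $n>2L-d$ (total $\le 2dC_J$), and in the remaining $2(L-d)+1$ bulk terms replace $\rho_L$ by $\rho[H_L|_{B^{(0,n)}(\ell)}]$ at cost $C_J\,\zeta\supbulk(\ell-R)$ each. By magnetic translation invariance those localised expectations are all \emph{exactly equal} to a single number $j^{\mathrm{loc}}:=\tr\bigl(\rho[H_L|_{B^{(0,L)}(\ell)}]\,J_1^{(0,L)}\bigr)$, so
\[
\bigl|j^{\mathrm{loc}}\bigr|\ \le\ \frac{C_J\,d}{L-d}+C_J\,\zeta\supbulk(\ell-R).
\]
Now the crucial observation: $j^{\mathrm{loc}}$ depends only on $\ell$, not on $L$—the restricted Hamiltonian on a ball of radius $\ell$ sitting entirely in the bulk is the pure bulk Hamiltonian on that ball, independent of which $\Lambda_L$ it sits in. Hence you may take the infimum over $L$ on the right while the left stays fixed, killing the $d/(L-d)$ term and leaving $|j^{\mathrm{loc}}|\le C_J\,\zeta\supbulk(\ell-R)$. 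One more application of local indistinguishability then gives $|j^z|\le 2C_J\,\zeta\supbulk(\ell-R)$ for any $z$ with $\dist(z,\Z^2\setminus\Lambda_L)>\ell+D$, and one chooses $\ell=\dist(z,\Z^2\setminus\Lambda_L)-D$.

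So the missing idea is not a clever cut of the fixed box $\Lambda_L$, but the fact that the \emph{localised} current is an $L$-independent quantity, which lets the averaging inequality be evaluated in the limit $L\to\infty$. Your stream-function reformulation is correct but does not supply this leverage; without it the $d/L$ term survives with $d\sim\rho$, and $\rho/L$ need not be small.
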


We have established the equality of the edge current and the magnetization and proven that the edge current is an edge quantity in the sense, that it is localized near the edge.
Finally, we argue that the magnetization is a bulk quantity by showing that it can be obtained directly in the infinite volume without any edge.
Denote by \(B_L := [-L,L]^2 \cap \Z^2\) the boxes centered on the origin and let
\begin{equation}
    \label{eq:Hbulk}
    H\supbulk_L(b)
    =
    \sumstack[lr]{x,y \in B_L} a^*_x \, T\supbulk_b(x, y) \, a^{}_y
    + \sumstack[lr]{X\subset B_L} \Phi\supbulk(X)
\end{equation}
be the bulk Hamiltonian and
\begin{equation}
    \label{eq:dynamics-Hbulk}
    \tau_t\supbulk(A)
    =
    \lim_{L\to \infty}
    \E^{\I t (H\supbulk_L(b) - \mu \mathcal{N}_{B_L})}
    \, A
    \, \E^{-\I t (H\supbulk_L(b)- \mu \mathcal{N}_{B_L})}
\end{equation}
be the infinite-volume dynamics generated by the bulk Hamiltonian (adjusted by the chemical potential).
Following the arguments from~\cite{AM2003} we will first show that the pressure of the infinite volume limits of the edge and bulk system agree.
Under a somewhat stronger hypothesis and using ideas similar to Theorem~\ref{thm:muderivative}, the magnetization \(m(\beta,\mu,b)\) can then also be obtained directly in the infinite-volume system.

To this end, note that the pressure for general states \(\rho\) on \(\alg_{B_L}\) is defined as
\begin{equation}\label{eq:definition-pressure-state}
    P(\rho)
    :=
    \tr\Bigl(\rho \, \bigl(H\supbulk_L(b) - \mu \, \mathcal{N}_{B_L}\bigr)\Bigr) - \beta^{-1} \, S(\rho)
    \quadtext{with}
    S(\rho) := -\tr( \rho \ln \rho )
    ,
\end{equation}
which agrees with \(-\beta^{-1} \, \ln \mathcal{Z}_{B_L}\bigl[H\supbulk_L(b)\bigr](\beta,\mu)\) for the Gibbs state of \(H\supbulk_L(b)\), compare~\eqref{eq:definition-pressure-finite-volume-partition-function}.
The following theorem states that the pressure per unit volume, \(\lim_{L\to \infty}P(\omega|_{B_L})/(2L+1)^2\), where \(\omega|_{B_L}\) is the restriction of a bulk equilibrium state \(\omega\) to \(\alg_{B_L}\), equals the thermodynamic limit of the pressure in the system with an edge.

\belowpdfbookmark{Theorem~\ref{thm:limit-p}}{thm:limit-p}
\begin{maintheorem}
    \label{thm:limit-p}
    Let \((H_L(b))_{L\in\N}\) be a family of Hamiltonians of the form~\eqref{eq:Hb}.
    For any \(\beta>0\), \(\mu\), \(b\in \R\) the thermodynamic limit
    \begin{equation*}
        p(\beta,\mu,b) := \lim_{L\to\infty} p_L(\beta,\mu,b)
    \end{equation*}
    of the pressure exists and is independent of the boundary terms.
    Moreover, for any \((\tau\supbulk,\beta)\)-KMS state \(\omega\) the pressure per volume of \(\omega\)
    equals \(p(\beta,\mu,b)\),
    \begin{equation*}
        p(\beta,\mu,b)
        =
        \lim_{L\to \infty} \frac{P\bigl(\omega|_{B_L}\bigr)}{(2L+1)^2}.
    \end{equation*}

    Additionally, given \(\zeta\supbulk\in \ell^1\) and \(\zeta\supedge\) tending to zero, assume that
    \(\bigl(H_L(b)\bigr)_{L\in \N}\) satisfies local indistinguishability in the bulk with \(\zeta\supbulk\)-decay and everywhere with \(\zeta\supedge\)-decay at \((\beta,\mu,b)\), for \(b\) in some open interval.
    Then \(b\mapsto p(\beta,\mu,b)\) is differentiable and its derivative agrees with the magnetization \(m(\beta,\mu,b)\) defined in~\eqref{eq:thm-limit-magnetization},
    \begin{equation*}
        \partial_b \, p(\beta,\mu,b) = m(\beta,\mu,b)
        .
    \end{equation*}
\end{maintheorem}

\section{Proofs}

\subsection{Bloch's theorem}

This section is devoted to the proof of Bloch's theorem, namely Proposition~\ref{prop:Bloch-theorem}.
The proof is based on the bulk homogeneity of the system and on the continuity equation for the current provided by Lemma~\ref{lem:continuity-equation}, together with the local indistinguishability assumption for the Gibbs state.
The homogeneity in the bulk of the system is encoded in the invariance under magnetic translation, which we briefly recall in the next section.

\subsubsection{Magnetic translations}
\label{sec:magnetic-translations}

On the one-particle Hilbert space \(\mathfrak{h}=\ell^2(\Z^2)\) of the full lattice \(\Z^2\), for \(y\in \Z^2\) and \(b>0\) the magnetic translation \(U_y(b)\) is defined by its action on \(\psi \in \mathfrak{h}\) as
\begin{equation*}
    \bigl( U_y(b) \, \psi \bigr)(x)
    =
    \E^{\I b y_2 x_1} \, \psi(x-y)
    \quadtext{with adjoint}
    \bigl( U_y^*(b) \, \psi \bigr)(x)
    =
    \E^{-\I b y_2 (x_1+y_1)} \, \psi(x+y)
    .
\end{equation*}
Then, in second quantization, for which we use the same symbol, we obtain
\begin{equation*}
    U_y^*(b) \, a_x^* \, U_y^{}(b)
    =
    \E^{-\I y_2 b x_1} \, a_{x-y}^*.
\end{equation*}

With this definition, the kinetic part of the bulk Hamiltonian~\eqref{eq:Hb} is invariant under magnetic translations since
\begin{align*}
    U_z^*(b) \, a^*_x \, T\supbulk_b(x,y) \, a^{}_y \, U_z^{}(b)
    &=
    U_z^*(b) \, a_x^* \, U_z^{}(b) \, \E^{ \I b \frac{x_2+y_2 \xmathstrut{0.2}}{2} (x_1-y_1)} \, T\supbulk(x,y) \, U_z^*(b)\, a^{}_y \, U_z^{}(b)
    \\&=
    a^*_{x-z} \, \E^{ \I b \frac{x_2+y_2 \xmathstrut{0.2}}{2} \, (x_1-y_1) - \I b \frac{2z_2x_1 \xmathstrut{0.2}}{2} + \I b \frac{2z_2y_1 \xmathstrut{0.2}}{2}} T\supbulk(x-z,y-z) \, a^{}_{y-z}
    \\&=
    a^*_{x-z} \, T\supbulk_b(x-z,y-z) \, a^{}_{y-z}.
\end{align*}
Moreover, we assume that the bulk interaction \(\Phi\supbulk\) is invariant under magnetic translations, namely
\begin{equation}
    \label{eq:invariance-under-magnetic-translations}
    U_z^*(b) \, \Phi\supbulk(X) \, U_z(b) = \Phi\supbulk(X-z).
\end{equation}
Hence, the complete bulk part of the Hamiltonian~\eqref{eq:Hb} is invariant under magnetic translations, and
\begin{equation*}
    U_z^*(b) \, H_L(b)\big|_{X} \, U_z(b) = H_L(b)\big|_{X-z}
\end{equation*}
for \(X\subset \Lambda_L\) and \(z\in \Z^2\) such that \(\dist(X,\Z^2\setminus\Lambda_L)\) and \(\dist(X-z,\Z^2\setminus\Lambda_L)>D\).
This property carries over to the local Gibbs state in the sense, that
\begin{equation*}
    U_z^*(b) \, \rho\bigl[H_L(b)\big|_{X}\bigr](\beta,\mu) \, U_z(b)
    = \rho\bigl[H_L(b)\big|_{X-z}\bigr](\beta,\mu).
\end{equation*}

Moreover, the above calculation also shows that \(U_{z}^*(b) \, J_{k}^{y}(b) \, U_{z}(b) = J_{k}^{y-z}(b)\) for all \(y_2,z_2 \geq R+D\), due to the simple structure of \(J_{k}^{y}(b)\).
Then, it follows that
\begin{equation}
    \label{eq:translation-invariance-local-current-expectation}
    \tr \bigl(\rho\bigl[H_L(b)\big|_{X}\bigr](\beta,\mu) \, J_{1}^{y}(b)\bigr)
    =
    \tr \bigl(
        \rho\bigl[H_L(b)\big|_{X-z}\bigr](\beta,\mu)
        \, J_{1}^{y-z}(b)
    \bigr)
\end{equation}
if \(X\), \(y\) and \(z\) fulfill all mentioned conditions.

\subsubsection{The continuity equation}
\label{sec:continuity-equation}

We prove the continuity equation for the local currents defined in~\eqref{eq:current-through-dual} and the resulting current conservation for stationary states.
These two facts play a key role in the proof of Bloch's theorem.

For this purpose, let us define the dual edge boundary of a set \(Z\subset \Lambda_L\) as
\begin{equation*}
    \partial_{\Lambda_L} Z
    :=
    \biggl\{\,
        (k,z) \in \{1,2\} \times \biggl(
            \begin{gathered}
                \{-L,\dotsc,L-1\}
                \\
                \LTadd{&}\times\{0,\dotsc,2L-1\}
            \end{gathered}
        \biggr)
    \,\bigg|\,
        \begin{gathered}
            z \in Z \text{ and } z+\unitvec{k} \in \Lambda_L \setminus Z
            \quad\text{or}
            \\
            z \in \Lambda_L \setminus Z \text{ and } z+\unitvec{k} \in Z\phantom{\quad\text{or}}
        \end{gathered}
    \,\biggr\}.
\end{equation*}
This is exactly the set of labels \((k,z)\), such that the union of the dual edges \(\overline{\edge_{k,z}}\) is the boundary of the set \(\bigcup_{z\in Z} z+ [-1/2,1/2]^2\) in \(\bigcup_{z\in \Lambda_L} z+ [-1/2,1/2]^2\).

\begin{lemma}[Continuity equation]
    \label{lem:continuity-equation}
    For any \(z\in \Lambda_L\), the currents defined in~\eqref{eq:current-through-dual} satisfy the continuity equation
    \begin{equation} \label{eq:lem-continuity-equation}
        \begin{aligned}
            \frac{\D}{\D t} \,
            \E^{\I H_L(b) \, t} \, \mathcal{N}_{\{z\}} \, \E^{-\I H_L(b) \, t}
            \Bigr|_{t=0}
            & =
            \operatorname{div}_z J_{k,L}^z(b)
            \\& :=
            J_{1,L}^z(b) - J_{1,L}^{z-\unitvec{1}}(b)
            + J_{2,L}^z(b) - J_{2,L}^{z-\unitvec{2}}(b) .
        \end{aligned}
    \end{equation}
\end{lemma}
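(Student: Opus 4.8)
The plan is to verify the operator identity~\eqref{eq:lem-continuity-equation} term by term. Both sides are finite linear combinations of the hopping operators $a^*_x\,T_b(x,y)\,a^{}_y$ (the Hamiltonian has finite range and $\Lambda_L$ is finite), so after fixing $z$ it suffices to compare the scalar coefficient of each such operator on the two sides.

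For the left-hand side I would use $\frac{\D}{\D t}\,\E^{\I H_L(b)t}\,\mathcal{N}_{\{z\}}\,\E^{-\I H_L(b)t}\big|_{t=0}=\I\,[H_L(b),\mathcal{N}_{\{z\}}]$ and observe that the interaction part $\sum_{X\subset\Lambda_L}\Phi(X)$ of~\eqref{eq:Hb} commutes with every local number operator $\mathcal{N}_{\{z\}}$ by assumption, so only the kinetic term survives. A one-line computation with the canonical anticommutation relations — $\mathcal{N}_{\{z\}}$ generates the $U(1)$ gauge rotation at the single site $z$, so conjugation by $\E^{\I\theta\mathcal{N}_{\{z\}}}$ multiplies $a^*_x\,T_b(x,y)\,a^{}_y$ by $\E^{\I\theta(\delta_{xz}-\delta_{yz})}$ — gives
\begin{equation*}
    \bigl[a^*_x\,T_b(x,y)\,a^{}_y,\;\mathcal{N}_{\{z\}}\bigr]
    =(\delta_{yz}-\delta_{xz})\,a^*_x\,T_b(x,y)\,a^{}_y,
\end{equation*}
so the coefficient of $a^*_x\,T_b(x,y)\,a^{}_y$ on the left-hand side of~\eqref{eq:lem-continuity-equation} is $\I\,(\delta_{yz}-\delta_{xz})$.

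For the right-hand side I would unfold the definition~\eqref{eq:current-through-dual} of the four local currents $J^{z}_{1,L}$, $J^{z-\unitvec{1}}_{1,L}$, $J^{z}_{2,L}$, $J^{z-\unitvec{2}}_{2,L}$ entering $\operatorname{div}_z J^z_{k,L}(b)$; by construction their dual edges $\overline{\edge_{1,z}},\overline{\edge_{1,z-\unitvec{1}}},\overline{\edge_{2,z}},\overline{\edge_{2,z-\unitvec{2}}}$ are exactly the four sides of the unit square $S_z:=z+[-\tfrac12,\tfrac12]^2$. Collecting the contributions of a fixed hopping term with the alternating signs prescribed by $\operatorname{div}_z$, the coefficient of $a^*_x\,T_b(x,y)\,a^{}_y$ in $\operatorname{div}_z J^z_{k,L}(b)$ comes out as $\I\,N_z(x,y)$, where $N_z(x,y)$ is the net \emph{outward} crossing number of the straight segment $\overline{xy}$ through $\partial S_z$: a transversal crossing of a side of $S_z$ counts $+1$ when the hop displacement $\sgn(x_k-y_k)$ points out of $S_z$ and $-1$ when it points in — this is precisely the combined effect of the weights $\sgn(x_k-y_k)$ in~\eqref{eq:current-through-dual} and of the pattern $+\edge_{k,z}$, $-\edge_{k,z-\unitvec{k}}$ in $\operatorname{div}_z$ — whereas a passage of $\overline{xy}$ through a corner of $S_z$ is split into two half-contributions, one for each of the two sides meeting at that corner, obeying the same sign rule; the $\tfrac12$-prefactor and the two sums (over $\overline{xy}\cap\edge_{k,z}$ and over $\overline{xy}\cap\overline{\edge_{k,z}}$) in~\eqref{eq:current-through-dual} are exactly what implements this half-counting.

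The identity then reduces to the elementary geometric fact $N_z(x,y)=\delta_{yz}-\delta_{xz}$ for all $x,y\in\Lambda_L$, which I would prove using that $\overline{xy}$ joins points of $\Z^2$ while $\partial S_z$ lies on half-integer grid lines — hence $\overline{xy}$ is never collinear with a side of $S_z$ — that $S_z\cap\Z^2=\{z\}$ — hence an endpoint of $\overline{xy}$ lies in $S_z$ only if it equals $z$ — and that $S_z$ is convex, so $\overline{xy}$ meets $\partial S_z$ in at most two points. The generic cases are then immediate: $N_z=0$ when neither endpoint is $z$ (the segment either misses $S_z$ or enters and leaves it exactly once), $N_z=+1$ when $y=z\neq x$, $N_z=-1$ when $x=z\neq y$, and the on-site term $x=y=z$ contributes nothing to either side. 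I expect the genuine work to be the bookkeeping of the degenerate configurations in which $\overline{xy}$ runs through one or two corners of $S_z$ rather than crossing side interiors — grazing a corner from outside, leaving $S_z$ through a corner when $y=z$, passing through two opposite corners of $S_z$ along a diagonal line through $z$, and so on — where one must check case by case that the half-weights attached to the two incident sides, together with their $\sgn$-factors and their signs in $\operatorname{div}_z$, add up to $\delta_{yz}-\delta_{xz}$. (For instance, a segment grazing a corner of $S_z$ from the outside touches two sides of $S_z$ and contributes two half-crossings of opposite sign, so $N_z=0$, consistent with neither endpoint being $z$.) Once all such subcases are dispatched, the coefficients on the two sides of~\eqref{eq:lem-continuity-equation} agree and the lemma follows.
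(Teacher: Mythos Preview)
Your proposal is correct and follows essentially the same approach as the paper: compute the commutator with the kinetic term (the interactions drop out by assumption), then match the resulting hopping terms against the four local currents by analysing how the segment $\overline{xy}$ meets the boundary of the dual plaquette around~$z$. Your packaging of the right-hand side coefficient as the net outward crossing number $N_z(x,y)$ and the identity $N_z(x,y)=\delta_{yz}-\delta_{xz}$ is a clean way to organise what the paper does in two separate steps --- first matching the terms with an endpoint at $z$, then arguing that the remaining contributions (segments passing through $S_z$ without an endpoint there) cancel in pairs --- but the content is the same.
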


\begin{proof}
    Calculating the derivative, we find
    \begin{align*}
        \frac{\D}{\D t} \,
        \E^{\I H_L(b) \, t} \, \mathcal{N}_{\{z\}} \, \E^{-\I H_L(b) \, t}
        \Bigr|_{t=0}
        &=
        \I \, [H_L(b), \mathcal{N}_{\{z\}}]
        \\&=
        \I \sumstack[lr]{x,y\in \Lambda_L}
        \, \bigl[ a^*_x \, T_b(x, y) \, a^{}_y, \mathcal{N}_{\{z\}} \bigr]
        \\&=
        \I \sumstack[lr]{x\in \Lambda_L\setminus\{z\}}
        \, a^*_x \, T_b(x, z) \, a^{}_z
        - \I \sumstack[lr]{y\in \Lambda_L\setminus\{z\}}
        a^*_z \, T_b(z, y) \, a^{}_y
        \\&=
        \I \sumstack[lr]{x\in \Lambda_L\setminus\{z\}}
        a^*_x \, T_b(x, z) \, a^{}_z
        - a^*_z \, T_b(z, x) \, a^{}_x
        ,
    \end{align*}
    where we used that \(\bigl[\sum_{Z\subset \Lambda_L}\Phi\supbulk(Z), \mathcal{N}_{\{z\}}\bigr]=\bigl[\sum_{Z\subset \Lambda_L}\Phi\supedge(Z), \mathcal{N}_{\{z\}}\bigr]=0\) by assumption.

    It is left to rewrite the sum in terms of \(J_{k,L}^z(b)\).
    Each \(\overline{xz}\) in the sum will cross the rectangle around \(z\) formed by the four dual edges \(\overline{\edge_{1,z}}\), \(\overline{\edge_{1,z-\unitvec{1}}}\), \(\overline{\edge_{2,z}}\) and \(\overline{\edge_{2,z-\unitvec{2}}}\) of the dual lattice at one point.
    If this point lies within a dual edge \(\edge_{k,q}\), the term in the sum will contribute to \(J_{k,L}^q(b)\) with weight \(1\).
    Otherwise, \(\overline{xz}\) crosses the rectangle at a corner and the contribution is attributed evenly to the two adjacent dual edges with weight \(1/2\).
    The sum on the right-hand side of~\eqref{eq:lem-continuity-equation} still contains some more terms coming from elements \(x,y\in \Lambda_L\setminus\{q\}\) such that the line \(\overline{xy}\) intersects two of the four dual edges.
    As can be easily checked, the corresponding contributions come with different signs and cancel each other.
    For example, the term \(a_x^* \, T_b(x,y) \, a_y^{}\) for \(x=z+\unitvec{2}\) and \(y=z+\unitvec{1}\) appears with negative sign in \(J_{1,L}^{z}(b)\), because \(x_1-y_1<0\), and with positive sign in \(J_{2,L}^{z}(b)\), because \(x_2-y_2>0\).
\end{proof}

As a simple consequence, in a stationary state it follows that the net current into any set \(Z\subset \Lambda_L\) is zero.
This is an important ingredient for the following proof.

\begin{corollary}[Current conservation]
    \label{cor:current-conservation}
    For any \(Z\subset \Lambda_L\) and stationary state%
    \footnote{ 
        The statement actually holds for all bounded operators \(\rho\in \alg_{\Lambda_L}\) but is naturally interesting for states or similar objects (see section~\ref{sec:proof-theorem-muderivative}).
    }
    \(\rho\), current conservation holds
    \begin{equation}
        \label{eq:lem-current-conservation}
        \sum_{(k,z)\in \partial_{\Lambda_L} Z}
        (-1)^{\delta_{z\in Z}} \, \tr \bigl(\rho \, J_{k,L}^z(b)\bigr)
        =
        0.
    \end{equation}
    Here \(\delta_{z\in Z} = 1\) if \(z\in Z\) and \(0\) otherwise, takes the role of the normal vector in the continuous analogue.
\end{corollary}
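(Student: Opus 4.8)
\emph{Proof plan.} The plan is to sum the continuity equation of Lemma~\ref{lem:continuity-equation} over all sites $z\in Z$ and then take the trace against $\rho$. Writing $\mathcal{N}_Z:=\sum_{z\in Z}\mathcal{N}_{\{z\}}$ and using $\frac{\D}{\D t}\,\E^{\I H_L(b)\,t}\,\mathcal{N}_Z\,\E^{-\I H_L(b)\,t}\big|_{t=0}=\I\,[H_L(b),\mathcal{N}_Z]$ together with linearity, Lemma~\ref{lem:continuity-equation} gives the operator identity
\[
    \I\,[H_L(b),\mathcal{N}_Z]
    =
    \sum_{z\in Z}\operatorname{div}_z J_{k,L}^z(b)
    =
    \sum_{z\in Z}\bigl(J_{1,L}^z(b)-J_{1,L}^{z-\unitvec{1}}(b)+J_{2,L}^z(b)-J_{2,L}^{z-\unitvec{2}}(b)\bigr).
\]
The right-hand side is a discrete Gauss/divergence theorem: for a dual edge label $(k,w)$ the operator $J_{k,L}^w(b)$ occurs with coefficient $\mathbf{1}[w\in Z]-\mathbf{1}[w+\unitvec{k}\in Z]$, so it cancels whenever $w$ and $w+\unitvec{k}$ both lie in $Z$ or both lie outside $Z$, and what remains is exactly a sum over the dual edges separating $Z$ from $\Lambda_L\setminus Z$, i.e.\ over $\partial_{\Lambda_L}Z$, each weighted by the sign $(-1)^{\delta_{z\in Z}}$ appearing in~\eqref{eq:lem-current-conservation}. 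Comparing the two sign conventions one obtains
\[
    \sum_{(k,z)\in\partial_{\Lambda_L}Z}(-1)^{\delta_{z\in Z}}\,J_{k,L}^z(b)
    =
    -\,\I\,[H_L(b),\mathcal{N}_Z].
\]

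A point that must be checked separately is that the dual edges lying on the boundary of $\Lambda_L$ contribute nothing: such edges appear in $\sum_{z\in Z}\operatorname{div}_z J_{k,L}^z(b)$ (for instance the terms $J_{2,L}^{z-\unitvec{2}}(b)$ with $z_2=0$, or the terms $J_{1,L}^w(b)$ with $w_1=L$ reached via $z=w+\unitvec{1}$) but are absent from $\partial_{\Lambda_L}Z$. For such an edge $\edge_{k,w}$ one of $w,w+\unitvec{k}$ is outside $\Lambda_L$, so $\edge_{k,w}$ (and $\overline{\edge_{k,w}}$) lies on the outer boundary of $\bigcup_{z\in\Lambda_L}z+[-1/2,1/2]^2$; hence no segment $\overline{xy}$ with $x,y\in\Lambda_L$ meets it, and $J_{k,L}^w(b)=0$, so dropping these terms is harmless.

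To conclude, take the trace against $\rho$. Since $\rho$ is stationary, $[\rho,H_L(b)]=0$, and cyclicity of the trace gives $\tr(\rho\,H_L(b)\,\mathcal{N}_Z)=\tr(H_L(b)\,\rho\,\mathcal{N}_Z)=\tr(\rho\,\mathcal{N}_Z\,H_L(b))$, so that $\tr\bigl(\rho\,\I[H_L(b),\mathcal{N}_Z]\bigr)=0$, which is precisely~\eqref{eq:lem-current-conservation}. The argument uses nothing about $\rho$ beyond boundedness and $[\rho,H_L(b)]=0$, which is the content of the footnote and is what is needed later in Section~\ref{sec:proof-theorem-muderivative}. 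The only genuinely delicate part is the combinatorial bookkeeping in the first step --- matching the telescoping of $\operatorname{div}_z$ to the index set $\partial_{\Lambda_L}Z$ with the correct signs, and verifying that the boundary-of-$\Lambda_L$ terms vanish; everything after that is the one-line trace manipulation above.
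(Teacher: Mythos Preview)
Your proof is correct and follows the same approach as the paper: sum the continuity equation over $z\in Z$, telescope the divergence to the dual-edge boundary, and use stationarity together with cyclicity of the trace to kill the commutator. You are more explicit than the paper about why the edges on the outer boundary of $\Lambda_L$ contribute nothing; one small slip is that the term $J_{1,L}^w(b)$ with $w_1=L$ arises from $z=w$ (the $+J_{1,L}^z$ contribution), not from $z=w+\unitvec{1}$, but this does not affect the argument.
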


\begin{proof}
    Taking the expectation value of~\eqref{eq:lem-continuity-equation} and summing over \(z\in Z\) yields
    \begin{equation*}
        0
        =
        \frac{\D}{\D t} \,
        \tr \Bigl(
            \E^{-\I H_L(b) \, t} \, \rho \, \E^{\I H_L(b) \, t}
            \, \mathcal{N}_{Z}
        \Bigr)
        \Bigr|_{t=0}
        =
        \sumstack{k\in \{1,2\}\\ z\in Z}
        \tr\bigl(\rho \, J_{k,L}^z(b)\bigr)
        - \tr\bigl(\rho \, J_{k,L}^{z-\unitvec{k}}(b)\bigr)
        ,
    \end{equation*}
    due to stationarity of \(\rho\) and cyclicity of the trace.
    In the sum, the positive term for \(z\in Z\) is cancelled by the negative one for \(z+\unitvec{k}\in Z\) and only~\eqref{eq:lem-current-conservation} remains.
\end{proof}

\subsubsection{Proof of Proposition~\ref{prop:Bloch-theorem}}
\label{sec:proof-Bloch-theorem}

\blochtheorem*

\begin{proof}
    We do the proof for \(k=1\), i.e.\ currents in \(x_1\)-direction, since the case \(k=2\) is analogous.
    Let \(d > D + R\).
    By current conservation for the rectangle \(\Lambda_L \cap \{x_1 \geq 0\}\), whose boundary in \(\Lambda_L\) is simply the vertical line at \(x_1=0\) (see edge set~(a) in Figure~\ref{fig:current-conservation} for \(m=0\)), we find
    \begin{align*}
        0
        &= \sum_{n=0}^{2L} \tr \bigl(\rho_L(\beta,\mu,b) \, J_{1,L}^{(0,n)}(b)\bigr)
        \\&=
        \begin{aligned}[t]
            & \sum_{n=0}^{d-1}
            \tr \bigl(\rho_L(\beta,\mu,b) \, J_{1,L}^{(0,n)}(b)\bigr)
            + \tr \bigl(\rho_L(\beta,\mu,b) \, J_{1,L}^{(0,2L-n)}(b)\bigr)
            \\&+ \sum_{n=d}^{2L-d}
            \tr \bigl(\rho_L(\beta,\mu,b) \, J_{1}^{(0,n)}(b)\bigr)
            - \tr \bigl(\rho\bigl[H_L(b)\big|_{B^{(0,n)}(\ell)}\bigr](\beta,\mu) \, J_{1}^{(0,n)}(b)\bigr)
            \\&+ \sum_{n=d}^{2L-d}
            \tr \bigl(\rho\bigl[H_L(b)\big|_{B^{(0,n)}(\ell)}\bigr](\beta,\mu) \, J_{1}^{(0,n)}(b)\bigr)
        \end{aligned}
        \\&=:
        A_1 + A_ 2 + A_3,
    \end{align*}
    where \(B^x(\ell):= \{\, y\in \Z^2 \,\vert\, \dist(x,y) \leq \ell \,\}\) is the ball in \(\Z^2\), \(R<\ell<d-D\) and we replaced \(J^x_{1,L}(b)=J^x_{1}\) in the last two sums in view of the previous remark that the \(L\)-dependence of \(J_{k,L}^z(b)\) only stems from missing hopping terms near the boundary.
    Due to their structure, \(J_{1,L}^{(m,n)}(b)\) are bounded operators with norm bound \(\lVert J_{1,L}^{(m,n)}(b) \rVert < C_J\) uniform for all \(m\) and \(n\).
    Hence, also \(\bigl\lvert \tr \bigl(\rho_L(\beta,\mu,b) \, J_{1,L}^{(m,n)}(b)\bigr) \bigr\rvert < C_J\) and \(\lvert A_1 \rvert\) is bounded by \(2 \, d \, C_J\).
    For the second sum \(A_2\) notice that \(J_{1,L}^{(m,n)}(b)\in \alg_{B^{(m,n)}(R)}\).
    Thus, by using the local indistinguishability of the Gibbs state, see Definition~\ref{def:local-indistinguishability}, we have
    \begin{equation*}
        \lvert A_2 \rvert
        \leq
        \sum_{n=d}^{2L-d} C_J \, \zeta\supbulk(\ell-R)
        =
        (2(L-d)+1) \, C_J \, \zeta\supbulk(\ell-R)
        .
    \end{equation*}
    And by the translation invariance of the Gibbs state, namely~\eqref{eq:translation-invariance-local-current-expectation}, we get
    \begin{equation*}
        (2\,(L-d)+1) \, \tr \Bigl(\rho\bigl[H_L(b)\big|_{B^{(0,L)}(\ell)}\bigr](\beta,\mu) \, J_{1}^{(0,L)}(b)\Bigr)
        =
        A_3
        =
        -(A_1+A_2)
        ,
    \end{equation*}
    which together with the previous bound on \(A_1\) and \(A_2\), implies
    \begin{equation*}
        \Bigl\lvert \tr \Bigl(\rho\bigl[H_L(b)\big|_{B^{(0,L)}(\ell)}\bigr](\beta,\mu) \, J_{1}^{(0,L)}(b)\Bigr) \Bigr\rvert
        \leq
        \frac{C_J \, d}{L-d} + C_J \, \zeta\supbulk(\ell-R)
        .
    \end{equation*}
    As the left-hand side is actually independent of \(L\), it is bounded by the infimum \(C_J \, \zeta\supbulk(\ell-R)\) of the right-hand side.

    We can now prove the same for every \(J_{1,L}^{(m,n)}(b)\) with \(z=(m,n)\) in a finite box by using~\eqref{eq:translation-invariance-local-current-expectation} together with
    \begin{equation*}
        \rho\bigl[H_L(b)\big|_{B^{(m,n)}(\ell)}\bigr](\beta,\mu) \, J_{1}^{(m,n)}(b)
        = \rho\bigl[H_{L'}(b)\big|_{B^{(m,n)}(\ell)}\bigr](\beta,\mu) \, J_{1}^{(m,n)}(b)
    \end{equation*}
    if \(\dist\bigl((m,n),\Z^2\setminus\Lambda_L\bigr)>R+D\).
    Indeed, we have
    \begin{align*}
        &\alignindent
        \Bigl\lvert
            \tr \bigl(\rho_L(\beta,\mu,b) \, J_{1,L}^{(m,n)}(b)\bigr)
        \Bigr\rvert
        \\&\leq
        \begin{aligned}[t]
            &\Bigl\lvert
                \tr \bigl(\rho_L(\beta,\mu,b) \, J_{1}^{(m,n)}(b)\bigr)
                - \tr \bigl(\rho\bigl[H_L(b)\big|_{B^{(m,n)}(\ell)}\bigr](\beta,\mu) \, J_{1}^{(m,n)}(b)\bigr)
            \Bigr\rvert
            \\&+
            \Bigl\lvert
                \tr \bigl(\rho\bigl[H_{L}(b)\big|_{B^{(0,L)}(\ell)}\bigr](\beta,\mu) \, J_{1}^{(0,L)}(b)\bigr)
            \Bigr\rvert
        \end{aligned}
        \\&\leq
        2 \, C_J \, \zeta\supbulk(\ell-R)
    \end{align*}
    by using local indistinguishability and the bound for \(J_{1,L}^{(0,L)}(b)\).
    We can now choose \(\ell = \dist\bigl((m,n),\Z^2\setminus\Lambda_L\bigr)-D\), which proves the statements for \(\dist\bigl((m,n),\Z^2\setminus\Lambda_L\bigr) > R + D\).
    And since we argued above that the LHS of~\eqref{eq:prop-Bloch-theorem} is in any case bounded for all \(z\in\Lambda_L\), the full statement follows with \(\Cb = C_J \, \max\{\zeta\supbulk(0)^{-1},2\}\).
\end{proof}

\subsection{Proof of Theorem~\ref{thm:main}}

We split the proof of Theorem~\ref{thm:main} in two parts: we briefly discuss the localization of the edge current first and then the equality between the magnetization and the current.

\subsubsection{Localization of the current}

The localization of the edge current near the edge is a straightforward consequence of \nameref{prop:Bloch-theorem}.

\begin{proposition}
    \label{prop:localization-I_L}
    Let \(\zeta\supbulk\in \ell^1\), \((H_L(b))_{L\in\N}\) be a family of Hamiltonians of the form~\eqref{eq:Hb} and \(\Cb\) be the constant from \nameref{prop:Bloch-theorem}.
    If \((H_L(b))_{L\in\N}\) satisfies local indistinguishability of the Gibbs state in the bulk at \((\beta,\mu,b)\) with \(\zeta\supbulk\)-decay in the sense of Definition~\ref{def:local-indistinguishability}, then for all \(L \geq d \geq D+R\),
    \begin{equation*}
        \bigl\lvert
            \Iedged[L]{d}(\beta,\mu,b) - \Iedge[L](\beta,\mu,b)
        \bigr\rvert
        \leq
        \Cb \smashoperator[lr]{\sum_{n=d-D-R}^\infty} \zeta\supbulk(n).
    \end{equation*}
\end{proposition}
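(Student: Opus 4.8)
The plan is to recognize this proposition as a direct corollary of Bloch's theorem, Proposition~\ref{prop:Bloch-theorem}. First I would unwind the definitions: since \(\Iedge[L] = \Iedged[L]{L}\) and \(\Iedged[L]{d}(\beta,\mu,b) = \sum_{n=0}^{d-1}\tr\bigl(\rho_L(\beta,\mu,b)\, J_{1,L}^{(0,n)}(b)\bigr)\), the two sums telescope and
\begin{equation*}
    \Iedged[L]{d}(\beta,\mu,b) - \Iedge[L](\beta,\mu,b)
    = - \sum_{n=d}^{L-1} \tr\bigl(\rho_L(\beta,\mu,b)\, J_{1,L}^{(0,n)}(b)\bigr),
\end{equation*}
which is an empty sum (so the estimate is trivial) when \(d=L\). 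Applying the triangle inequality reduces the problem to bounding each term \(\bigl\lvert\tr\bigl(\rho_L(\beta,\mu,b)\, J_{1,L}^{(0,n)}(b)\bigr)\bigr\rvert\).

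Next I need the distance that appears in the Bloch bound~\eqref{eq:prop-Bloch-theorem}. For \(z=(0,n)\) with \(0 \leq n \leq L-1\), the nearest lattice point outside \(\Lambda_L = \bigl([-L,L]\times[0,2L]\bigr)\cap\Z^2\) is \((0,-1)\), so \(\dist\bigl((0,n),\Z^2\setminus\Lambda_L\bigr) = n+1\) (the left and right walls lie at distance \(L+1 \geq n+1\) and the top wall at distance \(2L+1-n \geq n+1\)). Since \(d \geq D+R\), we have \(n+1 > D+R\) throughout the sum, so the \([\,\cdot\,]_+\) in~\eqref{eq:prop-Bloch-theorem} is inactive, and Proposition~\ref{prop:Bloch-theorem} gives \(\bigl\lvert\tr\bigl(\rho_L(\beta,\mu,b)\, J_{1,L}^{(0,n)}(b)\bigr)\bigr\rvert \leq \Cb\,\zeta\supbulk(n+1-D-R)\).

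Summing over \(n\) and reindexing with \(m = n+1-D-R\),
\begin{equation*}
    \bigl\lvert \Iedged[L]{d}(\beta,\mu,b) - \Iedge[L](\beta,\mu,b)\bigr\rvert
    \leq \Cb \sum_{n=d}^{L-1} \zeta\supbulk(n+1-D-R)
    \leq \Cb \sum_{m=d+1-D-R}^{\infty} \zeta\supbulk(m)
    \leq \Cb \sum_{n=d-D-R}^{\infty} \zeta\supbulk(n),
\end{equation*}
where the last two steps use \(\zeta\supbulk \geq 0\) together with \(\zeta\supbulk \in \ell^1\) to guarantee the tail sums converge. This is exactly the claimed bound.

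I do not expect any genuine obstacle: all the substantive work is already done in Proposition~\ref{prop:Bloch-theorem}, and the only points needing a moment's care are the telescoping identity for the two edge-current truncations and the elementary observation that the relevant distance to the complement of \(\Lambda_L\) for the column \(z=(0,n)\) is simply \(n+1\).
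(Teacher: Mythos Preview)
Your proof is correct and follows essentially the same approach as the paper: telescope the two truncations, apply the triangle inequality, bound each term by Bloch's theorem, and sum. You are slightly more careful than the paper in computing \(\dist\bigl((0,n),\Z^2\setminus\Lambda_L\bigr)=n+1\), which yields the marginally sharper intermediate bound \(\zeta\supbulk(n+1-D-R)\) before arriving at the same final estimate.
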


\begin{proof}
    For the proof we just apply \nameref{prop:Bloch-theorem} to obtain
    \begin{align*}
        \bigl\lvert
            \Iedged[L]{d} - \Iedge[L]
        \bigr\rvert
        \leq
        \sum_{n=d}^{L-1}
        \, \Bigl\lvert
            \tr\Bigl( \rho_L(\beta,\mu,b) \, J_{1}^{(0,n)}(b) \Bigr)
        \Bigr\rvert
        \leq
        \Cb \, \sum_{n=d}^{\infty}
        \, \zeta\supbulk(n-D-R)
        .
    \end{align*}
\end{proof}

\subsubsection{Magnetization in finite systems}
\label{sec:proof-theorem-m_L-equals-I_L}

Let us compute the magnetic derivative of the Hamiltonian.
Notice that for every fixed \(L\) the Hamiltonian \(H_L(b)\) is a smooth function of \(b\) in the operator norm topology.
Taking into account~\eqref{eq:definition-Tb}, we find that
\begin{align}
    H'_L(b)
    := \frac{\partial}{\partial b} H_L(b)
    &= \sum_{x \in \Lambda_L} \sum_{y \in B_L^x(R)} \tfrac{\I}{2} \, (x_2+y_2) \, (x_1-y_1) \, a_x^*\, T_b (x,y) \,a^{}_y
    \nonumber
    \\&= \sum_{m=-L}^{L-1} \sum_{n=0}^{2L} n \, J_{1,L}^{(m,n)}(b).
    \label{eq:magnetic-derivative-H}
\end{align}
Where \(J_{1,L}^{(m,n)}(b)\) has been defined in~\eqref{eq:current-through-dual}.
To see the last equality, we compare the coefficients in front of \(a_x^*\, T_b (x,y) \,a^{}_y\).
Without loss of generality we only consider \(x_1 < y_1\) and \(x_2 \leq y_2\).
By~\eqref{eq:current-through-dual}, the coefficient on the right-hand side is
\begin{equation*}
    - \I \sum_{m=x_1}^{y_1-1} \tfrac{1}{2} \, \biggl(
        \sumstack{n\in \N:\\\overline{xy}\,\cap\, \edge_{1,(m,n)} \neq \emptyset} \mathllap{n}
        \quad+\quad
        \sumstack[l]{n\in \N:\\\overline{xy}\,\cap\, \overline{\edge_{1,(m,n)}} \neq \emptyset} \mathllap{n}
    \biggr)
    .
\end{equation*}
By point symmetry around the center \((x+y)/2\) of \(\overline{xy}\), whenever \(\overline{xy}\,\cap\, \overline{\edge_{1,(m,n)}} \neq \emptyset\) for \(m = x_1 + k\) and \(n\in \N\), then also \(\overline{xy}\,\cap\, \overline{\edge_{1,(m',n')}} \neq \emptyset\) for \(m' = y_1 - k - 1\) and \(n' = y_2 - (n - x_2)\).
The same holds for the edges without the endpoints.
Thus, the coefficient equals
\begin{equation*}
    - \I \, \tfrac{1}{2} \, \sum_{k=0}^{\crampedclap{y_1-x_1-1}} \tfrac{1}{2} \, \biggl(
        \sumstack[r]{n\in \N:\\\overline{xy}\,\cap\, \edge_{1,(x_1+k,n)}\neq \emptyset} n + \bigl(y_2-(n-x_2)\bigr)
        +
        \sumstack[lr]{n\in \N:\\\overline{xy}\,\cap\, \overline{\edge_{1,(x_1+k,n)}} \neq \emptyset} n + \bigl(y_2-(n-x_2)\bigr)
    \biggr)
    =
    \tfrac{\I}{2} \, (x_1-y_1) \, (y_2 + x_2)
    .
\end{equation*}
This is exactly the coefficient on the left-hand side of~\eqref{eq:magnetic-derivative-H}.

By using Duhamel’s formula and~\eqref{eq:magnetic-derivative-H}, we can explicitly compute the magnetization as follows
\begin{align}
    m_L(\beta,\mu,b)
    &= \nonumber
    - (2L+1)^{-2} \, \beta^{-1} \, \frac{\partial}{\partial b}
    \ln \Bigl( \tr\bigl( \E^{-\beta (H_L(b) - \mu \,\mathcal{N}_L)} \bigr)\Bigr)
    \\ &= \nonumber
    -\frac{1}{(2L+1)^2 \, \beta \, \mathcal{Z}_L(\beta,\mu,b)}
    \tr\biggl( \frac{\partial}{\partial b} \, \E^{-\beta (H_L(b) - \mu\, \mathcal{N}_L)} \biggr)
    \\ &= \nonumber
    \frac{1}{(2L+1)^2 \, \mathcal{Z}_L(\beta,\mu,b)}
    \tr\biggl( \int_0^1 \D s \, \E^{-s\beta (H_L(b) - \mu \,\mathcal{N}_L)} \, H'_L(b) \, \E^{-(1-s)\beta (H_L(b) - \mu \,\mathcal{N}_L)}\biggr)
    \\ &= \nonumber
    \frac{1}{(2L+1)^2} \tr\bigl( \rho_L(\beta,\mu,b) \, H'_L(b) \bigr)
    \\ &= \label{eq:magnetization-as-sum-over-currents}
    \frac{1}{(2L+1)^2} \sum_{m=-L}^{L-1} \sum_{n=0}^{2L} n \tr\bigl( \rho_L(\beta,\mu,b) \, J_{1,L}^{(m,n)}(b) \bigr)
    .
\end{align}

\begin{proposition}
    \label{prop:m_L-equals-I_L^d}
    Let \(\zeta\supbulk\in \ell^1\) and \((H_L(b))_{L\in\N}\) be a family of Hamiltonians of the form~\eqref{eq:Hb} satisfying local indistinguishability of the Gibbs state in the bulk at \((\beta,\mu,b)\) with \(\zeta\supbulk\)-decay in the sense of Definition~\ref{def:local-indistinguishability}.
    Then
    \begin{equation*}
        \Bigl\lvert
            m_L(\beta,\mu,b)
            - \Iedged[L]{d}(\beta,\mu,b)
        \Bigr\rvert
        \leq
        \Cb \, \biggl(\frac{4\,d^2}{L} + \smashoperator[lr]{\sum_{n=d-R-D}^\infty} \zeta\supbulk(n) \biggr)
    \end{equation*}
    for all \(d > D+R\) with \(\Cb\) from \nameref{prop:Bloch-theorem}.
\end{proposition}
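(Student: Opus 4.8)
The plan is to combine the current representation~\eqref{eq:magnetization-as-sum-over-currents} with current conservation (Corollary~\ref{cor:current-conservation}) and \nameref{prop:Bloch-theorem}. Writing \(f(m,n):=\tr\bigl(\rho_L(\beta,\mu,b)\,J_{1,L}^{(m,n)}(b)\bigr)\), so that \(m_L(\beta,\mu,b)=(2L+1)^{-2}\sum_{m=-L}^{L-1}\sum_{n=0}^{2L}n\,f(m,n)\), the first move is to tame the unbounded weight \(n\). Current conservation for the half-strip \(\Lambda_L\cap\{x_1\ge m\}\), whose dual edge boundary is the single vertical line at column \(m\), gives \(\sum_{n=0}^{2L}f(m,n)=0\) for every \(m\); a summation by parts then yields
\[
    \sum_{n=0}^{2L}n\,f(m,n)=-\sum_{k=1}^{2L}P(m,k),\qquad P(m,k):=\sum_{n=0}^{k-1}f(m,n),
\]
so that \(m_L(\beta,\mu,b)=-(2L+1)^{-2}\sum_{m=-L}^{L-1}\sum_{k=1}^{2L}P(m,k)\), where \(P(m,k)\) is the net current flowing past column \(m\) below height \(k\).

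Next one shows that, for every column \(m\), the map \(k\mapsto P(m,k)\) is essentially constant on the bulk band \(d\le k\le 2L-d\). For columns not within distance \(R+D\) of the left or right edge this is \nameref{prop:Bloch-theorem}: from \(P(m,k)-P(m,d)=\sum_{n=d}^{k-1}f(m,n)\) and the cheap lower bounds on \(\dist((m,n),\Z^2\setminus\Lambda_L)\) by the distances \(n+1\) resp.\ \(2L+1-n\) to the lower and upper edges, one gets \(\lvert f(m,n)\rvert\le\Cb\,\zeta\supbulk([\min(n,2L-n)-R-D]_+)\), whose sum over \(n\ge d\) is the tail \(\Cb\sum_{n\ge d-R-D}\zeta\supbulk(n)\) (care is needed for the heights near \(L\), where only the side-edge distance survives; there \(n\zeta\supbulk(n)\to 0\) keeps the contribution within the same tail after the \((2L+1)^{-2}\) normalisation). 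For the \(\Or(R+D)\) columns near the sides, where Bloch's bound degenerates, one instead invokes current conservation for the thin rectangles \(\{x_1\le m,\,x_2<k\}\) and \(\{x_1>m,\,x_2<k\}\), which writes \(P(m,k)\) as a short signed sum of transverse currents \(\tr(\rho_L J_{2,L}^{(m',k-1)}(b))\) localised near the corresponding side edge, hence \(\lvert P(m,k)\rvert,\lvert P(m,k)-P(m,d)\rvert=\Or(C_J)\) there; being \(\Or(R+D)\) many, after the \((2L+1)^{-2}\) normalisation they cost only \(\Or(C_J/L)\).

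The heart of the proof is then to identify the common value \(P(m,d)\), for \emph{all} \(m\), with \(\Iedged[L]{d}(\beta,\mu,b)=P(0,d)\), up to errors of the size already present. Current conservation for the rectangles \(\{0<x_1\le m,\,x_2<d\}\) (and their reflections) expresses \(P(m,d)-P(0,d)\) through transverse currents \(\tr(\rho_L J_{2,L}^{(m',d-1)}(b))\); summing over \(m\) and closing these cuts over the clean upper and side edges of \(\Lambda_L\) — where \nameref{prop:Bloch-theorem} provides no smallness, only current conservation — collapses \(\sum_m\sum_k P(m,k)\) to \(-(2L+1)^2\,\Iedged[L]{d}(\beta,\mu,b)\) modulo a further \(\zeta\supbulk\)-tail and \(\Or(C_J)\) boundary corrections. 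This is where most of the work lies: \(\Iedged[L]{d}\) is a local quantity at one spot on the boundary while \(m_L\) is a global average, so the edge current has to be \enquote{transported} across the sample, the transport necessarily passes through the homogeneous bulk and the uncontrolled clean edges, and one must carefully separate the genuinely small contributions (bulk currents, \(\zeta\supbulk\)-tails) from the merely \(\Or(C_J)\) ones that survive only because of the \((2L+1)^{-2}\) prefactor.

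Finally the errors are collected. On the bulk band the leading term is a prefactor \(\tfrac{(2L-2d+1)\,2L}{(2L+1)^2}\), which differs from \(1\) by \(\Or(d/L)\), times \(\Iedged[L]{d}(\beta,\mu,b)\), costing at most \(\tfrac1L\lvert\Iedged[L]{d}\rvert\le d\,C_J/L\). The two remaining bands \(1\le k<d\) and \(2L-d<k\le 2L\) each contain \(\Or(d)\) indices \(k\) with \(\lvert P(m,k)\rvert\le\min(k,\,2L+1-k)\,C_J\le d\,C_J\) (the second half of the estimate again using \(\sum_n f(m,n)=0\)), so over the \(2L\) columns and after the \((2L+1)^{-2}\) normalisation they contribute \(\Or(d^2C_J/L)\). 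Together with the accumulated \(\zeta\supbulk\)-tails, and absorbing \(C_J\) into \(\Cb\) (one has \(\Cb\ge C_J\), cf.\ the proof of \nameref{prop:Bloch-theorem}), this gives the claimed bound \(\Cb\bigl(4d^2/L+\sum_{n\ge d-R-D}\zeta\supbulk(n)\bigr)\). The main obstacle is the third step: routing the transport of the edge current through the clean edges, which Bloch's theorem does not see, forces a delicate bookkeeping of which terms near the boundary are truly negligible and which are only bounded.
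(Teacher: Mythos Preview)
Your overall architecture—summation by parts to rewrite \(m_L\) as an average of the partial sums \(P(m,k)=\sum_{n<k}f(m,n)\), followed by Bloch's theorem and current conservation—is close in spirit to the paper's proof, but there is a genuine gap in your step~2 and your step~3 is both vague and, as stated, uses the wrong contour.

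\textbf{The side-column cutoff must be \(d\), not \(R+D\).} Your ``cheap lower bound'' \(\dist((m,n),\Z^2\setminus\Lambda_L)\ge\min(n+1,2L+1-n)\) is false: the distance is the \emph{minimum} of the vertical and horizontal distances, so for a column at side-distance \(\ell\) one only has \(|f(m,n)|\le\Cb\,\zeta\supbulk(\ell-R-D)\) for heights \(n\) in the middle. With your cutoff at \(R+D\), the columns at side-distance \(\ell\in(R+D,d]\) contribute, after summing over the \(O(L)\) heights \(n\), the \(O(L)\) values of \(k\), and normalising, a term of order \(\sum_{\ell>R+D}\zeta\supbulk(\ell-R-D)=\lVert\zeta\supbulk\rVert_{\ell^1}\), a constant independent of \(d\) and \(L\). (Your parenthetical about ``heights near \(L\)'' and ``\(n\,\zeta\supbulk(n)\to0\)'' misidentifies the problem: the issue is horizontal, not vertical, proximity to the boundary.) The paper avoids this by treating \emph{all} columns within distance \(d\) of the sides via current conservation; the resulting \(2d\) columns with \(|P(m,k)|\le d\,C_J\) give the \(d^2/L\) in the statement.

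\textbf{The comparison of \(P(m,d)\) with \(P(0,d)\) must route through the centre, not the clean edges.} Your rectangle \(\{0<x_1\le m,\,x_2<d\}\) closes the contour at height \(d-1\), producing transverse currents \(g(m',d-1)\) which Bloch only bounds by \(\Cb\,\zeta\supbulk(d-R-D)\). Summed over \(m'\) and then over \(m\), this yields \(O(L^2\zeta\supbulk(d-R-D))\), which does not vanish after normalisation. The paper instead uses the rectangle \(\{0<x_1\le m,\,0\le x_2\le L\}\) (the red set in Figure~\ref{fig:current-conservation}), so the transverse leg sits at height \(L\) in the deep bulk, where \(|g(m',L)|\le\Cb\,\zeta\supbulk(L-|m'|-R-D)\); summing this over \(m'\) and \(m\) then gives exactly a \(\zeta\supbulk\)-tail from \(d-R-D\). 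Your phrase ``closing these cuts over the clean upper and side edges—where Bloch's theorem provides no smallness'' suggests you believe the contour must pass through regions where Bloch fails; it need not, and if it did the argument would not close.
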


\begin{proof}
    We decompose the sum from~\eqref{eq:magnetization-as-sum-over-currents} into five regions (see Figure~\ref{fig:decomposition-into-five-regions})
    \begin{equation} \label{eq:decomposition-into-five-regions}
        \frac{1}{(2L+1)^2} \sum_{m=-L}^{L-1} \sum_{n=0}^{2L} n \tr\bigl( \rho_L(\beta,\mu,b) \, J_{1,L}^{(m,n)}(b) \bigr)
        =
        \Abulk + \Aleft + \Aright + \Abottom + \Atop.
    \end{equation}
    We will show that \(\Abulk\), \(\Aleft\) and \(\Aright\) are small and that \(\Atop\) and \(\Abottom\) resemble the edge current.
    Abbreviating \(j_{k,L}^{(m,n)} := \tr \bigl( \rho_L(\beta,\mu,b) \, J_{k,L}^{(m,n)}(b) \bigr)\), the individual contributions are
    \begin{align*}
        \Abulk &:=\frac{1}{(2L+1)^2} \sum_{m=-L+d}^{L-d-1} \sum_{n=d}^{2L-d} n \, j_{1,L}^{(m,n)}
        , \\
        \Aleft &:= \frac{1}{(2L+1)^2} \sum_{m=-L}^{-L+d-1} \sum_{n=0}^{2L} n \, j_{1,L}^{(m,n)}
        , \\
        \Aright &:= \frac{1}{(2L+1)^2} \sum_{m=L-d}^{L-1} \sum_{n=0}^{2L} n \, j_{1,L}^{(m,n)}
        , \\
        \Abottom &:= \frac{1}{(2L+1)^2} \sum_{m=-L+d}^{L-d-1} \sum_{n=0}^{d-1} n \, j_{1,L}^{(m,n)}
        ,\quad\text{and} \\
        \Atop &:= \frac{1}{(2L+1)^2} \sum_{m=-L+d}^{L-d-1} \sum_{n=2L-d+1}^{2L} n\, j_{1,L}^{(m,n)}
        ,
    \end{align*}
    where \(\Abulk\) is the bulk part, \(\Aleft\) and \(\Aright\) are the sum over the left and right edge regions, \(\Abottom\) and \(\Atop\) are the sum over the upper and lower edge regions.
    Note that \(d>R+D\).

    \begin{figure}
        \fcapside{
            \setlength{\ul}{.8cm}
            \setlength{\vertexdiameter}{\lw}
            \begin{tikzpicture}[myLine,x=\ul,y=\ul]
                \foreach \x in {-4,-3,...,4}{
                    \foreach \y in {-4,-3,...,4}{
                        \path[fill=gray] (\x,\y) circle (\vertexdiameter);
                    }
                }
                \foreach \x in {-4,-3,...,3}{
                    \foreach \y in {-4,-3,...,4}{
                        \draw[line cap=round, shorten=3.5\lw, color=gray] (\x,\y) ++(0.5,-0.5) -- +(0,1);
                    }
                }
                \begin{scope}
                    \draw[color=col_1, on background={fill, color=col_1!20}] 
                        (-4.25\ul+1.5\lw, -4.5\ul+1.5\lw) rectangle
                        (-2.25\ul-1.5\lw,  4.5\ul-1.5\lw)
                        node[midway, fill=col_1!20, inner sep=2\lw, minimum height=1\ul] {left};
                    \draw[color=col_1, on background={fill, color=col_1!20}] 
                        ( 1.75\ul+1.5\lw, -4.5\ul+1.5\lw) rectangle
                        ( 3.75\ul-1.5\lw,  4.5\ul-1.5\lw)
                        node[midway, fill=col_1!20, inner sep=2\lw, minimum height=1\ul] {right};
                    \draw[color=col_2, on background={fill, color=col_2!20}] 
                        (-2.25\ul+1.5\lw, -4.5\ul+1.5\lw) rectangle
                        ( 1.75\ul-1.5\lw, -2.5\ul-1.5\lw)
                        node[midway, fill=col_2!20, inner sep=2\lw] {bottom};
                    \draw[color=col_2, on background={fill, color=col_2!20}] 
                        (-2.25\ul+1.5\lw,  4.5\ul-1.5\lw) rectangle
                        ( 1.75\ul-1.5\lw,  2.5\ul+1.5\lw)
                        node[midway, fill=col_2!20, inner sep=2\lw] {top};
                    \draw[color=col_3, on background={fill, color=col_3!20}] 
                        (-2.25\ul+1.5\lw, -2.5\ul+1.5\lw) rectangle
                        ( 1.75\ul-1.5\lw,  2.5\ul-1.5\lw)
                        node[midway, fill=col_3!20, inner sep=2\lw, minimum height=1\ul] {bulk};
                \end{scope}
            \end{tikzpicture}
        }{
            \caption{
                Depicted are \(\Lambda_L\) for \(L=4\) (dots) and the corresponding dual edges \(e_{1,(m,n)}\) (lines) as defined above~\eqref{eq:current-through-dual}. 
                They were used to define the current \(J_{1,L}^{(m,n)}(b)\) in~\eqref{eq:current-through-dual}.
                Since there are less vertical dual edges than vertices in \(\Lambda_L\), the right most vertices have no corresponding dual edge.
                The coloured boxes group the dual edges into the five groups of the decomposition~\eqref{eq:decomposition-into-five-regions} for \(d=2\).
            }
            \label{fig:decomposition-into-five-regions}
        }
    \end{figure}
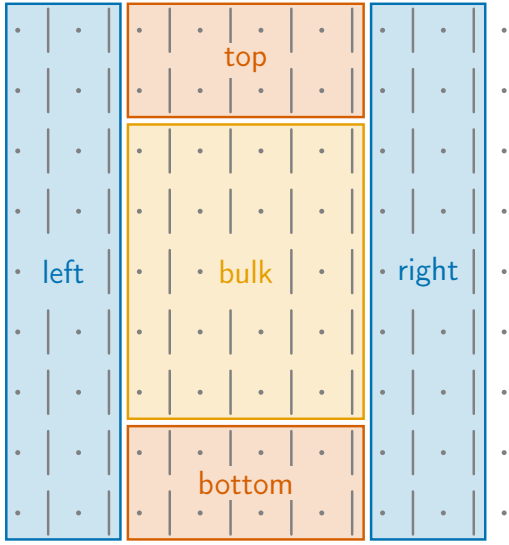

    By \nameref{prop:Bloch-theorem}, \(\lvert j_{1,L}^{z} \rvert \leq \zeta(\dist(z,\Z^2\setminus\Lambda_L))\) with \(\zeta(r) := \Cb \, \zeta\supbulk(r-D-R)\).
    Thus, we can bound the inner part \(\Abulk\) by summing over shells \(\{\, z\in\Lambda_L \,|\, \dist(z,\Z^2\setminus\Lambda_L)=r \,\}\)
    \begin{align}
        \bigl\lvert \Abulk \bigr\rvert
        & \leq
        \frac{1}{2L+1} \sum_{m=-L+d}^{L-d-1} \sum_{n=d}^{2L-d} \bigl\lvert j_{1,L}^{(m,n)} \bigr\rvert
        \leq \frac{L}{2L+1} \sum_{r=d}^{L}
        \zeta(r)
        .
        \label{eq:bound-inner-part}
    \end{align}

    Then, let us consider the right and left edge regions, \(\Aleft\) and \(\Aright\).
    We write only the case of \(\Aright\), since the other one is analogous.
    First, using a discrete version of partial integration, we find
    \begin{equation*}
        \sum_{m=L-d}^{L-1} \sum_{n=0}^{2L} n \, j_{1,L}^{(m,n)}
        = \sum_{m=L-d}^{L-1} \sum_{n=1}^{2L} \sum_{n'=n}^{2L} j_{1,L}^{(m,n')}
    \end{equation*}
    and by current conservation (see edge set~(c) in Figure~\ref{fig:current-conservation}) and \(\big\lvert j_{2,L}^{(m',n)} \big\rvert < \Cb\),
    \begin{equation*}
        \Biggl\lvert \sum_{n'=n}^{2L} j_{1,L}^{(m,n')} \Biggr\rvert
        =
        \Biggl\lvert \sum_{m'=m+1}^L j_{2,L}^{(m',n-1)} \Biggr\rvert
        \leq
        (L-m) \, \Cb.
    \end{equation*}

    \begin{figure}
        \fcapside[.47\textwidth]{
            \newcommand\LL{6}
            \setlength{\vertexdiameter}{2\lw}
            \setlength{\ul}{(\linewidth+3.5\lw)/13}  
            \newcommand{\edgex}[2][gray]{\draw[line cap=round, color=#1] (#2) ++(0.5,-0.5\ul+2.5\lw) -- +(0,1\ul-5\lw)}  
            \newcommand{\edgey}[2][gray]{\draw[line cap=round, color=#1] (#2) ++(-0.5\ul+2.5\lw,0.5) -- +(1\ul-5\lw,0)}  
            \newcommand{\vertex}[2][gray]{\path[fill=#1] (#2) circle[radius=.5\vertexdiameter]}
            \tikzset{edgeset label/.style={inner sep=1.5\lw,outer sep=0pt,font=\rmfamily}}
            \begin{tikzpicture}[myLine,x=\ul,y=\ul]
                \foreach \x in {-\LL,...,\LL}{
                    \foreach \y in {-\LL,...,\LL}{
                        \vertex{\x,\y};
                    }
                }
                \foreach \x [parse=true] in {-\LL,...,\LL-1}{
                    \foreach \y in {-\LL,...,\LL}{
                        \edgex{\x,\y};
                    }
                }
                \foreach \x in {-\LL,...,\LL}{
                    \foreach \y [parse=true] in {-\LL,...,\LL-1}{
                        \edgey{\x,\y};
                    }
                }
                \begin{scope}[myLine=2]
                    \setlength{\vertexdiameter}{4\lw}
                    \newcommand{\rightm}{4}
                    \newcommand{\rightn}{2}
                    \foreach \y in {\rightn,...,\LL}{
                        \edgex[col_1]{\rightm,\y};
                    }
                    \foreach \x [parse=true] in {\rightm+1,...,\LL}{
                        \edgey[col_1]{\x,\rightn-1};
                    }
                    \vertex[col_1]{\rightm,\rightn} node[anchor=south east,color=col_1,fill=white,inner sep=1.5\lw, outer sep=\vertexdiameter/2] {\small\((m,n)\)};
                    \path (\rightm,\LL) ++ (0.5,0.5) node[above, color=col_1,edgeset label] {(c)};
                    \newcommand{\bottomm}{3}
                    \foreach \y in {-\LL,...,0}{
                        \edgex[col_2]{0,\y};
                        \edgex[col_2]{\bottomm,\y};
                    }
                    \foreach \x [parse=true] in {1,...,\bottomm}{
                        \edgey[col_2]{\x,0};
                    }
                    \vertex[col_2]{\bottomm,-\LL} node[anchor=south east,color=col_2,fill=white,inner sep=1.5\lw, outer sep=\vertexdiameter/2] {\small\((m,0)\)};
                    \vertex[col_2]{0,0} node[anchor=south east,color=col_2,fill=white,inner sep=1.5\lw, outer sep=\vertexdiameter/2] {\small\((0,L)\)};
                    \path (0,-\LL) ++ (0.5,-0.5) node[below, color=col_2, edgeset label] {(b)};
                    \newcommand{\linem}{-4}
                    \foreach \y in {-\LL,...,\LL}{
                        \edgex[col_3]{\linem,\y};
                    }
                    \vertex[col_3]{\linem,-\LL} node[anchor=south east,color=col_3,fill=white,inner sep=1.5\lw, outer sep=\vertexdiameter/2] {\small\((m,0)\)};
                    \path (\linem,-\LL) ++ (0.5,-0.5) node[below, color=col_3, edgeset label] {(a)};
                \end{scope}
            \end{tikzpicture}
        }{
            \captionsetup{singlelinecheck=off}
            \caption[Visualization of sets for which we use current conservation.]{
                The picture shows \(\Lambda_L\) for \(L=\LL\) and all dual edges.
                A few connected sets of dual edges are highlighted.
                Together with the boundary all these sets are rectangles, and we use current conservation through the edges in the proofs.
                More precisely,
                \begin{itemize}[align=left,leftmargin=*]
                    \color{col_3}
                    \item[(a)]\(\displaystyle
                        \sum_{n=0}^{2L} j_{1,L}^{(m,n)} = 0
                        \)\normalcolor ,
                    \color{col_2}
                    \item[(b)]\(\displaystyle
                        \sum_{n=0}^{L} \bigl( -j_{1,L}^{(0,n)} + j_{1,L}^{(m,n)} \bigr) + \sum_{m'=1}^{m} j_{2,L}^{(m',L)} = 0
                        \)\normalcolor ,
                    \color{col_1}
                    \item[(c)]\(\displaystyle
                        \sum_{n'=n}^{2L} j_{1,L}^{(m,n')} + \sum_{m'=m+1}^L j_{2,L}^{(m',n-1)} = 0
                        \)\normalcolor .
                \end{itemize}
            }
            \label{fig:current-conservation}
        }
    \end{figure}

    Hence,
    \begin{equation*}
        \bigl\lvert \Aright \bigr\rvert
        =
        \Biggl\lvert
            \frac{1}{(2L+1)^2} \sum_{m=L-d}^{L-1} \sum_{n=0}^{2L} n \, j_{1,L}^{(m,n)}
        \Biggr\rvert
        \leq
        \frac{d^2\,\Cb}{2L+1}.
    \end{equation*}

    Let us now consider the contributions coming from the lower and upper edge regions.
    The contribution \(\Abottom\) is small due to the prefactor \(n \approx 0\):
    \begin{equation*}
        \Biggl\lvert
            \frac{1}{(2L+1)^2} \sum_{m=-L+d}^{L-d-1} \sum_{n=0}^{d-1} n \, j_{1,L}^{(m,n)}
        \Biggr\rvert
        \leq \frac{1}{(2L+1)^2} \, 2(L-d) \, d \, d \, \Cb
        \leq \frac{d^2 \, \Cb}{2L+1}.
    \end{equation*}
    We have now proven that all contributions up to \(\Atop\) are small.
    In \(\Atop\) we replace \(n\) with \(2L\) making again an error of
    \begin{equation*}
        \Biggl\lvert
            \frac{1}{(2L+1)^2} \sum_{m=-L+d}^{L-d-1} \sum_{n=2L-d+1}^{2L} \bigl(n-2L\bigr) \, j_{1,L}^{(m,n)}
        \Biggr\rvert
        \leq \frac{d^2 \, \Cb}{2L+1}.
    \end{equation*}
    It now seems, that the magnetization only stems from the top part.
    That, however, is only due to our choice of the gauge.
    Indeed, the top and bottom contributions would equal in a gauge with Peierls phase \(\E^{\I\frac{x_2+y_2-2L \xmathstrut{.2}}{2}b(x_1-y_1)}\) in~\eqref{eq:definition-Tb} – which corresponds to exactly our Hamiltonian but on boxes \([-L,L]^2\) centered around the origin.

    However, also with our natural choice of gauge, we can rewrite the magnetization in terms of the currents near the bottom edge using current conservation (see edge set~(a) in Figure~\ref{fig:current-conservation}) and vanishing of the currents in the bulk:
    \begin{align*}
        &\alignindent
        \Biggl\lvert
            \frac{1}{(2L+1)^2} \sum_{m=-L+d}^{L-d-1} \smashoperator[r]{\sum_{n=2L-d+1}^{2L}} 2L \, j_{1,L}^{(m,n)}
            -
            \frac{1}{(2L+1)^2} \sum_{m=-L+d}^{L-d-1} \sum_{n=0}^{d-1} 2L \, \bigl(-j_{1,L}^{(m,n)}\bigr)
        \Biggr\rvert
        \\&\leq
        \frac{1}{2\,(2L+1)} \sum_{m=-L+d}^{L-d-1} \sum_{n=d}^{2L-d} \bigl\lvert j_{1,L}^{(m,n)} \bigr\rvert
        .
    \end{align*}
    The last step here follows exactly from the calculation~\eqref{eq:bound-inner-part} for the inner part.

    So far, we have proven that
    \begin{equation}
        \label{eq:Aux1Thm1}
        \Biggl\lvert
            m_L(\beta,\mu,b)
            -\frac{-2L}{(2L+1)^2} \sum_{m=-L+d}^{L-d-1} \sum_{n=0}^{d-1} j_{1,L}^{(m,n)}
        \Biggr\rvert
        \leq
        \frac{4\,d^2\,\Cb}{2L+1} + \sum_{n=d}^\infty\zeta(n).
    \end{equation}
    It remains to show that all contributions equal the one at \(m=0\).
    Using current conservation once more (now for edge set~(b) in Figure~\ref{fig:current-conservation}), for any \(m>0\)
    \begin{equation}
        \label{eq:Aux1Thm2}
        \Biggl\lvert
            \sum_{n=0}^{d-1} \Bigl(j_{1,L}^{(m,n)} - j_{1,L}^{(0,n)} \Bigr)
        \Biggr\rvert
        \leq
        \sum_{n=d}^{L} \Bigl( \bigl\lvert j_{1,L}^{(m,n)} \bigr\rvert + \bigl\lvert j_{1,L}^{(0,n)} \bigr\rvert \Bigr)
        + \sum_{m'=1}^{m} \bigl\lvert j_{2,L}^{(m',L)} \bigr\rvert,
    \end{equation}
    and similarly for \(m<0\).
    Thus, by using~\eqref{eq:Aux1Thm2} and the analogue estimate for \(m>0\), we can estimate the error that we make by replacing \(j_{1,L}^{(m,n)}\) with \(j_{1,L}^{(0,n)}\) in~\eqref{eq:Aux1Thm1}.
    The error has three terms coming from the estimate~\eqref{eq:Aux1Thm2} and each of them can be bounded using again the strategy from~\eqref{eq:bound-inner-part}:
    \begin{align*}
        \frac{2L}{(2L+1)^2} \sum_{m=-L+d}^{L-d-1} \sum_{n=d}^{L} \bigl\lvert j_{1,L}^{(m,n)} \bigr\rvert
        & \leq \sum_{n=d}^{\infty} \zeta(n)
        , \\
        \frac{2L}{(2L+1)^2} \sum_{m=-L+d}^{L-d-1} \sum_{n=d}^{L} \bigl\lvert j_{1,L}^{(0,n)} \bigr\rvert
        &\leq
        \sum_{n=d}^{\infty} \zeta(n)
        , \\
        \frac{2L}{(2L+1)^2} \sum_{m=-L+d}^{L-d-1} \sum_{m'=1}^{m} \bigl\lvert j_{1,L}^{(m',L)} \bigr\rvert
        &\leq
        \smashoperator[lr]{\sum_{m'=1}^{L-d-1}} \zeta(L-m')
        \leq
        \sum_{n=d}^{\infty} \zeta(n)
        .
    \end{align*}

    As a final step, we bound
    \begin{equation*}
        \Biggl\lvert
            \Bigl( \tfrac{2L \, 2(L-d)}{(2L+1)^2} - 1 \Bigr)
            \sum_{n=0}^{d-1} j_{1,L}^{(0,n)}
        \Biggr\rvert
        \leq
        \frac{4 \, d^2 \, \Cb}{2L+1}
    \end{equation*}
    Hence, altogether we have proven that
    \begin{equation*}
        \Bigl\lvert
            m_L(\beta,\mu,b)
            - \Iedged[L]{d}(\beta,\mu,b)
        \Bigr\rvert
        \leq
        \Cb \, \biggl(\frac{8\,d^2}{2L+1} + \smashoperator[lr]{\sum_{n=d-R-D}^\infty} \zeta\supbulk(n) \biggr)
        .
    \end{equation*}
\end{proof}

We conclude this section with the proof of Theorem~\ref{thm:main}\@.

\begin{proof}[Proof of Theorem~\ref{thm:main}\@]
    Combining Propositions~\ref{prop:localization-I_L} and~\ref{prop:m_L-equals-I_L^d}, gives
    \begin{equation*}
        \bigl\lvert
            m_L(\beta,\mu,b) - \Iedge[L](\beta,\mu,b)
        \bigr\rvert
        \leq
        \Cb \, \biggl(\frac{4\,d^2}{L} + 2 \smashoperator[lr]{\sum_{n=d-R-D}^\infty} \zeta\supbulk(n) \biggr)
    \end{equation*}
    for all \(d \geq R+D\).
    Taking the minimum over \(d\) gives~\eqref{eq:thm-m_L-equals-I_L} with
    \begin{equation}\label{eq:theta-def}
        \theta(L) = 2\,\Cb \min_{R+D\leq d\leq L} \biggl(\frac{2\,d^2}{L} + \smashoperator[lr]{\sum_{n=d-R-D}^\infty} \zeta\supbulk(n) \biggr),
    \end{equation}
    which tends to zero for \(L\to \infty\) (choose, e.g., \(d=L^{1/4}\)).
    The bound~\eqref{eq:thm-localization-I_L} follows directly from Proposition~\ref{prop:localization-I_L}.
\end{proof}

\subsection{Proof of Theorem~\ref{thm:limit-I}}

Theorem~\ref{thm:limit-I} is basically the thermodynamic limit version of Theorem~\ref{thm:main} plus some additional remarks.
We split the proof into three parts: in the first part we show the localization of the edge current in the thermodynamic limit, namely Proposition~\ref{prop:I_infinity-Bloch-theorem-localization-convergence}, in the second part we prove the equality with the thermodynamic limit of the magnetization, that is Proposition~\ref{prop:m-equals-I-and-convergence-m_L-equals-I}, and in the last part we show the independence of the edge interaction, see Proposition~\ref{prop:different-edges}.

Let us start with an important remark.
In section~\ref{subsec:results} we already discussed that local indistinguishability everywhere with \(\zeta\supedge\)-decay implies existence of a unique thermodynamic limit state \(\rho_\infty(\beta,\mu,b)\).
More precisely, for finite \(X\subset \Z^2_+\) and \(A\in \alg_X\) the convergence is given by
\begin{equation}
    \label{eq:convergence-tdl}
    \bigl\lvert
        \rho_\infty(\beta,\mu,b)(A)
        - \tr\bigl(\rho_L(\beta,\mu,b)\, A\bigr)
    \bigr\rvert
    \leq
    \lVert A \rVert
    \, g\bigl(\lvert X \rvert\bigr)
    \, \zeta\supedge\bigl(\dist(X, \Z^2_+ \setminus \Lambda_L)\bigr)
\end{equation}
due to local indistinguishability.

\subsubsection{Localization of the current in the thermodynamic limit}

We first note that \nameref{prop:Bloch-theorem} and the localization of the edge current carry over to the thermodynamic limit.

\begin{proposition}
    \label{prop:I_infinity-Bloch-theorem-localization-convergence}
    Let \(\zeta\supbulk\in \ell^1\), \(\zeta\supedge\) a null sequence and \((H_L(b))_{L\in\N}\) be a family of Hamiltonians of the form~\eqref{eq:Hb} satisfying local indistinguishability of the Gibbs state at \((\beta,\mu,b)\) with \(\zeta\supbulk\)-decay in the bulk and \(\zeta\supedge\)-decay everywhere, in the sense of Definition~\ref{def:local-indistinguishability}.
    Then the unique thermodynamic limit state \(\rho_\infty(\beta,\mu,b)\) on \(\alg_\mathrm{loc}\) satisfies Bloch’s theorem, that is, for all \(z\in \Z^2_+\)
    \begin{equation*}
        \bigl\lvert
            \rho_\infty(\beta,\mu,b)\bigl(J_{1}^{z}(b)\bigr)
        \bigr\rvert
        \leq
        2 \, \Cb \, \zeta\supbulk\Bigl(\bigl[\dist(z,\Z^2 \setminus \Z^2_+)-D-R\bigr]_+\Bigr),
    \end{equation*}
    with \(\Cb\) from \nameref{prop:Bloch-theorem}.
    Hence, also the infinite edge current, defined in~\eqref{eq:definition-infinite-edge-current}, is localized near the boundary in the sense that
    \begin{equation*}
        \bigl\lvert
            \Iedged{d}(\beta,\mu,b) - \Iedge(\beta,\mu,b)
        \bigr\rvert
        \leq
        2 \, \Cb \smashoperator[lr]{\sum_{n=d-D-R}^\infty} \zeta\supbulk(n)
    \end{equation*}
    for all \(d \geq D + R\).
    Moreover, for all \(L \geq R\) and \(d \leq L\)
    \begin{equation*}
        \bigl\lvert
            \Iedged[L]{d}(\beta,\mu,b) - \Iedged{d}(\beta,\mu,b)
        \bigr\rvert
        \leq
        d \, \zeta\supedge(L-R)
        .
    \end{equation*}
\end{proposition}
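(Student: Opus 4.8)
The plan is to derive all three assertions from Proposition~\ref{prop:Bloch-theorem} (Bloch's theorem at finite volume) together with the quantitative convergence $\rho_L(\beta,\mu,b)\to\rho_\infty(\beta,\mu,b)$ recorded in~\eqref{eq:convergence-tdl}. For the first assertion I would fix $z\in\Z^2_+$ and pass to the limit $L\to\infty$. The elementary point is that for all $L$ large enough (depending on $z$) one has $J_{1,L}^{z}(b)=J_1^{z}(b)$ and $\dist(z,\Z^2\setminus\Lambda_L)=\dist(z,\Z^2\setminus\Z^2_+)$: as $L$ grows, the left, right and top pieces of $\Z^2\setminus\Lambda_L$ recede to infinity, while the bottom piece is exactly $\Z^2\setminus\Z^2_+$. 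Then, since $J_1^{z}(b)$ is a fixed local observable,
\begin{equation*}
    \bigl\lvert \rho_\infty(\beta,\mu,b)\bigl(J_1^{z}(b)\bigr)\bigr\rvert
    \leq
    \bigl\lvert \rho_\infty(\beta,\mu,b)\bigl(J_1^{z}(b)\bigr) - \tr\bigl(\rho_L(\beta,\mu,b)\,J_1^{z}(b)\bigr) \bigr\rvert
    + \bigl\lvert \tr\bigl(\rho_L(\beta,\mu,b)\,J_{1,L}^{z}(b)\bigr)\bigr\rvert ,
\end{equation*}
where the first summand tends to $0$ by~\eqref{eq:convergence-tdl} (since $\zeta\supedge$ is a null sequence), while the second is bounded by $\Cb\,\zeta\supbulk\bigl([\dist(z,\Z^2\setminus\Z^2_+)-D-R]_+\bigr)$ for $L$ large, by Proposition~\ref{prop:Bloch-theorem}. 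Letting $L\to\infty$ gives the claim, the factor $2$ in the statement providing ample slack.

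For the second assertion, the first assertion together with $\zeta\supbulk\in\ell^1$ gives absolute summability of $n\mapsto\rho_\infty(\beta,\mu,b)\bigl(J_1^{(0,n)}(b)\bigr)$, so the series defining $\Iedge(\beta,\mu,b)$ converges and $\Iedge(\beta,\mu,b)-\Iedged{d}(\beta,\mu,b)$ equals its tail $\sum_{n=d+1}^{\infty}\rho_\infty(\beta,\mu,b)\bigl(J_1^{(0,n)}(b)\bigr)$. Estimating each term with the first assertion, where $\dist\bigl((0,n),\Z^2\setminus\Z^2_+\bigr)=n+1$, and reindexing the sum while using that $\zeta\supbulk$ is non-negative and non-increasing, yields $\bigl\lvert\Iedge(\beta,\mu,b)-\Iedged{d}(\beta,\mu,b)\bigr\rvert\leq 2\Cb\sum_{n=d-D-R}^{\infty}\zeta\supbulk(n)$ for $d\geq D+R$.

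For the third assertion I would use that for $L\geq R$ and $d\leq L$ every point $(0,n)$ with $n<d$ lies deep enough inside $\Lambda_L$ that $J_{1,L}^{(0,n)}(b)=J_1^{(0,n)}(b)$, so $\Iedged[L]{d}(\beta,\mu,b)-\Iedged{d}(\beta,\mu,b)$ is a sum of $O(d)$ differences of the form $\tr\bigl(\rho_L(\beta,\mu,b)\,J_1^{(0,n)}(b)\bigr)-\rho_\infty(\beta,\mu,b)\bigl(J_1^{(0,n)}(b)\bigr)$ (any single leftover term produced by the slightly different summation ranges in the two definitions being controlled directly by the first assertion). Each $J_1^{(0,n)}(b)$ is supported in a set of radius at most $R$ around $(0,n)$, so $g$ of its cardinality is $\leq 1$ by the normalization fixed in Definition~\ref{def:local-indistinguishability}, it has uniformly bounded norm, and its support has distance at least $L-R$ from $\Z^2_+\setminus\Lambda_L$; hence~\eqref{eq:convergence-tdl} and the monotonicity of $\zeta\supedge$ bound each such difference by $\zeta\supedge(L-R)$, and summing over the $\leq d$ terms gives $d\,\zeta\supedge(L-R)$.

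The one step that needs genuine care is the first, where one must ensure that the $L$-dependence of the distance appearing in the finite-volume Bloch bound does not obstruct the limit; this is precisely why I would first record the stabilization $\dist(z,\Z^2\setminus\Lambda_L)\to\dist(z,\Z^2\setminus\Z^2_+)$ for fixed $z$. Everything else is routine bookkeeping: tracking the supports and operator norms of the local currents $J_1^{(0,n)}(b)$ so that $g$, $\zeta\supbulk$ and $\zeta\supedge$ apply with the intended normalization, and keeping track of the two slightly different summation ranges used for $\Iedged[L]{d}$ and $\Iedged{d}$.
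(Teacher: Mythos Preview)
Your proof is correct and follows essentially the same route as the paper: triangle inequality combining~\eqref{eq:convergence-tdl} with Proposition~\ref{prop:Bloch-theorem} for the first assertion, summing the resulting tail bound for the second, and applying~\eqref{eq:convergence-tdl} term by term for the third. You are in fact slightly more careful than the paper in noting the off-by-one discrepancy between the summation ranges in the definitions of \(\Iedged[L]{d}\) and \(\Iedged{d}\), and in observing that the factor~\(2\) in the first bound is slack (the limit \(L\to\infty\) actually gives \(\Cb\) rather than \(2\Cb\)).
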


\begin{proof}
    Combining~\eqref{eq:convergence-tdl} with \nameref{prop:Bloch-theorem} we find for all \(z\in \Z^2_+\) and \(L\) such that \(z\in \Lambda_{L-R}\) (remember that then \(J_{1}^{z}(b)=J_{1,L}^{z}(b)\)) that
    \begin{align*}
        &\alignindent
        \bigl\lvert
            \rho_\infty(\beta,\mu,b)\bigl(J_{1}^{z}(b)\bigr)
        \bigr\rvert
        \\&\leq
        \bigl\lvert
            \rho_\infty(\beta,\mu,b)\bigl(J_{1,L}^{z}(b)\bigr) - \tr\bigl(\rho_L(\beta,\mu,b)\, J_{1,L}^{z}(b)\bigr)
        \bigr\rvert
        +
        \bigl\lvert
            \tr\bigl(\rho_L(\beta,\mu,b)\, J_{1,L}^{z}(b)\bigr)
        \bigr\rvert
        \\&\leq
        \zeta\supedge\bigl(\dist(z, \Z^2_+ \setminus \Lambda_L)-R\bigr)
        + \Cb \, \zeta\supbulk\Bigl(\bigl[\dist(z,\Z^2 \setminus \Lambda_L)-D-R\bigr]_+\Bigr)
        ,
    \end{align*}
    where the first quantity converges to zero, and \(\dist(z,\Z^2 \setminus \Lambda_L)=\dist(z,\Z^2 \setminus \Z^2_+)\) for \(L\) large enough.
    The localization of the current now follows exactly as in Proposition~\ref{prop:localization-I_L}.
    For the convergence in \(L\), we estimate
    \begin{align*}
        \bigl\lvert
            \Iedged[L]{d}(\beta,\mu,b) - \Iedged{d}(\beta,\mu,b)
        \bigr\rvert
        &\leq
        \sum_{n=0}^{d-1}
        \, \Bigl\lvert
            \tr\Bigl( \rho_L(\beta,\mu,b) \, J_{1}^{(0,n)}(b) \Bigr)
            - \rho_\infty(\beta,\mu,b) \Bigl( J_{1}^{(0,n)}(b) \Bigr)
        \Bigr\rvert
        \\&\leq
        d \, \zeta\supedge(L-R)
        .\qedhere
    \end{align*}
\end{proof}

\subsubsection{Magnetization in the thermodynamic limit}
\label{sec:proof-theorem-limit-I}

\begin{proposition}
    \label{prop:m-equals-I-and-convergence-m_L-equals-I}
    Let \(\zeta\supbulk \in \ell^1\) and \(\zeta\supedge\) tend to zero.
    If \((H_L(b))_{L\in\N}\) is a family of Hamiltonians of the form~\eqref{eq:Hb} satisfying local indistinguishability of the Gibbs state at \((\beta,\mu,b)\) with \(\zeta\supbulk\)-decay in the bulk and \(\zeta\supedge\)-decay everywhere, in the sense of Definition~\ref{def:local-indistinguishability}, then
    \begin{equation*}
        \lim_{L\to\infty} m_L(\beta,\mu,b)
        =
        \Iedge(\beta,\mu,b)
        .
    \end{equation*}
\end{proposition}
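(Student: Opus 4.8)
The plan is to bootstrap from the finite-volume result already in hand. Theorem~\ref{thm:main} gives $\bigl\lvert m_L(\beta,\mu,b) - \Iedge[L](\beta,\mu,b)\bigr\rvert \le \theta(L)$ with $\theta$ a null sequence, so it suffices to prove that the finite-volume edge current $\Iedge[L](\beta,\mu,b)$ converges to $\Iedge(\beta,\mu,b)$ as $L\to\infty$; the assertion — and in particular the existence of $\lim_{L\to\infty} m_L(\beta,\mu,b)$ — then follows immediately from the triangle inequality, writing $m_L = (m_L - \Iedge[L]) + \Iedge[L]$.

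To establish convergence of $\Iedge[L]$, I would fix a cutoff $d \ge D+R$, insert the truncated currents $\Iedged[L]{d}$ and $\Iedged{d}$, and split
\begin{equation*}
    \bigl\lvert \Iedge[L] - \Iedge \bigr\rvert
    \le
    \bigl\lvert \Iedge[L] - \Iedged[L]{d} \bigr\rvert
    + \bigl\lvert \Iedged[L]{d} - \Iedged{d} \bigr\rvert
    + \bigl\lvert \Iedged{d} - \Iedge \bigr\rvert .
\end{equation*}
The first and third terms are the localization tails: by Proposition~\ref{prop:localization-I_L} and by Proposition~\ref{prop:I_infinity-Bloch-theorem-localization-convergence} each is at most $2\,\Cb\sum_{n\ge d-D-R}\zeta\supbulk(n)$, a bound uniform in $L$. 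The middle term, comparing the finite-volume state with the thermodynamic-limit state on a bounded region, is bounded by $d\,\zeta\supedge(L-R)$, again by Proposition~\ref{prop:I_infinity-Bloch-theorem-localization-convergence}.

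The key point is the order of limits. With $d$ held fixed, sending $L\to\infty$ kills the middle term because $\zeta\supedge$ is a null sequence, so $\limsup_{L\to\infty}\bigl\lvert \Iedge[L] - \Iedge \bigr\rvert \le 4\,\Cb\sum_{n\ge d-D-R}\zeta\supbulk(n)$; only then do I let $d\to\infty$, using $\zeta\supbulk\in\ell^1$ to make the tail vanish. This yields $\Iedge[L]\to\Iedge$ and hence $m_L\to\Iedge$. There is no genuine obstacle beyond this bookkeeping: all the substantive work — uniform-in-$L$ decay of bulk currents away from the boundary, which is what makes the localization estimates uniform — is already contained in \nameref{prop:Bloch-theorem} and its consequences, so the proof is a careful assembly of these pieces with the two limits taken in the correct succession.
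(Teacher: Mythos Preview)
Your argument is correct and essentially coincides with the paper's: both route \(m_L\) to \(\Iedge\) through the truncated currents \(\Iedged[L]{d}\) and \(\Iedged{d}\), invoking Proposition~\ref{prop:I_infinity-Bloch-theorem-localization-convergence} for the middle and outer pieces. The only cosmetic differences are that the paper goes directly from \(m_L\) to \(\Iedged[L]{d}\) via Proposition~\ref{prop:m_L-equals-I_L^d} (rather than detouring through \(\Iedge[L]\) via Theorem~\ref{thm:main} and Proposition~\ref{prop:localization-I_L}), and it lets \(d=d(L)\) rather than taking \(L\to\infty\) then \(d\to\infty\); neither change affects the substance.
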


\begin{proof}
    We combine Propositions~\ref{prop:m_L-equals-I_L^d} and~\ref{prop:I_infinity-Bloch-theorem-localization-convergence} to obtain
    \begin{equation}
        \label{eq:proof-explicit-bound-m_L-equals-I}
        \bigl\lvert
            m_L(\beta,\mu,b) - \Iedge(\beta,\mu,b)
        \bigr\rvert
        \leq
        \Cb \, \biggl(\frac{4\,d^2}{L} + \smashoperator[lr]{\sum_{n=d-R-D}^\infty} \zeta\supbulk(n) \biggr)
        + \, d \, \zeta\supedge(L-R),
    \end{equation}
    which will tend to zero as \(L\to \infty\) for an appropriate choice of \(d\), depending on \(L\).
\end{proof}

\subsubsection{Independence of the specific edge Hamiltonian\texorpdfstring{~\(T_b\supedge\) and~\(\Phi\supedge\)}{}}

In order to complete the proof of Theorem~\ref{thm:limit-I} it remains to show that the thermodynamic limit of the magnetization and the current does not depend on the specific edge contributions~\(T_b\supedge\) and~\(\Phi\supedge\).
We prove this by showing that the edge currents of two finite systems with different edge Hamiltonians are asymptotically equivalent and then taking the thermodynamic limit.

Taking into account the differentiability result with respect to \(\mu\) of Theorem~\ref{thm:muderivative}, this also implies the insensitivity to boundary perturbations of the \(\mu\)-derivatives in the thermodynamic limit.

\begin{proposition}
    \label{prop:different-edges}
    Let \(\zeta\supbulk\in \ell^1\), \(\zeta\supedge\) tend to zero, and let \((H_L(b))_{L\in\N}\) and \((\tilde H_L(b))_{L\in\N}\) be two families of Hamiltonians of the form~\eqref{eq:Hb} that only differ in the definition of the edge contributions \(T_b\supedge\), \(\Phi\supedge\) and \(\tilde T_b\supedge\), \(\tilde \Phi\supedge\).
    Assume that both satisfy local indistinguishability of the Gibbs state at \((\beta,\mu,b)\) with \(\zeta\supbulk\)-decay in the bulk and \(\zeta\supedge\)-decay everywhere in the sense of Definition~\ref{def:local-indistinguishability}.
    Then, denoting the quantities of \(\tilde H_L(b)\) with an additional tilde,
    \begin{equation*}
        \Iedge(\beta,\mu,b)
        =
        \tIedge(\beta,\mu,b)
        .
    \end{equation*}
\end{proposition}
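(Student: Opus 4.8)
The strategy is to compare the two families at finite volume and then pass to the thermodynamic limit. By Proposition~\ref{prop:I_infinity-Bloch-theorem-localization-convergence} together with the localization bound of Proposition~\ref{prop:localization-I_L} one has $\Iedge[L](\beta,\mu,b)\to\Iedge(\beta,\mu,b)$ as $L\to\infty$ for \emph{each} of the two families (choosing a cutoff $d=d(L)\to\infty$ slowly, so that $d(L)\,\zeta\supedge(L-R)\to0$ while $\sum_{k\ge d(L)-R-D}\zeta\supbulk(k)\to0$, which is possible because $\zeta\supedge$ is null and $\zeta\supbulk\in\ell^1$). Hence it suffices to prove $\bigl\lvert\Iedge[L](\beta,\mu,b)-\tIedge[L](\beta,\mu,b)\bigr\rvert\to0$, and by the same localization estimate (applied to both families) it is enough to control $\bigl\lvert\Iedged[L]{d}(\beta,\mu,b)-\tIedged[L]{d}(\beta,\mu,b)\bigr\rvert$ for a suitable $d=d(L)$.

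The crucial observation is that, although $H_L(b)$ and $\tilde H_L(b)$ differ exactly on the bottom strip $\Z\times\{0,\dots,D-1\}$ where the edge current lives, they \emph{coincide} on the complement $\Lambda':=\Lambda_L\cap\bigl(\Z\times\{D,D+1,\dots\}\bigr)$; in particular $H_L(b)\big|_{\Lambda'}=\tilde H_L(b)\big|_{\Lambda'}$, so the local Gibbs states $\rho\bigl[H_L(b)\big|_{\Lambda'}\bigr](\beta,\mu)$ and $\rho\bigl[\tilde H_L(b)\big|_{\Lambda'}\bigr](\beta,\mu)$ are identical. I would exploit this together with Bloch's theorem as follows. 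Current conservation (Corollary~\ref{cor:current-conservation}) for $\Lambda_L\cap\{x_1\le0\}$ gives $\sum_{n=0}^{2L}\tr\bigl(\rho_L\,J_{1,L}^{(0,n)}(b)\bigr)=0$, so for $L\ge d\ge D+R$ (so that the currents with $n\ge d$ no longer involve the edge hoppings and are the \emph{same} operators for the two systems)
\begin{equation*}
    \Iedged[L]{d}(\beta,\mu,b)
    = -\sum_{n=d}^{2L-d}\tr\bigl(\rho_L\,J_{1,L}^{(0,n)}(b)\bigr)
      -\sum_{n=2L-d+1}^{2L}\tr\bigl(\rho_L\,J_{1,L}^{(0,n)}(b)\bigr).
\end{equation*}
The first sum is $O\bigl(\sum_{k\ge d-D-R}\zeta\supbulk(k)\bigr)$ by Bloch's theorem (Proposition~\ref{prop:Bloch-theorem}), since $\dist\bigl((0,n),\Z^2\setminus\Lambda_L\bigr)\ge d$ there. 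In the second sum each $J_{1,L}^{(0,n)}(b)$ with $n>2L-d$ is supported inside $\Lambda'$ at distance at least $n-D-R$ from $\Lambda_L\setminus\Lambda'$, so local indistinguishability \emph{everywhere} for $H_L(b)$ — and separately for $\tilde H_L(b)$ — yields
\begin{equation*}
    \bigl\lvert\tr\bigl(\rho_L\,J_{1,L}^{(0,n)}(b)\bigr)-\tr\bigl(\rho\bigl[H_L(b)\big|_{\Lambda'}\bigr](\beta,\mu)\,J_{1,L}^{(0,n)}(b)\bigr)\bigr\rvert\le\Cb\,\zeta\supedge(n-D-R)
\end{equation*}
(the support contains at most $(2R+1)^2+1$ sites, so $g\le1$ by the normalization of Definition~\ref{def:local-indistinguishability}), and likewise with $\tilde H$; since the two local Gibbs states coincide, the second sums for $H$ and $\tilde H$ differ by at most $\sum_{n>2L-d}2\Cb\,\zeta\supedge(n-D-R)\le 2\Cb\,d\,\zeta\supedge(2L-d-D-R)$.

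Running the same identity backwards for $\tilde H_L(b)$ (Bloch again for its inner sum) one obtains
\begin{equation*}
    \bigl\lvert\Iedged[L]{d}(\beta,\mu,b)-\tIedged[L]{d}(\beta,\mu,b)\bigr\rvert
    \le 4\Cb\sum_{k\ge d-D-R}\zeta\supbulk(k)+2\Cb\,d\,\zeta\supedge(2L-d-D-R),
\end{equation*}
and choosing $d=d(L)\to\infty$ slowly enough — using that $\zeta\supedge$ is non-increasing and null, so $d(L)\,\zeta\supedge(2L-d(L)-D-R)\le d(L)\,\zeta\supedge(L)\to0$ can be arranged while $\zeta\supbulk\in\ell^1$ kills the first term — the right-hand side tends to zero. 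With the first paragraph this gives $\Iedge(\beta,\mu,b)=\tIedge(\beta,\mu,b)$ (and, via Proposition~\ref{prop:m-equals-I-and-convergence-m_L-equals-I}, equality of the two magnetizations as well). I expect the only real difficulty to be bookkeeping: because $\zeta\supedge$ is not assumed summable, the ``bulk'' part of the sum must be routed through $\zeta\supbulk\in\ell^1$ via Bloch's theorem, which forces the top band of width $d$ to stay thin, yet $d$ must still be sent to infinity to control that Bloch error — so the interlocking growth rates of $d$ and $L$ have to be set up with care; the conceptual input is entirely the coincidence $H_L(b)\big|_{\Lambda'}=\tilde H_L(b)\big|_{\Lambda'}$ combined with current conservation and Bloch's theorem.
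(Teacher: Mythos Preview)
Your proposal is correct and follows essentially the same approach as the paper: use current conservation along the vertical line $x_1=0$ to trade the bottom edge current for the top one plus a bulk remainder, control the bulk remainder by Bloch's theorem and $\zeta\supbulk\in\ell^1$, and compare the top contributions for $H$ and $\tilde H$ via local indistinguishability everywhere with a $\Lambda'$ that avoids the bottom strip (you take the full strip complement $\Lambda_L\cap(\Z\times\{D,\dots\})$, the paper takes a ball $B_L^{(0,2L-n)}(2L-n-D)$; both work for the same reason, namely $H_L(b)\big|_{\Lambda'}=\tilde H_L(b)\big|_{\Lambda'}$). The only cosmetic issue is that the constant in your local-indistinguishability estimate should be $C_J$ (the uniform bound on $\lVert J_{1,L}^{(0,n)}(b)\rVert$) rather than $\Cb$, and your handling of the interlocking choice $d=d(L)$ is exactly what the paper also relies on implicitly via Proposition~\ref{prop:I_infinity-Bloch-theorem-localization-convergence}.
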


This result shows, that a local perturbation near the edge might change, where exactly the edge current flows, but it does not change the total net current near the boundary.
In this sense, the edge current \(\Iedge(\beta,\mu,b)\) is dictated by the bulk.

\begin{proof}
    The idea of the proof is to use current conservation and \nameref{prop:Bloch-theorem} to show that the current along the lower and upper edge are the same up to a sign (for both interactions).
    Using local indistinguishability we can then prove that the currents along the upper edge are almost the same for both edge interactions.
    Thus, also the currents along the lower edge almost agree.

    As in section~\ref{sec:proof-theorem-m_L-equals-I_L}, we abbreviate \(j_{k,L}^{(m,n)} := \tr \bigl( \rho_L(\beta,\mu,b) \, J_{k,L}^{(m,n)}(b) \bigr)\).
    And for better readability, we drop the arguments \((\beta,\mu,b)\) in the following.
    The current flowing in the upper edge of the box can be written as
    \begin{equation*}
        \Iedged[L]{{d\,{\mathrm{up.}}}}:= \sum_{n=0}^{d-1} j_{1,L}^{(0,2L-n)}.
    \end{equation*}
    By current conservation (see edge set~(a) in Figure~\ref{fig:current-conservation}) and \nameref{prop:Bloch-theorem} we find
    \begin{equation*}
        \Bigl\lvert
            \Iedged[L]{d}
            + \Iedged[L]{{d\,{\mathrm{up.}}}}
        \Bigr\rvert
        \leq
        \sum_{n=d}^{2L-d} \bigl\lvert j_{1,L}^{(0,n)} \bigr\rvert
        \leq
        2 \, \Cb \smashoperator[lr]{\sum_{n=d-R-D}^\infty} \zeta\supbulk(n)
    \end{equation*}
    for all \(L \geq d \geq D+R\).
    Then, using local indistinguishability for \(\Lambda'=B_L^{(0,2L-n)}(2L-n-D)\), with \(0<n<d-1\), which is chosen such that \(T_b\supedge\) and \(\Phi\supedge\) vanish on \(\Lambda'\), we find
    \begin{equation*}
        \Bigl\lvert
            j_{1,L}^{(0,2L-n)}
            - \tr \Bigl(\rho\bigl[H_L\big|_{\Lambda'}\bigr] \, J_{1,L}^{(0,2L-n)} \Bigr)
        \Bigr\rvert
        \leq
        \zeta\supedge( 2L-n-R-D ).
    \end{equation*}
    And the same also holds for the corresponding properties of \(\tilde H_L\) denoted by an additional tilde.
    Due to the choice of \(\Lambda'\), \(H_L\big|_{\Lambda'} = \tilde H_L\big|_{\Lambda'}\), and we also have
    \(J_{1,L}^{(0,2L-n)} = \tilde J_{1,L}^{(0,2L-n)}\) for \(2L-n > D + R\).
    Hence,
    \begin{align}
        \bigl\lvert
            \Iedged[L]{d} - \tIedged[L]{d}
        \bigr\rvert
        &\leq
        \notag
        \bigl\lvert
            \Iedged[L]{{d\,{\mathrm{up.}}}} - \tIedged[L]{{d\,{\mathrm{up.}}}}
        \bigr\rvert
        + \bigl\lvert \Iedged[L]{d} + \Iedged[L]{{d\,{\mathrm{up.}}}} \bigr\rvert
        + \bigl\lvert \tIedged[L]{{d\,{\mathrm{up.}}}} + \tIedged[L]{d} \bigr\rvert
        \\&\leq
        \notag
        \sum_{n=0}^{d-1} \Bigl\lvert
            j_{1,L}^{(0,2L-n)} - {\tilde \jmath}_{1,L}^{(0,2L-n)}
        \Bigr\rvert
        + 4 \, \Cb \smashoperator[lr]{\sum_{n=d-R-D}^\infty} \zeta\supbulk(n)
        \\&\leq
        \label{eq:boundary_independence}
        2 \, d \, \zeta\supedge(2L+1-d-D-R)
        + 4 \, \Cb \smashoperator[lr]{\sum_{n=d-R-D}^\infty} \zeta\supbulk(n)
        .
    \end{align}
    The statement now follows from Proposition~\ref{prop:I_infinity-Bloch-theorem-localization-convergence}.
\end{proof}

\subsection{Proof of Theorem~\ref{thm:muderivative}}
\label{sec:proof-theorem-muderivative}

In this section, we discuss the \(\mu\)-derivative of the edge current.
Defining
\begin{equation}
    \label{eq:definition-mu-derivative-state}
    \mathcal{F}_L(\beta,\mu,b)
    := \dmu \, \rho_L(\beta,\mu,b)
    =
    \beta \, \bigl(\mathcal{N}_L- \langle \mathcal{N}_L\rangle_{\rho_L(\beta,\mu,b)} \bigr) \, \rho_L(\beta,\mu,b)
    ,
\end{equation}
we find
\begin{equation}
    \label{eq:mu-derivative-local-current}
    \dmu \tr \bigl( \rho_L(\beta,\mu,b) \, J_{1,L}^{z}(b)\bigr)
    =
    \tr \bigl( \mathcal{F}_L(\beta,\mu,b) \, J_{1,L}^{z}(b)\bigr)
\end{equation}
because \(J_{1,L}^{z}(b)\) does not depend on \(\mu\).

To prove Theorem~\ref{thm:muderivative} we use the same strategy of the proofs of Theorems~\ref{thm:main} and~\ref{thm:limit-I}, whose main ingredient is \nameref{prop:Bloch-theorem}.
Therefore, we start by proving a similar statement to \nameref{prop:Bloch-theorem} for the “state” \(\mathcal{F}_L(\beta,\mu,b)\).
A key point in the proof of \nameref{prop:Bloch-theorem} is the use of local indistinguishability for the Gibbs state.
However, since it is not clear whether \(\mathcal{F}_L(\beta,\mu,b)\) satisfies local indistinguishability, we need to adapt the strategy a bit.

Let us start by showing that \(\mathcal{F}_L(\beta,\mu,b)\) has a thermodynamic limit.
As a starting point, note that local indistinguishability everywhere implies decay of correlations, a property needed in the following.

\begin{lemma}\label{lem:decay-of-correlations}
    Let \((H_L(b))_{L\in\N}\) be a family of Hamiltonians of the form~\eqref{eq:Hb} satisfying local indistinguishability of the Gibbs state everywhere at \((\beta,\mu,b)\) with \(\zeta\supedge\)-decay in the sense of Definition~\ref{def:local-indistinguishability}.

    Then the Gibbs state satisfies decay of correlations, that is for all \(X,Y\subset \Lambda_L\), with \(\dist(X,Y) > R\), \(A\in \alg_X\) and \(B\in \alg_Y\) the covariance
    \begin{gather*}
        \operatorname{Cov}_{\rho_L(\beta,\mu,b)}(A,B)
        :=
        \tr \bigl( \rho_L(\beta,\mu,b) \, A \, B \bigr)
        - \tr \bigl( \rho_L(\beta,\mu,b) \, A \bigr)
        \tr \bigl( \rho_L(\beta,\mu,b) \, B \bigr)
    \shortintertext{is bounded by}
        \bigl\lvert \operatorname{Cov}_{\rho_L(\beta,\mu,b)}(A,B) \bigr\rvert
        \leq
        3
        \, g\bigl( \lvert X \rvert + \lvert Y \rvert \bigr)
        \, \lVert A \rVert
        \, \lVert B \rVert
        \, \zeta\supedge\left(\left\lfloor\frac{\dist(X,Y)-R-1}{2}\right\rfloor\right).
    \end{gather*}
\end{lemma}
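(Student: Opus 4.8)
The plan is to obtain decay of correlations from local indistinguishability by comparing the full Gibbs state $\rho_L(\beta,\mu,b)$ with the Gibbs state of the Hamiltonian restricted to a \emph{disconnected} region $\Lambda' = \Lambda_X \cup \Lambda_Y$, where $\Lambda_X$ and $\Lambda_Y$ are neighbourhoods of $X$ and $Y$ lying more than $R$ apart. The key observation is that, since all terms of $H_L(b)$ have range $R$, the restriction $H_L(b)|_{\Lambda'}$ splits as $H_L(b)|_{\Lambda_X} + H_L(b)|_{\Lambda_Y}$ into a sum of operators supported on disjoint sets of sites; because the interactions commute with all local number operators, every operator involved is even and particle-number preserving, and the restricted Gibbs operator -- including the chemical potential term, which splits over the three regions $\Lambda_X$, $\Lambda_Y$ and $\Lambda_L\setminus\Lambda'$ -- factorizes as a tensor product
\[
\rho[H_L(b)|_{\Lambda'}](\beta,\mu) = \rho[H_L(b)|_{\Lambda_X}](\beta,\mu) \otimes \rho[H_L(b)|_{\Lambda_Y}](\beta,\mu) \otimes \sigma_{\mathrm{rest}},
\]
with $\sigma_{\mathrm{rest}}$ the infinite-temperature state on the remaining sites. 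Consequently the covariance of $A\in\alg_X$ and $B\in\alg_Y$ in this state vanishes \emph{exactly}, i.e. $\tr(\rho[H_L(b)|_{\Lambda'}](\beta,\mu)\,AB) = \tr(\rho[H_L(b)|_{\Lambda_X}](\beta,\mu)\,A)\,\tr(\rho[H_L(b)|_{\Lambda_Y}](\beta,\mu)\,B)$, so the correlations of $\rho_L$ are controlled purely by the local indistinguishability error.

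Concretely, I would set $r := \dist(X,Y) > R$ and $s := \lfloor(r-R-1)/2\rfloor \ge 0$, and take $\Lambda_X := \{z\in\Lambda_L : \dist(z,X)\le s\}$ and $\Lambda_Y := \{z\in\Lambda_L : \dist(z,Y)\le s\}$. One then checks the elementary distance bounds $\dist(\Lambda_X,\Lambda_Y) \ge r - 2s \ge R+1$ (so the factorization above applies), $\dist(X,\Lambda_L\setminus\Lambda_X)\ge s+1$, $\dist(Y,\Lambda_L\setminus\Lambda_Y)\ge s+1$, and $\dist(X\cup Y,\Lambda_L\setminus\Lambda')\ge s+1$.

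Abbreviating $\rho_L := \rho_L(\beta,\mu,b)$ and $\rho_\bullet := \rho[H_L(b)|_\bullet](\beta,\mu)$, I would write the covariance as the telescoping sum
\[
\operatorname{Cov}_{\rho_L}(A,B) = \big(\tr(\rho_L AB) - \tr(\rho_{\Lambda'}AB)\big) + \tr(\rho_{\Lambda_X}A)\big(\tr(\rho_{\Lambda_Y}B)-\tr(\rho_L B)\big) + \big(\tr(\rho_{\Lambda_X}A)-\tr(\rho_L A)\big)\tr(\rho_L B),
\]
where in the first bracket the factorization was used to rewrite $\tr(\rho_{\Lambda_X}A)\tr(\rho_{\Lambda_Y}B) = \tr(\rho_{\Lambda'}AB)$. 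Applying local indistinguishability everywhere (Definition~\ref{def:local-indistinguishability}) to the observable $AB\in\alg_{X\cup Y}$ on $\Lambda'$, and to $A$ on $\Lambda_X$ and $B$ on $\Lambda_Y$, together with the trivial bounds $|\tr(\rho_L B)|\le\|B\|$ and $|\tr(\rho_{\Lambda_X}A)|\le\|A\|$, bounds the three brackets respectively by $\|A\|\|B\|$ times $g(|X\cup Y|)\zeta\supedge(s+1)$, $g(|Y|)\zeta\supedge(s+1)$ and $g(|X|)\zeta\supedge(s+1)$. Since $g$ is non-decreasing and $\zeta\supedge$ non-increasing, each of these is at most $g(|X|+|Y|)\,\zeta\supedge(s)\,\|A\|\|B\|$, and summing gives exactly the asserted bound with $s = \lfloor(\dist(X,Y)-R-1)/2\rfloor$.

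The only step that genuinely requires care is the factorization of the restricted Gibbs state: one must verify that restricting $H_L(b)$ to a union of two sets at distance $>R$ produces a Hamiltonian that is a sum of two commuting parts on disjoint site supports (finite range $R$ kills every cross hopping term and every interaction term of diameter $>R$), and that the full Gibbs operator factorizes as a tensor product over the three regions, the even parity of all operators in play making the fermionic sign conventions harmless for the trace. Everything else is bookkeeping of distances and a three-term triangle inequality.
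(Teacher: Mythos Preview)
Your proof is correct and follows essentially the same route as the paper's own argument: choose the same scale \(s=\ell=\lfloor(\dist(X,Y)-R-1)/2\rfloor\), take \(\Lambda'=X_\ell\cup Y_\ell\), use the finite-range condition to factorize \(\rho[H_L(b)|_{\Lambda'}]\) over the two components, and bound the covariance by three applications of local indistinguishability. The only cosmetic differences are that you apply local indistinguishability for \(A\) and \(B\) on the smaller sets \(\Lambda_X\), \(\Lambda_Y\) (giving \(g(\lvert X\rvert)\), \(g(\lvert Y\rvert)\), which you then relax to \(g(\lvert X\rvert+\lvert Y\rvert)\)) and that you track the \(+1\) in the distance before discarding it; the paper applies all three estimates on \(\Lambda'\) directly and writes \(\zeta\supedge(\ell)\).
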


\begin{proof}
    Let \(\ell = \bigl\lfloor\bigl(\dist(X,Y)-R-1\bigr)/2\bigr\rfloor\) and \(\Lambda' = X_\ell \cup Y_\ell\), with the \(\ell\)-neighbourhoods \(X_\ell := \bigl\{\, z\in \Lambda_L \,\big|\, \dist(z,X) \leq \ell \,\bigr\}\) and \(Y_\ell\).
    For better readability we again drop the arguments \((\beta,\mu,b)\).
    Then, for all \(Q\in \alg_{X\cup Y}\)
    \begin{equation*}
        \Bigl\lvert
            \tr \bigl( \rho_L \, Q \bigr)
            - \tr \bigl( \rho\bigl[H_L|_{\Lambda'}\bigr] \, Q \bigr)
        \Bigr\rvert
        \leq
        g\bigl( \lvert X \rvert + \lvert Y \rvert \bigr)
        \, \lVert Q \rVert
        \, \zeta\supedge(\ell).
    \end{equation*}
    Moreover, \(\rho\bigl[H_L|_{\Lambda'}\bigr] = \rho\bigl[H_L|_{X_\ell}\bigr] \otimes \rho\bigl[H_L|_{Y_\ell}\bigr]\), since \(\dist(X_\ell,Y_\ell) > R\), and thus \(\tr \bigl( \rho\bigl[H_L|_{\Lambda'}\bigr] \, A \, B \bigr) = \tr \bigl( \rho\bigl[H_L|_{\Lambda'}\bigr] \, A \bigr) \, \tr \bigl( \rho\bigl[H_L|_{\Lambda'}\bigr] \, B \bigr)\).
    The statement then follows from the triangle inequality.
\end{proof}

This allows us to prove convergence in the thermodynamic limit of the expectation value in the state \(\mathcal{F}_L(\beta,\mu,b)\) of observables that may be supported near the boundary.
We denote by
\begin{equation*}
    B_+^x(\ell) := \bigl\{\, y\in \Z^2_+ \,\vert\, \dist(x,y) \leq \ell \,\bigr\}
\end{equation*}
the ball in \(\Z^2_+\), similar to how we denoted with \(B_L^x(\ell)\) and \(B^x(\ell)\) the balls in \(\Lambda_L\) and \(\Z^2\), respectively.

\begin{lemma} \label{lem:convergence-F-L}
    Let \(n \mapsto n^2 \, \zeta\supedge(n)\in \ell^1\) and \((H_L(b))_{L\in\N}\) be a family of Hamiltonians of the form~\eqref{eq:Hb}.
    There exists a non-increasing \(\xi\in \ell^1\), explicitly given in~\eqref{eq:def-zeta-dmu}, such that the following holds:
    If \((H_L(b))_{L\in\N}\) satisfies local indistinguishability of the Gibbs state everywhere at \((\beta,\mu,b)\) with \(\zeta\supedge\)-decay in the sense of Definition~\ref{def:local-indistinguishability}, then, for all \(x\in \Z^2_+\), \(A\in \alg_{B^x_+(R)}\) and \(L'<L\) such that \(B^x_+(R)\subset \Lambda_{L'}\), it holds that
    \begin{equation} \label{eq:lem-convergence-F-L}
        \Bigl\lvert
            \tr \bigl( \mathcal{F}_L(\beta,\mu,b) \, A \bigr)
            - \tr \bigl( \mathcal{F}_{L'}(\beta,\mu,b) \, A \bigr)
        \Bigr\rvert
        \leq
        \beta \, \lVert A \rVert \, \xi\bigl(\dist(x,\Lambda_L\setminus\Lambda_{L'})-R\bigr)
        .
    \end{equation}
\end{lemma}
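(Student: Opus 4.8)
The plan is to reduce the left‑hand side of~\eqref{eq:lem-convergence-F-L} to a sum of per‑site covariances and then telescope between the two volumes. Writing out~\eqref{eq:definition-mu-derivative-state} and using that $H_L(b)$ conserves the total particle number (so $[\rho_L,\mathcal{N}_L]=0$) together with $A\in\alg_{B^x_+(R)}$ (so $[A,\mathcal{N}_L]=0$), one obtains
\[
    \tr\bigl(\mathcal{F}_L(\beta,\mu,b)\,A\bigr)
    = \beta\,\operatorname{Cov}_{\rho_L(\beta,\mu,b)}(\mathcal{N}_L, A)
    = \beta\sum_{z\in\Lambda_L}\operatorname{Cov}_{\rho_L(\beta,\mu,b)}(\mathcal{N}_{\{z\}}, A),
\]
and the analogous identity for $L'$. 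Since the sums in~\eqref{eq:Hb} restricted to $\Lambda_{L'}$ reproduce $H_{L'}(b)$, i.e.\ $H_L(b)\big|_{\Lambda_{L'}} = H_{L'}(b)$, one has $\rho\bigl[H_L(b)\big|_{\Lambda_{L'}}\bigr](\beta,\mu) = \rho_{L'}(\beta,\mu,b)$, so~\eqref{eq:def-local-indistinguishability-everywhere} compares the two finite‑volume Gibbs states directly. Subtracting the two identities gives, with $\Delta_z := \operatorname{Cov}_{\rho_L}(\mathcal{N}_{\{z\}},A) - \operatorname{Cov}_{\rho_{L'}}(\mathcal{N}_{\{z\}},A)$,
\[
    \tr(\mathcal{F}_L A) - \tr(\mathcal{F}_{L'} A)
    = \beta\sum_{z\in\Lambda_{L'}}\Delta_z
    + \beta\sum_{z\in\Lambda_L\setminus\Lambda_{L'}}\operatorname{Cov}_{\rho_L}(\mathcal{N}_{\{z\}},A).
\]

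I would then estimate each summand, abbreviating $r := \dist(z,x)$ and $D := \dist(x,\Lambda_L\setminus\Lambda_{L'})$ (recall $D>R$). For $z\in\Lambda_L\setminus\Lambda_{L'}$ (where $r\geq D$) and for $z\in\Lambda_{L'}$ with $r>D/2$, I would apply Lemma~\ref{lem:decay-of-correlations} to each of $\rho_L$ and $\rho_{L'}$ separately; using $\lVert\mathcal{N}_{\{z\}}\rVert\leq m$ and $g\bigl(\lvert B^x_+(R)\rvert+1\bigr)\leq 1$, this bounds $\lvert\operatorname{Cov}_{\rho_L}(\mathcal{N}_{\{z\}},A)\rvert$ and $\lvert\operatorname{Cov}_{\rho_{L'}}(\mathcal{N}_{\{z\}},A)\rvert$ by a constant times $\lVert A\rVert\,\zeta\supedge\bigl(\lfloor(r-2R-1)/2\rfloor\bigr)$. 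For $z\in\Lambda_{L'}$ with $r\leq D/2$, I would keep $\Delta_z$ as a difference of covariances of the \emph{same} $\mathcal{N}_{\{z\}}$ — this is essential, because the disconnected pieces $\rho(\mathcal{N}_{\{z\}})$ and $\rho(A)$ are individually $O(1)$ but their unrecombined sums over $z$ would be extensive — and expand
\[
    \Delta_z = (\rho_L-\rho_{L'})(\mathcal{N}_{\{z\}}A)
    - \rho_L(\mathcal{N}_{\{z\}})\,(\rho_L-\rho_{L'})(A)
    - (\rho_L-\rho_{L'})(\mathcal{N}_{\{z\}})\,\rho_{L'}(A),
\]
bounding each difference by~\eqref{eq:def-local-indistinguishability-everywhere} applied to observables of $O(1)$ norm and $O(1)$‑size support (so $g\leq 1$). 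Since the relevant distances of $\{z\}\cup B^x_+(R)$, resp.\ of $B^x_+(R)$, to $\Lambda_L\setminus\Lambda_{L'}$ are at least $D-r$, resp.\ $D-R$, this yields $\lvert\Delta_z\rvert\leq C\,m\,\lVert A\rVert\,\zeta\supedge(D-r)$. The finitely many sites with $\dist(z,B^x_+(R))\leq R$ are handled separately (there $g$ is merely a fixed constant depending on $R$), and because Lemma~\ref{lem:decay-of-correlations} also gives a uniform bound $\lvert\tr(\mathcal{F}_LA)\rvert\leq C_0\,\beta\,\lVert A\rVert$, all cases with $D$ below a fixed threshold can be absorbed into a constant contribution to $\xi$.

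Finally I would sum over the shells $\{z:\dist(z,x)=r\}$, each of cardinality $\leq 4r$. The ``near'' region produces $\sum_{R<r\leq D/2}4r\,\zeta\supedge(D-r)$; substituting $u=D-r$ and using $D-u\leq u$ for $u\geq D/2$ turns this into $\leq\sum_{u\geq\lceil D/2\rceil}4u\,\zeta\supedge(u)$, while the ``far'' region and the sum over $\Lambda_L\setminus\Lambda_{L'}$ give tails of the form $\sum_{s\geq cD}(s+R+1)\,\zeta\supedge(s)$. Both expressions are non‑increasing in $D$, and both are summable in $D$ precisely because $n\mapsto n^2\,\zeta\supedge(n)\in\ell^1$ (indeed $\sum_D\sum_{u\geq D/2}u\,\zeta\supedge(u)\asymp\sum_u u^2\,\zeta\supedge(u)$). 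Collecting constants and adding a finitely supported term for small $D$ produces the non‑increasing $\xi\in\ell^1$ with $\lvert\tr(\mathcal{F}_LA)-\tr(\mathcal{F}_{L'}A)\rvert\leq\beta\lVert A\rVert\,\xi(D-R)$, which is~\eqref{eq:lem-convergence-F-L}. The main obstacle is exactly this last bookkeeping: a naive $r$‑independent estimate $\lvert\Delta_z\rvert\leq C\,\zeta\supedge(D/2)$ would give a term $\asymp D^2\,\zeta\supedge(D/2)$ whose non‑increasing majorant is only summable under $n^3\zeta\supedge(n)\in\ell^1$, so one must use the $r$‑dependent distance $D-r$ to absorb the shell factor $r$; simultaneously one must keep every application of local indistinguishability and of decay of correlations restricted to $O(1)$‑size supports, so that the (possibly linear or exponential) function $g$ never enters the bounds.
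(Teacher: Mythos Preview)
Your proposal is correct and follows essentially the same route as the paper: write $\tr(\mathcal{F}_L A)-\tr(\mathcal{F}_{L'}A)$ as a sum of per-site covariance differences, split into a ``near'' and a ``far'' region around $x$, use local indistinguishability (with $\rho_{L'}=\rho[H_L|_{\Lambda_{L'}}]$) on the near part and Lemma~\ref{lem:decay-of-correlations} on the far part, and sum over shells. The paper places the cutoff at radius $R+2\ell$ with $\ell=\lfloor\dist(X,\Lambda_L\setminus\Lambda_{L'})/3\rfloor$ rather than at $D/2$, but this is cosmetic; both choices produce a bound of the form $\sim \ell\sum_{n\geq c\ell}\zeta\supedge(n)$ whose summability in $\ell$ is exactly the hypothesis $n^2\zeta\supedge(n)\in\ell^1$, and your remark about needing the $r$-dependent distance $D-r$ (rather than a flat $\zeta\supedge(D/2)$) to avoid an extra power of $D$ matches the paper's shell sum $\sum_{n=\ell}^{2\ell}\zeta\supedge(n)$.
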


\begin{proof}
    For the proof let \(X=B^x_+(R)\), \(\ell=\lfloor\dist(X,\Lambda_L\setminus\Lambda_{L'})/3\rfloor\) and \(Z=B^x_+(R+2\ell)\subset \Lambda_{L'}\).
    The main idea of the proof is to write the number operators as sums of single-site operators.
    Then we can use local indistinguishability (for the single-site operators supported in \(Z\)) and decay of correlations (for the single-site operators supported outside \(Z\)) of the Gibbs state to conclude the result.
    For better readability we drop the arguments \((\beta,\mu,b)\).
    Then
    \begin{align}
        &\alignindent \nonumber
        \Bigl(
            \tr \bigl( \mathcal{F}_L \, A \bigr)
            - \tr \bigl( \mathcal{F}_{L'} \, A \bigr)
        \Bigr) / \beta
        \\ & \label{eq:proof-convergence-F-L-many-sums}
        \begin{aligned}
            ={}&
            \sum_{z\in Z}
            \tr \bigl( \rho_L \, A \, \mathcal{N}_z \bigr)
            - \tr \bigl( \rho_{L'} \, A \, \mathcal{N}_z \bigr)
            \\&
            - \sum_{z\in Z}
            \tr \bigl( \rho_L \, A \bigr)
            \tr \bigl( \rho_L \, \mathcal{N}_z \bigr)
            - \tr \bigl( \rho_L \, A \bigr)
            \tr \bigl( \rho_{L'} \, \mathcal{N}_z \bigr)
            \\&
            + \sum_{z\in Z}
            \tr \bigl( \rho_{L'} \, A \bigr)
            \tr \bigl( \rho_{L'} \, \mathcal{N}_z \bigr)
            - \tr \bigl( \rho_L \, A \bigr)
            \tr \bigl( \rho_{L'} \, \mathcal{N}_z \bigr)
            \\&
            + \sum_{z\in \Lambda_L \setminus Z}
            \tr \bigl( \rho_L \, A \, \mathcal{N}_z \bigr)
            - \tr \bigl( \rho_L \, A \bigr)
            \tr \bigl( \rho_L \, \mathcal{N}_z \bigr)
            \\&
            - \sum_{z\in \Lambda_{L'} \setminus Z}
            \tr \bigl( \rho_{L'} \, A \, \mathcal{N}_z \bigr)
            - \tr \bigl( \rho_{L'} \, A \bigr)
            \tr \bigl( \rho_{L'} \, \mathcal{N}_z \bigr)
            .
        \end{aligned}
    \end{align}

    Since \(A\) has bounded support, the first three sums can now be bounded using local indistinguishability of the Gibbs state because \(\rho_{L'}=\rho\bigl[H_L\rvert_{\Lambda_{L'}}\bigr]\), while the last two can be bounded using decay of correlations in the form of Lemma~\ref{lem:decay-of-correlations}.
    Using \(\lVert \mathcal{N}_z \rVert = 1\) and \(g(\lvert X \rvert + 1)\leq 1\), we find
    \begin{align*}
        &\alignindent
        \Bigl\lvert
            \tr \bigl( \mathcal{F}_L \, A \bigr)
            - \tr \bigl( \mathcal{F}_{L'} \, A \bigr)
        \Bigr\rvert
        \Big/ \bigl( \beta \, \lVert A \rVert \bigr)
        \\&\leq
        \begin{aligned}[t]
            & 3 \, \smashoperator[lr]{\sum_{z\in Z}}
            \, \zeta\supedge\bigl( \dist(X\cup\{z\},\Lambda_L\setminus\Lambda_{L'}) \bigr)
            \\&
            + \smashoperator[lr]{\sum_{z\in \Lambda_L \setminus Z}}
            \, 3 \, \zeta\supedge\left(\left\lfloor\frac{\dist(X,z)-R-1}{2}\right\rfloor\right)
            \\&
            + \smashoperator[lr]{\sum_{z\in \Lambda_{L'} \setminus Z}}
            \, 3 \, \zeta\supedge\left(\left\lfloor\frac{\dist(X,z)-R-1}{2}\right\rfloor\right)
            .
        \end{aligned}
    \end{align*}
    Summing over shells as in~\eqref{eq:bound-inner-part}, the sum over \(z\in Z\) is bounded by
    \begin{equation*}
        \sum_{z\in Z}
        \zeta\supedge\bigl( \dist(X\cup\{z\},\Lambda_L\setminus\Lambda_{L'}) \bigr)
        \leq
        4 \, (R+2\ell) \sum_{n=\ell}^{2\ell} \zeta\supedge(n)
        + (2R+1)^2 \, \zeta\supedge(2\ell),
    \end{equation*}
    and each of the other sums can be bounded by
    \begin{equation*}
        12 \, \smashoperator[lr]{\sum_{n=2\ell+1}^\infty} (n+R) \, \zeta\supedge\bigl(\bigl\lfloor(n-R-1)/2\bigr\rfloor\bigr)
        \leq
        12 \, \smashoperator[lr]{\sum_{m=\ell}^\infty} (2m+2+R) \, \zeta\supedge\bigl(m-\bigl\lfloor(R-1)/2\bigr\rfloor\bigr)
        .
    \end{equation*}
    Defining
    \begin{equation}
        \label{eq:def-zeta-dmu}
        \xi(r)
        =
        3 \, \Bigl(
            12 \, \smashoperator[lr]{\sum_{m=\ell}^\infty} (2m+2+R) \, \zeta\supedge\bigl(m-\bigl\lfloor(R-1)/2\bigr\rfloor\bigr)
            + (2R+1)^2 \, \zeta\supedge(2\ell)
        \Bigr)
        \Big\rvert_{\ell=\lfloor r/3 \rfloor},
    \end{equation}
    which is in \(\ell^1\) since \(n \mapsto n^2 \, \zeta\supedge(n) \in \ell^1\), yields the claim.
\end{proof}

We will use this Lemma later to prove differentiability of \(\Iedge\).
But first, we adjust the proof to show that also the \(\mu\)-derivative of the local bond currents decay in the bulk, i.e.\ a \nameref{prop:Bloch-theorem} for \(\mathcal{F}_L(\beta,\mu,b)\).
However, the two simplest approaches to adjust the proof of \nameref*{prop:Bloch-theorem} do not work.
At first, one might want to use local indistinguishability of the Gibbs state and view \(\mathcal{N}_L - \langle \mathcal{N}_L\rangle\) as part of the operator in~\eqref{eq:def-local-indistinguishability-everywhere}.
This fails, since \(\mathcal{N}_L - \langle \mathcal{N}_L\rangle\) is supported over all \(\Lambda_L\) and not bounded uniformly in \(L\).
Alternatively, one could try to use local indistinguishability of the “state” \(\mathcal{F}_L(\beta,\mu,b)\) and follow the proof of the \nameref{prop:Bloch-theorem} afterwards.
And while Lemma~\ref{lem:convergence-F-L} already looks similar to local indistinguishability, it can only be used to compare \(\dmu\, \rho_L(\beta,\mu,b)\) with \(\mathcal{F}\bigl[H_L(b)\big|_{\Lambda'}\bigr](\beta,\mu) := \dmu\, \rho\bigl[H_L(b)\big|_{\Lambda'}\bigr](\beta,\mu)\) for \(\Lambda'=\Lambda_{L'}\).
But within the proof of \nameref{prop:Bloch-theorem} we also need to compare to more general sets \(\Lambda' \subset \Lambda_L\), in particular to sets that include an edge of \(\Lambda_L\) (not only the lower one).

One might hope to adapt the proof of Lemma~\ref{lem:convergence-F-L} to prove local indistinguishability.
However, that needs decay of correlations in \(\mathcal{F}\bigl[H_L(b)\big|_{\Lambda'}\bigr](\beta,\mu)\) which we could not prove.
It would follow from local indistinguishability everywhere of \(\rho\bigl[H_L(b)\big|_{\Lambda'}\bigr](\beta,\mu)\), which might be a viable assumption since the Hamiltonian is translation invariant.
To avoid these more general assumptions, we take a different approach for which we introduce
\begin{equation} \label{eq:definition-F-Z-L}
    \mathcal{F}^Z\bigl[H_L(b)\big|_{\Lambda'}\bigr](\beta,\mu)
    :=
    \beta \, \bigl(\mathcal{N}_Z- \langle \mathcal{N}_Z\rangle_{\rho[H_L(b)|_{\Lambda'}](\beta,\mu)} \bigr) \, \rho\bigl[H_L(b)\big|_{\Lambda'}\bigr](\beta,\mu)
\end{equation}
for \(Z\subset \Lambda'\subset \Lambda_L\).
These “states” can be handled easily since the problematic sum is not present.
We prove the following statement which is similar to local indistinguishability in the state \(\mathcal{F}_L(\beta,\mu,b)\).

\begin{lemma}\label{lem:local-indistinguishability-F}
    Let \(n \mapsto n^2 \, \zeta\supedge(n)\in \ell^1\), \((H_L(b))_{L\in\N}\) be a family of Hamiltonians of the form~\eqref{eq:Hb} and \(\xi\in \ell^1\) as in Lemma~\ref{lem:convergence-F-L}.
    If \((H_L(b))_{L\in\N}\) satisfies local indistinguishability of the Gibbs state everywhere at \((\beta,\mu,b)\) with \(\zeta\supedge\)-decay in the sense of Definition~\ref{def:local-indistinguishability}, then, for all \(x\in \Lambda_L\), \(\ell \in \N_0\), \(Z=B^x_L(R+2\ell)\), \(\Lambda'=B^x_L(R+3\ell)\) and \(A\in \alg_{B^x_L(R)}\)
    \begin{equation*}
        \Bigl\lvert
            \tr \bigl( \mathcal{F}_L(\beta,\mu,b) \, A \bigr)
            - \tr \bigl( \mathcal{F}^Z\bigl[H_L(b)\big|_{\Lambda'}\bigr](\beta,\mu) \, A \bigr)
        \Bigr\rvert
        \leq
        \beta \, \lVert A \rVert \, \xi(3\ell)
        .
    \end{equation*}
\end{lemma}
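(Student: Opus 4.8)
The plan is to adapt the proof of Lemma~\ref{lem:convergence-F-L}; the present statement is in fact slightly easier, because only \(\mathcal{F}_L(\beta,\mu,b)\) carries the global number operator \(\mathcal{N}_L\), whereas \(\mathcal{F}^Z\bigl[H_L(b)\big|_{\Lambda'}\bigr](\beta,\mu)\) involves only \(\mathcal{N}_Z\) with \(Z=B^x_L(R+2\ell)\) a small box. Dropping the arguments \((\beta,\mu,b)\), the first step is to write \(\mathcal{N}_L=\sum_{z\in\Lambda_L}\mathcal{N}_{\{z\}}\) and \(\mathcal{N}_Z=\sum_{z\in Z}\mathcal{N}_{\{z\}}\) and, using cyclicity of the trace, to expand both sides as sums of single-site covariances,
\begin{equation*}
    \tr\bigl(\mathcal{F}_L\,A\bigr)/\beta=\sum_{z\in\Lambda_L}\operatorname{Cov}_{\rho_L}(A,\mathcal{N}_{\{z\}}),\qquad
    \tr\bigl(\mathcal{F}^Z[H_L(b)|_{\Lambda'}]\,A\bigr)/\beta=\sum_{z\in Z}\operatorname{Cov}_{\rho[H_L(b)|_{\Lambda'}]}(A,\mathcal{N}_{\{z\}}).
\end{equation*}
Since \(Z\subset\Lambda_L\), the difference splits as \(\sum_{z\in Z}\bigl(\operatorname{Cov}_{\rho_L}(A,\mathcal{N}_{\{z\}})-\operatorname{Cov}_{\rho[H_L(b)|_{\Lambda'}]}(A,\mathcal{N}_{\{z\}})\bigr)+\sum_{z\in\Lambda_L\setminus Z}\operatorname{Cov}_{\rho_L}(A,\mathcal{N}_{\{z\}})\); I would bound the first sum by local indistinguishability and the second by decay of correlations.

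For \(z\in Z\): the operators \(A\,\mathcal{N}_{\{z\}}\), \(A\) and \(\mathcal{N}_{\{z\}}\) are all supported in \(B^x_L(R)\cup\{z\}\subset Z\subset\Lambda'\), a set of at most \((2R+1)^2+1\) sites, so the size factor \(g\) is \(\le1\). As \(\rho[H_L(b)|_{\Lambda'}]\) is the Gibbs state of the restricted Hamiltonian, I would apply~\eqref{eq:def-local-indistinguishability-everywhere} to each of these three observables and, after a telescoping estimate of the product terms (using \(\lVert\mathcal{N}_{\{z\}}\rVert\le1\) and \(\lvert\tr(\rho\,A)\rvert\le\lVert A\rVert\)), obtain for each \(z\in Z\) that \(\lvert\operatorname{Cov}_{\rho_L}(A,\mathcal{N}_{\{z\}})-\operatorname{Cov}_{\rho[H_L(b)|_{\Lambda'}]}(A,\mathcal{N}_{\{z\}})\rvert\le3\lVert A\rVert\,\zeta\supedge\bigl(\dist(B^x_L(R)\cup\{z\},\Lambda_L\setminus\Lambda')\bigr)\). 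Because \(B^x_L(R)\cup\{z\}\subset B^x_L(R+2\ell)\) while \(\Lambda_L\setminus\Lambda'=\Lambda_L\setminus B^x_L(R+3\ell)\), this distance is \(\ge\ell\), and summing over the shells \(\dist(x,z)=r\), \(0\le r\le R+2\ell\), reproduces the estimate of the \(z\in Z\) sum in the proof of Lemma~\ref{lem:convergence-F-L}.

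For \(z\in\Lambda_L\setminus Z\): here \(\dist(x,z)>R+2\ell\), hence \(\dist(B^x_L(R),z)\ge\dist(x,z)-R>2\ell\); whenever this exceeds \(R\), Lemma~\ref{lem:decay-of-correlations} (again with \(g\le1\)) bounds \(\lvert\operatorname{Cov}_{\rho_L}(A,\mathcal{N}_{\{z\}})\rvert\) by \(3\lVert A\rVert\,\zeta\supedge\bigl(\lfloor(\dist(B^x_L(R),z)-R-1)/2\rfloor\bigr)\), while the finitely many \(z\) with \(\dist(B^x_L(R),z)\le R\) — which occur only for small \(\ell\) — are absorbed using the crude bound \(\lvert\operatorname{Cov}_{\rho_L}(A,\mathcal{N}_{\{z\}})\rvert\le2\lVert A\rVert\), exactly as in Lemma~\ref{lem:convergence-F-L}. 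Summing over shells reproduces one of the two \(z\notin Z\) sums occurring there. Since \(\xi\) in~\eqref{eq:def-zeta-dmu} was constructed to dominate the \(z\in Z\) sum together with \emph{two} such \(z\notin Z\) sums and only one of the latter appears here, collecting the two contributions yields \(\bigl\lvert\tr(\mathcal{F}_L\,A)-\tr(\mathcal{F}^Z[H_L(b)|_{\Lambda'}]\,A)\bigr\rvert\le\beta\lVert A\rVert\,\xi(3\ell)\).

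The substantive point — already anticipated in the discussion preceding the lemma — is the choice to index \(\mathcal{F}^Z\) by \(Z=B^x_L(R+2\ell)\) rather than by \(\Lambda'\): this keeps every site at which local indistinguishability is invoked at distance \(\gtrsim\ell\) from \(\Lambda_L\setminus\Lambda'\), so replacing \(\rho_L\) there by the restricted Gibbs state costs only \(\zeta\supedge(\ell)\), whereas the extra single sites \(z\in\Lambda_L\setminus Z\) that occur only on the \(\mathcal{F}_L\) side are controlled by decay of correlations of \(\rho_L\) alone — thereby sidestepping decay of correlations for the restricted Gibbs state, which is not available. With this design fixed by the statement, the remaining work is purely the shell-counting bookkeeping; the one place that needs care is keeping the arguments of \(\zeta\supedge\) aligned with the definition of \(\xi\) as the single-site supports vary, but this is routine and no heavier than in Lemma~\ref{lem:convergence-F-L}.
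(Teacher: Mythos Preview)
Your proposal is correct and matches the paper's own argument essentially line for line: the paper's proof simply says to repeat the proof of Lemma~\ref{lem:convergence-F-L} with \(\rho_{L'}\) replaced by \(\rho\bigl[H_L(b)\big|_{\Lambda'}\bigr]\) and with the last sum in~\eqref{eq:proof-convergence-F-L-many-sums} omitted, so that only decay of correlations in \(\rho_L\) (via Lemma~\ref{lem:decay-of-correlations}) is required. Your covariance decomposition, the split into \(z\in Z\) (handled by local indistinguishability with the telescoping into three terms) and \(z\in\Lambda_L\setminus Z\) (handled by decay of correlations for \(\rho_L\) alone), and the observation that only one of the two tail sums survives, are exactly this.
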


\begin{proof}
    The proof is exactly the same as the proof of Lemma~\ref{lem:convergence-F-L} but with \(\rho_{L'}\) replaced by \(\rho\bigl[H_L\big|_{\Lambda'}\bigr]\) and without the last sum in~\eqref{eq:proof-convergence-F-L-many-sums}.
    Hence, we only need decay of correlations in \(\rho_L\), which is provided by Lemma~\ref{lem:decay-of-correlations}.
\end{proof}

Lemma~\ref{lem:local-indistinguishability-F} allows us to prove an analogous statement to \nameref{prop:Bloch-theorem} for the “state” \(\mathcal{F}_L(\beta,\mu,b)\), namely that \(\bigl\lvert \tr \bigl( \mathcal{F}_L(\beta,\mu,b) \, J_{k,L}^{(m,n)}(b) \bigr) \bigr\rvert\) decays rapidly inside the bulk.
In particular, we also prove boundedness of the \(\mu\)-derivative of local bond currents, which is not clear a~priori because \(\mathcal{N}_L - \langle \mathcal{N}_L\rangle\) is not uniformly bounded in~\(L\).

\begin{proposition} \label{prop:vanishing-bulk-current-derivative}
    Let \(n \mapsto n^2 \, \zeta\supedge(n)\in \ell^1\), \((H_L(b))_{L\in\N}\) be a family of Hamiltonians of the form~\eqref{eq:Hb} and \(\xi\in \ell^1\) as in Lemma~\ref{lem:convergence-F-L}.
    There exists \(\Cm>0\) such that the following holds:
    If \((H_L(b))_{L\in\N}\) satisfies local indistinguishability of the Gibbs state everywhere at \((\beta,\mu,b)\) with \(\zeta\supedge\)-decay in the sense of Definition~\ref{def:local-indistinguishability}, then
    \begin{equation*}
        \bigl\lvert \tr \bigl( \mathcal{F}_L(\beta,\mu,b) \, J_{k,L}^{z}(b) \bigr) \bigr\rvert
        \leq
        \Cm \, \beta \, \xi \Bigl(\bigl[\dist(z,\Z^2\setminus\Lambda_L)-D-R\bigr]_+\Bigr)
        .
    \end{equation*}
\end{proposition}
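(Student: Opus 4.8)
The plan is to imitate the proof of \nameref{prop:Bloch-theorem} step by step, with the Gibbs state $\rho_L(\beta,\mu,b)$ replaced by the ``state'' $\mathcal{F}_L(\beta,\mu,b)$ of~\eqref{eq:definition-mu-derivative-state} and with Lemma~\ref{lem:local-indistinguishability-F} taking over the role of local indistinguishability of the Gibbs state. Three preliminary facts make this possible. \emph{(i)} Since $H_L(b)$ commutes with $\mathcal{N}_L$, both $\mathcal{N}_L$ and $\rho_L(\beta,\mu,b)$ commute with $H_L(b)$, hence so does $\mathcal{F}_L(\beta,\mu,b)=\beta\,(\mathcal{N}_L-\langle\mathcal{N}_L\rangle_{\rho_L(\beta,\mu,b)})\,\rho_L(\beta,\mu,b)$; it is therefore a stationary element of $\alg_{\Lambda_L}$, so current conservation (Corollary~\ref{cor:current-conservation}) applies to it. \emph{(ii)} Although $\lVert\mathcal{F}_L(\beta,\mu,b)\rVert$ is \emph{not} bounded uniformly in $L$, there is still a uniform a~priori bound $\lvert\tr(\mathcal{F}_L(\beta,\mu,b)\,J_{1,L}^{z}(b))\rvert\le\beta\,\tilde C$, with $\tilde C$ depending only on $R$, $m$ and $\xi(0)$: one applies Lemma~\ref{lem:local-indistinguishability-F} with $\ell=0$ (so that $Z=\Lambda'=B^{z}_L(R)$) to $A=J_{1,L}^{z}(b)$ and combines it with the trivial estimate $\lvert\tr(\mathcal{F}^{B^{z}_L(R)}[H_L(b)\big|_{B^{z}_L(R)}](\beta,\mu)\,J_{1,L}^{z}(b))\rvert\le 2\beta\,\lVert\mathcal{N}_{B^{z}_L(R)}\rVert\,\lVert J_{1,L}^{z}(b)\rVert$. \emph{(iii)} The modified states are magnetic-translation covariant, $U_w^*(b)\,\mathcal{F}^Z[H_L(b)\big|_{\Lambda'}](\beta,\mu)\,U_w(b)=\mathcal{F}^{Z-w}[H_L(b)\big|_{\Lambda'-w}](\beta,\mu)$ whenever $\Lambda'$ and $\Lambda'-w$ lie at distance $>D$ from $\Z^2\setminus\Lambda_L$, because $U_w^*(b)\,\mathcal{N}_Z\,U_w(b)=\mathcal{N}_{Z-w}$ and the local Gibbs states are covariant; together with $U_w^*(b)\,J_{1}^{y}(b)\,U_w(b)=J_{1}^{y-w}(b)$ in the bulk this shows, exactly as for~\eqref{eq:translation-invariance-local-current-expectation}, that $\tr(\mathcal{F}^Z[H_L(b)\big|_{\Lambda'}](\beta,\mu)\,J_{1}^{y}(b))$ is unchanged when $Z$, $\Lambda'$ and $y$ are translated together.

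Next I would fix $\ell\in\N_0$, set $d:=D+R+3\ell+1$, and for large $L$ apply current conservation to the half-box $\Lambda_L\cap\{x_1\ge 0\}$, as in the proof of \nameref{prop:Bloch-theorem}, to obtain $0=\sum_{n=0}^{2L}\tr(\mathcal{F}_L(\beta,\mu,b)\,J_{1,L}^{(0,n)}(b))$. I split this into the $2d$ terms with $n<d$ or $n>2L-d$, each bounded by $\beta\tilde C$ by \emph{(ii)}; the $2L-2d+1$ ``bulk'' terms $\tr(\mathcal{F}_L(\beta,\mu,b)\,J_{1}^{(0,n)}(b))$ with $d\le n\le 2L-d$, which by Lemma~\ref{lem:local-indistinguishability-F} (with $Z=B^{(0,n)}_L(R+2\ell)$ and $\Lambda'=B^{(0,n)}_L(R+3\ell)$, which lie in the bulk by the choice of $d$) I replace by $\tr(\mathcal{F}^{Z}[H_L(b)\big|_{\Lambda'}](\beta,\mu)\,J_{1}^{(0,n)}(b))$ at a cost of at most $\beta\,C_J\,\xi(3\ell)$ each; and the sum of these replacements, which by \emph{(iii)} equals $(2L-2d+1)$ times the single ``central'' number $\tr(\mathcal{F}^{B^{(0,L)}_L(R+2\ell)}[H_L(b)\big|_{B^{(0,L)}_L(R+3\ell)}](\beta,\mu)\,J_{1}^{(0,L)}(b))$. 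Solving for the central number and dividing by $2L-2d+1$ gives the bound $\tfrac{2d\,\beta\,\tilde C}{2L-2d+1}+\beta\,C_J\,\xi(3\ell)$; since the central number depends only on the magnetically-translation-invariant bulk Hamiltonian restricted to a ball of fixed size, it is independent of $L$ for $L$ large, so letting $L\to\infty$ with $\ell$ and $d$ fixed yields $\lvert\tr(\mathcal{F}^{B^{(0,L)}_L(R+2\ell)}[H_L(b)\big|_{B^{(0,L)}_L(R+3\ell)}](\beta,\mu)\,J_{1}^{(0,L)}(b))\rvert\le\beta\,C_J\,\xi(3\ell)$ for every $\ell\in\N_0$.

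For a general $z\in\Lambda_L$ I would set $\ell_z:=\big\lfloor[\dist(z,\Z^2\setminus\Lambda_L)-D-R]_+/3\big\rfloor$, so that (after adjusting the threshold by a couple of units, which only costs a constant) $B^{z}_L(R+3\ell_z)$ lies in the bulk, apply Lemma~\ref{lem:local-indistinguishability-F} once more to pass from $\tr(\mathcal{F}_L(\beta,\mu,b)\,J_{1,L}^{z}(b))$ to the corresponding modified-state expectation, and then use \emph{(iii)} to reduce that expectation to the central number estimated above; this gives $\lvert\tr(\mathcal{F}_L(\beta,\mu,b)\,J_{1,L}^{z}(b))\rvert\le 2\,\beta\,C_J\,\xi(3\ell_z)$. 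Since $\xi$ is non-increasing, the replacement of $3\ell_z$ by $[\dist(z,\Z^2\setminus\Lambda_L)-D-R]_+$ can be absorbed into a final constant $\Cm$; for $z$ within distance $D+R$ of the boundary, where $[\,\cdot\,]_+=0$, the a~priori bound \emph{(ii)} already yields the claim after enlarging $\Cm$. The case $k=2$ is entirely analogous.

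I expect the main obstacle to be fact \emph{(ii)}. In \nameref{prop:Bloch-theorem} the a~priori boundedness of $\tr(\rho_L\,J_{1,L}^{z}(b))$ — which keeps the near-edge contributions to the current-conservation identity under control — is immediate from $\lVert\rho_L\rVert_1=1$; here $\mathcal{F}_L(\beta,\mu,b)$ has operator norm growing with $L$, so this step genuinely needs a locality statement, which is precisely why Lemma~\ref{lem:local-indistinguishability-F} (itself resting on decay of correlations, Lemma~\ref{lem:decay-of-correlations}, and on the hypothesis $n\mapsto n^2\,\zeta\supedge(n)\in\ell^1$) had to be proved first. A secondary, purely bookkeeping, difficulty is keeping track of which balls $B^{x}_L(R+3\ell)$ satisfy the ``distance $>D$ from $\Z^2\setminus\Lambda_L$'' condition needed to invoke magnetic-translation covariance; this is handled by fixing $\ell$ first, choosing $d$ in terms of $\ell$, and only then sending $L\to\infty$.
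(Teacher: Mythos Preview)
Your proposal is correct and follows essentially the same route as the paper: establish the uniform a~priori bound via Lemma~\ref{lem:local-indistinguishability-F} with $\ell=0$, use stationarity of $\mathcal{F}_L$ for current conservation, replace bulk terms by the translation-covariant modified-state expectations, take $L\to\infty$, and then transfer back to general $z$. The only place where the paper is slightly sharper is the very last step: rather than ``absorbing into a constant'' the passage from $\xi(3\ell_z)$ to $\xi\bigl([\dist(z,\Z^2\setminus\Lambda_L)-D-R]_+\bigr)$ (which does not follow from monotonicity alone, since $3\ell_z\le[\cdot]_+$), the paper observes from the explicit form~\eqref{eq:def-zeta-dmu} that $\xi(k)=\xi(3\lfloor k/3\rfloor)$, so the two are in fact equal.
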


\begin{proof}
    For better readability, we drop the arguments \((\beta,\mu,b)\) in the proof.
    And as in the proof of \nameref*{prop:Bloch-theorem} in section~\ref{sec:proof-Bloch-theorem} we only do the proof for \(k=1\), i.e.\ currents in \(x_1\)-direction.

    We first prove uniform boundedness of the left-hand side.
    Recall that \(J_{k,L}^{z}\in \alg_{B^z_L(R)}\) and \(\lVert J_{k,L}^{z} \rVert \leq C_J\).
    For \(\ell=0\), Lemma~\ref{lem:local-indistinguishability-F} with \(B^z_L(R)=Z=\Lambda'\) yields
    \begin{align*}
        \bigl\lvert
            \tr \bigl( \mathcal{F}_L \, J_{k,L}^{z} \bigr)
        \bigr\rvert
        &\leq
        \begin{aligned}[t]
            &\Bigl\lvert
                \tr \bigl( \mathcal{F}_L \, J_{k,L}^{z} \bigr)
                - \tr \bigl( \mathcal{F}^Z\bigl[H_L\big|_{Z}\bigr] \, J_{k,L}^{z} \bigr)
            \Bigr\rvert
            \\&+ \bigl\lvert
                \tr \bigl( \mathcal{F}^Z\bigl[H_L\big|_{Z}\bigr] \, J_{k,L}^{z} \bigr)
            \bigr\rvert
        \end{aligned}
        \\&\leq
        \beta \, C_J \, \xi(0)
        + \beta \, (2R+1)^2 \, C_J
        \\&=:
        \beta \, C_0
        .
    \end{align*}

    To prove decay in the bulk, note that \(\mathcal{F}_L\) is stationary due to stationarity of the Gibbs state and since \(H_L\) commutes with the full number operator \(\mathcal{N}_L\) appearing in the definition of \(\mathcal{F}_L\).
    Hence, we have current conservation in the “state” \(\mathcal{F}_L\), see Corollary~\ref{cor:current-conservation}, which we apply for the rectangle \(\Lambda_L\cap \{\,x_1\geq 0\,\}\).
    Then we choose \(d>D+R\) and \(0<\ell<(d-D-R)/3\) to obtain
    \begin{equation*}
        \begin{aligned}
            0 ={}&\LTadd{=} \sum_{n=0}^{2L} \tr \bigl( \mathcal{F}_L \, J_{1,L}^{(0,n)} \bigr)
            \\={}& \sum_{n=0}^{d-1}
            \tr \bigl( \mathcal{F}_L \, J_{1,L}^{(0,n)} \bigr)
            + \tr \bigl( \mathcal{F}_L \, J_{1,L}^{(0,2L-n)} \bigr)
            \\&+ \sum_{n=d}^{2L-d}
            \tr \bigl( \mathcal{F}_L \, J_{1}^{(0,n)} \bigr)
            - \tr \bigl( \mathcal{F}^{B^{(0,n)}(R+2\ell)}\bigl[H_L\big|_{B^{(0,n)}(R+3\ell)}\bigr] \, J_{1}^{(0,n)} \bigr)
            \\&+ \sum_{n=d}^{2L-d}
            \tr \bigl( \mathcal{F}^{B^{(0,n)}(R+2\ell)}\bigl[H_L\big|_{B^{(0,n)}(R+3\ell)}\bigr] \, J_{1}^{(0,n)} \bigr),
        \end{aligned}
    \end{equation*}
    which is the same decomposition as in the proof of \nameref{prop:Bloch-theorem}.
    The first sum is bounded by \(2 \, d \, \beta \, C_0\).
    With Lemma~\ref{lem:local-indistinguishability-F}, the second sum is bounded by
    \begin{equation*}
        \bigl(2(L-d)+1\bigr) \, \beta \, C_J \, \xi(3\ell)
        .
    \end{equation*}
    All terms in the third sum equal the one at \((0,L)\) due to translation invariance of the Hamiltonian.
    Thus,
    \begin{equation*}
        \Bigl\lvert
            \tr \bigl(\mathcal{F}^{B^{(0,L)}(R+2\ell)}\bigl[H_L\big|_{B^{(0,L)}(R+3\ell)}\bigr] \, J_{1}^{(0,L)}\bigr)
        \Bigr\rvert
        \leq
        \frac{C_0 \, \beta \, d}{L-d} + \beta \, C_J \, \xi(3\ell)
        \to \beta \, C_J \, \xi(3\ell)
    \end{equation*}
    for \(L\to\infty\).

    We can now relate this result back to any point \(x\in \Lambda_L\).
    Choosing \(d=\dist(x,\Z^2\setminus\Lambda_L)\), \(\ell=\bigl\lfloor(d-D-R)/3\bigr\rfloor\) we find
    \begin{align*}
        \Bigl\lvert
            \tr \bigl(\mathcal{F}_L \, J_{1}^{x}\bigr)
        \Bigr\rvert
        &\leq
        \begin{aligned}[t]
            &
            \Bigl\lvert
                \tr \bigl(\mathcal{F}_L \, J_{1}^{x}\bigr)
                - \tr \bigl(\mathcal{F}^{B^{x}(R+2\ell)}\bigl[H_L\big|_{B^{x}(R+3\ell)}\bigr] \, J_{1}^{x}\bigr)
            \Bigr\rvert
            \\&+
            \Bigl\lvert
                \tr \bigl(\mathcal{F}^{B^{(0,L)}(R+2\ell)}\bigl[H_{L}\big|_{B^{(0,L)}(R+3\ell)}\bigr] \, J_{1}^{(0,L)}\bigr)
            \Bigr\rvert
        \end{aligned}
        \\&\leq
        2 \, \beta \, C_J \, \xi(3\ell),
    \end{align*}
    which proves the claim with \(\Cm = \max \bigl\{ C_0 \, \xi(0)^{-1}, 2 \, C_J\bigr\}\), using \(\xi(k) = \xi \bigl( 3 \, \lfloor k/3 \rfloor \bigr)\) due to the explicit form~\eqref{eq:def-zeta-dmu}.
\end{proof}

We can now prove differentiability of \(\Iedge\).

\begin{proposition}\label{prop:differentiability_I}
    Let \(n \mapsto n^2 \, \zeta\supedge(n)\in \ell^1\), \((H_L(b))_{L\in\N}\) be a family of Hamiltonians of the form~\eqref{eq:Hb}.
    If \((H_L(b))_{L\in\N}\) satisfies local indistinguishability of the Gibbs state everywhere with \(\zeta\supedge\)-decay at \((\beta,\mu,b)\) for all \(\mu\) in an open interval \(M\), then \(\Iedge(\beta,\mu,b)\), \(\Iedged{d}(\beta,\mu,b)\) are differentiable functions of \(\mu\in M\).
    Moreover, the derivative of \(\Iedge(\beta,\mu,b)\) is localized near the boundary, with the decay estimate
    \begin{equation}
        \bigl\lvert
            \dmu\, \Iedge(\beta,\mu,b) - \dmu\, \Iedged{d}(\beta,\mu,b)
        \bigr\rvert
        \leq \label{eq:prop-localization-dmu-I}
        \beta \, \Cm \, \smashoperator[lr]{\sum_{n=d-R-D}^\infty} \xi(n)
        ,
    \end{equation}
    where \(\Cm\), \(\xi\) are as in Proposition~\ref{prop:vanishing-bulk-current-derivative}.
\end{proposition}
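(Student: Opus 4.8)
The plan is to obtain differentiability of the infinite-volume quantities by differentiating under the thermodynamic limit, reducing everything to the bulk decay of the ``state'' $\mathcal{F}_L(\beta,\mu,b)$ already established in Proposition~\ref{prop:vanishing-bulk-current-derivative}. First I would note that for each fixed $L$ the operator $\rho_L(\beta,\mu,b)$ acts on a finite-dimensional space and is analytic in $\mu$, so $\Iedged[L]{d}(\beta,\mu,b)$ is smooth in $\mu$ with $\dmu\,\Iedged[L]{d}(\beta,\mu,b)=\sum_{n=0}^{d-1}\tr\bigl(\mathcal{F}_L(\beta,\mu,b)\,J_{1,L}^{(0,n)}(b)\bigr)$ by~\eqref{eq:mu-derivative-local-current}. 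Since $J_1^{(0,n)}(b)\in\alg_{B^{(0,n)}_+(R)}$ and $J_{1,L}^{(0,n)}(b)=J_1^{(0,n)}(b)$ for $L$ large, Lemma~\ref{lem:convergence-F-L} shows that $\bigl(\tr(\mathcal{F}_L(\beta,\mu,b)\,J_1^{(0,n)}(b))\bigr)_L$ is Cauchy, with a bound $\beta\,\lVert J_1^{(0,n)}(b)\rVert\,\xi(\,\cdot\,)$ that does not depend on $\mu$; write $\mathcal{F}_\infty(\beta,\mu,b)(J_1^{(0,n)}(b))$ for its limit. Thus $\dmu\,\Iedged[L]{d}$ converges as $L\to\infty$, uniformly in $\mu$ on $M$, while $\Iedged[L]{d}\to\Iedged{d}$ pointwise by Proposition~\ref{prop:I_infinity-Bloch-theorem-localization-convergence}; the standard theorem on term-by-term differentiation of a uniformly convergent sequence of derivatives then gives that $\Iedged{d}(\beta,\mu,b)$ is differentiable on $M$ with $\dmu\,\Iedged{d}=\sum_{n=0}^{d-1}\mathcal{F}_\infty(\beta,\mu,b)(J_1^{(0,n)}(b))$.

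Next I would push Proposition~\ref{prop:vanishing-bulk-current-derivative} to the limit $L\to\infty$: because $\dist\bigl((0,n),\Z^2\setminus\Lambda_L\bigr)$ eventually stabilises and $\xi$ is non-increasing, this yields the Bloch-type bound $\bigl\lvert\mathcal{F}_\infty(\beta,\mu,b)(J_1^{(0,n)}(b))\bigr\rvert\leq\Cm\,\beta\,\xi([n-D-R]_+)$. Since $\xi\in\ell^1$, the series $\sum_{n\geq 0}\mathcal{F}_\infty(\beta,\mu,b)(J_1^{(0,n)}(b))$ converges absolutely and uniformly in $\mu\in M$, so applying the same differentiation theorem once more---now to the limit $d\to\infty$, using $\Iedged{d}\to\Iedge$ from~\eqref{eq:definition-infinite-edge-current} together with Proposition~\ref{prop:I_infinity-Bloch-theorem-localization-convergence} and the first paragraph---shows that $\Iedge(\beta,\mu,b)$ is differentiable on $M$ with $\dmu\,\Iedge=\sum_{n\geq 0}\mathcal{F}_\infty(\beta,\mu,b)(J_1^{(0,n)}(b))$. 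The localization estimate~\eqref{eq:prop-localization-dmu-I} is then just the tail bound $\sum_{n\geq d}\Cm\,\beta\,\xi(n-D-R)$ after the index shift.

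The hard part will be justifying the two interchanges of $\dmu$ with $\lim_{L\to\infty}$ and $\lim_{d\to\infty}$; the rest is bookkeeping. These interchanges need uniformity in $\mu$ of all the decay estimates, which is precisely what the hypothesis provides: local indistinguishability everywhere is assumed with the same decay function $\zeta\supedge$ for every $\mu$ in the interval $M$, so the constant $\Cm$ and the $\ell^1$ sequence $\xi$ coming out of Lemma~\ref{lem:convergence-F-L} and Proposition~\ref{prop:vanishing-bulk-current-derivative} are independent of $\mu$. Along the way I would make two small checks: that $J_{1,L}^{(0,n)}(b)=J_1^{(0,n)}(b)$ for $L$ large enough that Lemma~\ref{lem:convergence-F-L} (stated for observables supported on balls in $\Z^2_+$) applies verbatim, and that the estimate of Proposition~\ref{prop:vanishing-bulk-current-derivative} indeed survives the limit because $\dist(z,\Z^2\setminus\Lambda_L)$ eventually coincides with $\dist(z,\Z^2\setminus\Z^2_+)$.
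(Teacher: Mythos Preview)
Your proposal is correct and follows essentially the same approach as the paper's proof: differentiate term by term, use Lemma~\ref{lem:convergence-F-L} for uniform convergence in $L$ of the derivatives, then use Proposition~\ref{prop:vanishing-bulk-current-derivative} to pass to $d\to\infty$ and obtain the localization estimate. The paper phrases the first interchange via completeness of $C^1(M)$ applied to each individual $j_1^{(0,n)}$ rather than to the finite sum $\Iedged[L]{d}$, but this is only a cosmetic difference.
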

\begin{proof}
    In finite volume, it is clear that \(\Iedged[L]{d}\) is differentiable in \(\mu\).
    We will thus use local indistinguishability, respectively Proposition~\ref{prop:vanishing-bulk-current-derivative}, to take the limit \(L\to \infty\) and then \(d\to \infty\) uniformly in \(\mu\).

    We abbreviate \(j_{1,L}^{(0,n)}(\mu) = \tr \bigl( \rho_L(\beta,\mu,b) \, J_{1,L}^{(0,n)}(b) \bigr)\) and \(j_{1}^{(0,n)}(\mu) = \rho_\infty(\beta,\mu,b) \bigl( J_{1}^{(0,n)}(b) \bigr)\) as before (the first will only be used for \(n \leq L\) so that \(J_{1,L}^{(0,n)}(b) = J_{1}^{(0,n)}(b)\)).
    First, by Lemma~\ref{lem:convergence-F-L},
    \begin{equation*}
        \dmu\, j_{1,L}^{(0,n)}(\mu)
        =
        \tr\Bigl( \mathcal{F}_L(\beta, \mu,b) \, J_{1}^{(0,n)}(b) \Bigr)
    \end{equation*}
    is a Cauchy sequence in \(L\).
    Denoting its limit by \(c_1^{(0,n)}(\mu)\), we have
    \begin{equation*}
        \bigl\lvert
            c_1^{(0,n)}(\mu) - \dmu\, j^{(0,n)}_{1,L}(\mu)
        \bigr\rvert
        \leq
        \beta \, C_J \, \xi(L-R)
        .
    \end{equation*}
    By local indistinguishability we know that \(j_{1,L}^{(0,n)}(\mu)\) converges to its limit \(j_{1}^{(0,n)}(\mu)\), and this convergence is uniform since \(\zeta\supedge\) is independent of \(\mu\).
    Hence, by completeness of \(C^1(M)\), \(j_{1}^{(0,n)}(\mu)\) is differentiable in \(\mu\) and \(\dmu\, j_{1}^{(0,n)}(\mu) = c_1^{(0,n)}(\mu)\).
    Thus, \(\Iedged{d}(\beta,\mu,b)\) is differentiable and satisfies
    \begin{equation*}
        \bigl\lvert
            \dmu\, \Iedged{d}(\beta,\mu,b) - \dmu\, \Iedged[L]{d}(\beta,\mu,b)
        \bigr\rvert
        \leq
        \beta \, d \,C_J \, \xi(L-R)
        .
    \end{equation*}
    To take \(d\to \infty\), observe that by Proposition~\ref{prop:vanishing-bulk-current-derivative}
    \begin{equation*}
        \begin{aligned}
            \lvert \dmu\, j^{(0,n)}_1(\mu) \rvert
            &\leq
            \bigl\lvert
                \dmu\, j^{(0,n)}_1(\mu) - \dmu\, j^{(0,n)}_{1,L}(\mu)
            \bigr\rvert
            + \bigl\lvert
                \dmu\, j^{(0,n)}_{1,L}(\mu)
            \bigr\rvert
            \\&\leq
            \, \beta \, C_J \, \xi(L-R)
            + \Cm \, \beta \, \xi(n-D-R)
            ,
        \end{aligned}
    \end{equation*}
    which converges to \(\beta \, \Cm \, \xi(n-D-R)\) as \(L\to\infty\).
    Summation over \(n\) shows that \(\Iedge(\beta,\mu,b)\) is differentiable and satisfies~\eqref{eq:prop-localization-dmu-I}.
\end{proof}

Note that Proposition~\ref{prop:differentiability_I} together with the equality \(m(\beta,\mu,b)=\Iedge(\beta,\mu,b)\) from Theorem~\ref{thm:limit-I} proves Theorem~\ref{thm:muderivative}\@.
Additionally, we provide a bound on the difference of \(\dmu\, m_L(\beta,\mu,b)\) and \(\dmu\,\Iedge[L](\beta,\mu,b)\) in finite volume, which is analogous to the bound from Theorem~\ref{thm:main}\@.

\begin{proposition}\label{prop:dmu-m_L-equals-dmu-I_L}
    Let \(n \mapsto n^2 \, \zeta\supedge(n)\in \ell^1\), \((H_L(b))_{L\in\N}\) be a family of Hamiltonians of the form~\eqref{eq:Hb}.
    There exists a null sequence \(\eta\) so that the following holds: If \((H_L(b))_{L\in\N}\) satisfies local indistinguishability of the Gibbs state everywhere with \(\zeta\supedge\)-decay at \((\beta,\mu,b)\) for \(\mu\) in some open interval, then
    \begin{equation*}
        \Bigl\lvert
            \dmu\, m_L(\beta,\mu,b) - \dmu\, \Iedge[L](\beta,\mu,b)
        \Bigr\rvert
        \leq
        \beta \,\eta(L)
        .
    \end{equation*}
\end{proposition}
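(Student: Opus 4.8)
The plan is to repeat the proof of Proposition~\ref{prop:m_L-equals-I_L^d} almost verbatim, with the Gibbs state $\rho_L(\beta,\mu,b)$ replaced by the ``state'' $\mathcal{F}_L(\beta,\mu,b)$ of~\eqref{eq:definition-mu-derivative-state} and \nameref{prop:Bloch-theorem} replaced by its $\mu$-derivative counterpart, Proposition~\ref{prop:vanishing-bulk-current-derivative}. The starting point is the representation~\eqref{eq:magnetization-as-sum-over-currents} of the magnetization as a finite sum of local bond currents. Since for fixed $L$ this is a finite sum of smooth functions of $\mu$, we may differentiate term by term, and~\eqref{eq:mu-derivative-local-current} gives
\begin{equation*}
    \dmu\, m_L(\beta,\mu,b)
    =
    \frac{1}{(2L+1)^2}\sum_{m=-L}^{L-1}\sum_{n=0}^{2L} n \, \tr\bigl(\mathcal{F}_L(\beta,\mu,b)\, J_{1,L}^{(m,n)}(b)\bigr)
    .
\end{equation*}

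Next I would run the five-region decomposition of Proposition~\ref{prop:m_L-equals-I_L^d} (into $\Abulk$, $\Aleft$, $\Aright$, $\Abottom$, $\Atop$), using current conservation — Corollary~\ref{cor:current-conservation}, which applies because $\mathcal{F}_L$ is stationary, as it commutes with $H_L$ (both $\rho_L$ and $\mathcal{N}_L$ do) — to rewrite the top contribution in terms of the bottom one and to compare different columns. The two ingredients that powered the original estimate have exact analogues here: the uniform bound $\lvert\tr(\rho_L J_{k,L}^z)\rvert\le C_J$ is replaced by $\lvert\tr(\mathcal{F}_L J_{k,L}^z)\rvert\le \beta\,C_0$, and the Bloch decay $\lvert\tr(\rho_L J_{k,L}^z)\rvert\le \Cb\,\zeta\supbulk(\cdots)$ is replaced by $\lvert\tr(\mathcal{F}_L J_{k,L}^z)\rvert\le \beta\,\Cm\,\xi(\cdots)$; both are supplied by Proposition~\ref{prop:vanishing-bulk-current-derivative} (and its proof, which also yields the uniform bound) for $k=1,2$, since every bond current $J_{k,L}^z$ lies in $\alg_{B^z_L(R)}$ with norm $\le C_J$. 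This produces, with $\xi\in\ell^1$ as in Lemma~\ref{lem:convergence-F-L} and some constant $C$,
\begin{equation*}
    \Bigl\lvert \dmu\, m_L(\beta,\mu,b) - \dmu\, \Iedged[L]{d}(\beta,\mu,b) \Bigr\rvert
    \le
    \beta\, C \Bigl(\frac{d^2}{L} + \smashoperator[lr]{\sum_{n=d-R-D}^\infty}\xi(n)\Bigr)
    \qquad\text{for all } d>D+R.
\end{equation*}
In parallel, I would record the finite-volume localization of $\dmu\,\Iedged[L]{d}$ near the edge, exactly as in Proposition~\ref{prop:localization-I_L}: since $\dmu\,\Iedge[L] - \dmu\,\Iedged[L]{d} = \sum_{n=d}^{L-1}\tr(\mathcal{F}_L J_{1,L}^{(0,n)})$, Proposition~\ref{prop:vanishing-bulk-current-derivative} gives $\lvert\dmu\,\Iedge[L] - \dmu\,\Iedged[L]{d}\rvert \le \beta\,\Cm\sum_{n=d-D-R}^\infty\xi(n)$.

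Combining the two displays, choosing $d$ as a function of $L$ growing slowly (e.g.\ $d=\lfloor L^{1/4}\rfloor$) and using $\xi\in\ell^1$, defines the null sequence $\eta(L):= C\min_{D+R\le d\le L}\bigl(d^2/L + \sum_{n=d-R-D}^\infty\xi(n)\bigr)$ and completes the proof. I expect no genuine obstacle: all the analytic work has been front-loaded into Proposition~\ref{prop:vanishing-bulk-current-derivative} and the chain of Lemmas~\ref{lem:decay-of-correlations}, \ref{lem:convergence-F-L}, \ref{lem:local-indistinguishability-F}. The only point that needs a moment's care is to verify that in the five-region argument one only ever traces $\mathcal{F}_L$ against the local bond currents $J_{k,L}^z$ — so that both the uniform bound $\beta C_0$ and the $\xi$-decay apply — because $\mathcal{F}_L$ is not a state and its operator norm grows with $L$, so a generic bounded observable would not be controlled; and to note, as already observed, that $\mathcal{F}_L$ is stationary so that current conservation is available.
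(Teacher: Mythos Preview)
Your proposal is correct and follows essentially the same approach as the paper: differentiate~\eqref{eq:magnetization-as-sum-over-currents}, rerun the five-region argument of Proposition~\ref{prop:m_L-equals-I_L^d} with $\mathcal{F}_L$ in place of $\rho_L$ using its stationarity and Proposition~\ref{prop:vanishing-bulk-current-derivative} in place of \nameref{prop:Bloch-theorem}, then combine with the localization of $\dmu\,\Iedged[L]{d}$ and optimize over $d$. Your caveat that one must only ever trace $\mathcal{F}_L$ against local bond currents is exactly the point that makes the argument go through.
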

\begin{proof}
    Differentiating~\eqref{eq:magnetization-as-sum-over-currents} by using~\eqref{eq:mu-derivative-local-current}, we obtain
    \begin{equation*}
        \dmu \, m_L(\beta,\mu,b)
        =
        \frac{1}{(2L+1)^2} \sum_{m=-L}^{L-1} \sum_{n=0}^{2L} n \tr \bigl( \mathcal{F}_L(\beta,\mu,b) \, J_{1,L}^{(m,n)}(b) \bigr)
        .
    \end{equation*}
    Following the proof of Proposition~\ref{prop:m_L-equals-I_L^d}, where we only used \nameref{prop:Bloch-theorem} and current conservation in \(\rho_L(\beta,\mu,b)\), whose analogues here are Proposition~\ref{prop:vanishing-bulk-current-derivative} and current conservation in \(\mathcal{F}_L(\beta,\mu,b)\) (the latter holds, because \(\mathcal{F}_L(\beta,\mu,b)\) is stationary), we obtain
    \begin{equation*}
        \Bigl\lvert
            \dmu\, m_L(\beta,\mu,b) - \dmu\, \Iedged[L]{d}(\beta,\mu,b)
        \Bigr\rvert
        \leq
        \beta \,\Cm\, \biggl(
            \frac{4 \, d^2}{L} + \smashoperator[lr]{\sum_{n=d-R-D}^\infty} \xi(n)
        \biggr).
    \end{equation*}
    Combining this with Proposition~\ref{prop:vanishing-bulk-current-derivative} to compare \(\dmu\,\Iedged[L]{d}(\beta,\mu,b)\) and \(\dmu\,\Iedge[L](\beta,\mu,b)\), proves the claim with
    \begin{equation*}
        \eta(L)
        =
        2 \, \Cm \min_{D+R\leq d\leq L} \biggl(
            \frac{2 \, d^2}{L} + \smashoperator[lr]{\sum_{n=d-R-D}^\infty} \xi(n)
        \biggr).
        \qedhere
    \end{equation*}
\end{proof}

\subsection{Proof of Theorem~\ref{thm:limit-p}}
\label{sec:proof-theorem-limit-p}

We first prove that the limit of the finite volume pressures \(p_L(\mu,\beta,b)\) defined in~\eqref{eq:definition-pressure-finite-volume-partition-function} exists and is independent of the edge Hamiltonian.
Therefore, let
\begin{equation}
    \label{eq:definition-rho-bulk}
    \rho\supbulk_L(\beta,\mu,b)
    =
    \rho_{B_L}\bigl[H\supbulk_L(b)\bigr](\beta,\mu)
    ,
    \quadtext{and}
    \mathcal{Z}\supbulk_L
    =
    \mathcal{Z}_{B_L}\bigl[H\supbulk_L(b)\bigr](\beta,\mu)
\end{equation}
be the Gibbs state and partition function of the bulk Hamiltonian on the centered boxes~\(B_L\).
For the statement we introduce
\begin{equation}
    \label{eq:bound-local-Hamiltonian-terms}
    \CHany
    :=
    \sup_{x\in \Z^2}
    \biggl(
        2 \sum_{y\in \Z^2}
        \bigl\lVert a^*_x \, T\supany_b(x, y) \, a^{}_y \bigr\rVert
        + \sumstack{X\subset \Z^2:\\x\in X}
        \bigl\lVert \Phi\supany(X) \bigr\rVert
    \biggr)
    + \mu
    ,
\end{equation}
which bounds the norm of all hoppings and interactions which include a particular site.

\begin{proposition}
    \label{prop:convergence-pressure-independence-boundary}
    Let \(\bigl(H_L(b)\bigr)_{L\in\N}\) be a family of Hamiltonians of the form~\eqref{eq:Hb} and let \(H_L\supbulk(b)\) be the corresponding bulk Hamiltonian defined in~\eqref{eq:Hbulk}.
    Then
    \begin{equation}
        \label{eq:prop-bound-pressure-edge-and-bulk-system}
        \biggl\lvert
            p_L(\beta,\mu,b)
            - \frac{P\bigl(\rho\supbulk_L(\beta,\mu,b)\bigr)}{(2L+1)^2}
        \biggr\rvert
        \leq
        \frac{\CHedge \, D}{2L+1}
    \end{equation}
    for all \(\beta>0\), \(\mu\), \(b\in \R\).
    Moreover, the thermodynamic limit \(p(\beta,\mu,b) := \lim_{L\to\infty} p_L(\beta,\mu,b)\) of the pressure exists and
    \begin{equation}
        \label{eq:prop-convergence-pressure-bulk-system}
        \biggl\lvert
            p(\beta,\mu,b)
            - \frac{P\bigl(\rho\supbulk_L(\beta,\mu,b)\bigr)}{(2L+1)^2}
        \biggr\rvert
        \leq
        \frac{4\,R\,\CHbulk}{2L+1}
        .
    \end{equation}
\end{proposition}

The proof is based on~\cite[section~9.2]{AM2003}, where the convergence for translation invariant interactions is discussed.
Instead, here we have a bulk part which is invariant under magnetic translations and an additional edge contribution.

\begin{proof}
    Within the proof we fix \(\beta\), \(\mu\) and denote \(P(H) = \beta^{-1} \, \ln \tr \bigl(\E^{-\beta H} \bigr)\) for self-adjoint operators \(H\) such that \((2L+1)^2 \, p_L(\beta,\mu,b) = P\bigl(\rho_L(\beta,\mu,b)\bigr) = P\bigl(H_L(b)-\mu \, \mathcal{N}_L\bigr)\), i.e.\ we write the pressure of the Gibbs state by just specifying the exponent.

    We begin with the important observation, that the pressure is continuous and bounded in the Hamiltonian, i.e.\ \(\bigl\lvert P(H_1) - P(H_2) \bigr\rvert \leq \lVert H_1 - H_2 \rVert\) and \(\bigl\lvert P(H_1) \bigr\rvert \leq \lVert H_1 \rVert\), for all self-adjoint \(H_1\) and \(H_2\).
    To see this, consider self-adjoint \(A_1\) and \(A_2\) and \(A(\lambda) := \lambda \, A_1 + (1-\lambda) \, A_2\).
    Then
    \begin{equation*}
        \begin{aligned}
            \bigl\lvert
                \ln \tr(\E^{-A_1}) - \ln \tr(\E^{-A_2})
            \bigr\rvert
            &=
            \biggl\lvert
                \int_0^1 \frac{\D}{\D \lambda} \ln \tr \bigl(\E^{-A(\lambda)} \bigr)
                \, \D \lambda
            \biggr\rvert
            \\&=
            \biggl\lvert
                \int_0^1
                \frac
                    {\tr \bigl( (A_1-A_2) \, \E^{- A(\lambda)} \bigr)}
                    {\tr \bigl( \E^{- A(\lambda)} \bigr)}
                \, \D \lambda
            \biggr\rvert
            \leq
            \lVert A_1-A_2 \rVert
            ,
        \end{aligned}
    \end{equation*}
    where we used that \(\E^{- A(\lambda)} \big/ \tr \bigl(\E^{- A(\lambda)}\bigr)\) is a normalized state in the last step.
    The result for \(P\) follows immediately because the factors of \(\beta\) cancel.

    We first show that the pressure in finite volume is almost independent of the edge terms.
    Therefore, let
    \(
        W := \sum_{x,y \in \Lambda_L} a^*_x \, T\supedge_b(x, y) \, a^{}_y
        + \sum_{X\subset \Lambda_L} \Phi\supedge(X)
    \)
    be the edge contribution to the Hamiltonian \(H_L(b)\) such that \(H_L(b) - W\) is the Hamiltonian from~\eqref{eq:Hb} without any additional edge terms.
    Then,
    \begin{equation*}
        \bigl\lvert
            P\bigl(H_L - \mu \, \mathcal{N}_L\bigr)
            - P\bigl( H_L(b) - W  - \mu \, \mathcal{N}_L \bigr)
        \bigr\rvert
        \leq
        \lVert W \rVert
        \leq
        \CHedge \, D \, (2L+1)
        .
    \end{equation*}
    Thus, the per volume pressure \(p_L(\beta,\mu,b)\) is independent of the edge terms up to an error \(\CHedge \, D \big/ (2L+1) \to 0\) as \(L\to\infty\) and we only consider the Hamiltonian without edge terms in the following.

    To shorten notation in the following we denote \(H\supbulk_\Lambda := H\supbulk_L\big|_\Lambda\) for \(\Lambda\subset B_L\).
    As discussed in section~\ref{sec:magnetic-translations}, \(H\supbulk_{\Lambda+x} = U_{-x}(b) \, H\supbulk_{\Lambda} \, U_{-x}^*(b)\) and thus the partition function and the pressure of the respective states agree, \(P(H\supbulk_{\Lambda+x}) = P(H\supbulk_{\Lambda})\).
    This in particular proves that \(P\bigl(H_L(b)-\mu\,\mathcal{N}_{\Lambda_L}\bigr) = P\bigl(H\supbulk_L(b)-\mu\,\mathcal{N}_{B_L}\bigr)\), i.e.\ the pressure of the system on \(\Lambda_L\) without edge contribution exactly agrees with that on \(B_L\).
    Together with the above estimate, \eqref{eq:prop-bound-pressure-edge-and-bulk-system} follows.

    We now prove convergence of \(p\supbulk_L := P\bigl(\rho\supbulk_L(\beta,\mu,b)\bigr)\big/(2L+1)^2\) as \(L\to\infty\).
    For \(L'<L\) one can fit \(n = \bigl\lfloor \frac{2L+1}{2L'+1} \bigr\rfloor^2\) disjoint boxes \(B_{L'}+x_j\) in \(B_L\).
    By the estimate on the pressure, we find
    \begin{align*}
        \biggl|
            P\bigl(
                H\supbulk_{B_L}
                - \mu \, \mathcal{N}_{B_{L}}
            \bigr)
            -
            P\bigl(
                H\supbulk_{\bigcup_j B_{L'}+x_j}
                - \mu \, \mathcal{N}_{\bigcup_j B_{L'}+x_j}
            \bigr)
        \biggr|
        &\leq
        \CHbulk \, \bigl(\lvert B_L \rvert - n \, \lvert B_{L'} \rvert\bigr)
        \\&\leq
        2 \, \CHbulk \, (2L'+1) \, (2L+1)
        .
    \end{align*}
    In the second step we used \(\bigl\lvert\lfloor q \rfloor^2 - q^2 \bigr\rvert \leq 2 \, q\) for \(q>0\).
    In the next step we remove the hoppings and interactions between the individual boxes
    \begin{align*}
        &\alignindent
        \biggl|
            P\bigl(
                H\supbulk_{\bigcup_j B_{L'}+x_j}
                - \mu \, \mathcal{N}_{\bigcup_j B_{L'}+x_j}
            \bigr)
            - P\Bigl(
                \sum_{j=1}^n H\supbulk_{B_{L'}+x_j}
                - \mu \, \mathcal{N}_{B_{L'}+x_j}
            \Bigr)
        \biggr|
        \\&\leq
        4 \, \CHbulk \, R \, (2L'+1) \, n
        \\&\leq
        4 \, \CHbulk \, R \, (2L+1)^2 \, (2L'+1)^{-1} .
    \end{align*}
    Then, we observe that the trace of the non-interacting parts factors
    \begin{align*}
        P\Bigl(
            \sum_{j=1}^n H\supbulk_{B_{L'}+x_j} - \mu \, \mathcal{N}_{B_{L'}+x_j}
        \Bigr)
        &=
        \beta^{-1} \, \ln \prod_{j=1}^n \tr \bigl(\beta\,H\supbulk_{B_{L'}+x_j}-\beta\,\mu \, \mathcal{N}_{B_{L'}+x_j}\bigr)
        \\&=
        \sum_{j=1}^n
        P\Bigl(
            H\supbulk_{B_{L'}+x_j}
            - \mu \, \mathcal{N}_{B_{L'}+x_j}
        \Bigr)
        \\&=
        n \, P\bigl(
            H\supbulk_{B_{L'}}
            - \mu \, \mathcal{N}_{B_{L'}}
        \bigr)
        ,
    \end{align*}
    where we used that the pressures of the individual boxes all agree.
    As a last step we bound
    \begin{equation*}
        \biggl\lvert
            \biggl(
                n
                - \frac{\lvert B_{L} \rvert}{\lvert B_{L'} \rvert}
            \biggr)
            \, P\bigl(
                H\supbulk_{B_{L'}}
                - \mu \, \mathcal{N}_{B_{L'}}
            \bigr)
        \biggr\rvert
        \leq
        2 \, \CHbulk \, (2L'+1) \, (2L+1)
        .
    \end{equation*}
    Using triangle inequality and dividing everything by \(\lvert B_L \rvert\), we obtain an estimate for the per volume pressures
    \begin{equation}
        \label{eq:proof-p_L-bulk-is-Cauchy-sequence}
        \Bigl\lvert
            p\supbulk_L
            - p\supbulk_{L'}
        \Bigr\rvert
        \leq
        4 \, \CHbulk
        \, \biggl(
            \frac{2L'+1}{2L+1}
            + \frac{R}{2L'+1}
        \biggr)
        .
    \end{equation}
    Equation~\eqref{eq:proof-p_L-bulk-is-Cauchy-sequence} shows that \(\{p\supbulk_L\}_{L \in \N}\) is a Cauchy sequence and thus it is convergent.
    Together with~\eqref{eq:prop-bound-pressure-edge-and-bulk-system} also \(p_L(\beta,\mu,b)\) converges to the same limit \(p(\beta,\mu,b)\).
    The convergence in~\eqref{eq:prop-convergence-pressure-bulk-system} follows from~\eqref{eq:proof-p_L-bulk-is-Cauchy-sequence} after taking the limit \(L\to\infty\).
\end{proof}

Next we show that the limit of the pressures of the finite volume boxes agrees with the pressure of any infinite volume KMS state of the system without an edge.

\begin{proposition}
    \label{prop:limit-of-pressure-equals-infinitie-volume-pressure}
    Let \(\bigl(H\supbulk_L(b)\bigr)_{L\in \N}\) be a family of Hamiltonians of the form~\eqref{eq:Hbulk} satisfying the assumptions from section~\ref{sec:Hamiltonian}, and let \(\tau\supbulk\) be the corresponding dynamics defined in~\eqref{eq:dynamics-Hbulk}.
    For every \((\tau\supbulk,\beta)\)-KMS state \(\omega\) the pressure per volume of the restriction of \(\omega\) to \(B_L\) defined by~\eqref{eq:definition-pressure-state}, satisfies
    \begin{equation*}
        \biggl|
            \frac{P\bigl(\rho\supbulk_L(\beta,\mu,b)\bigr)}{(2L+1)^2}
            - \frac{P\bigl(\omega|_{B_L}\bigr)}{(2L+1)^2}
        \biggr|
        \leq
        \frac{8 \, \CHbulk \, R}{2L+1}
        .
    \end{equation*}
\end{proposition}

\begin{proof}
    We follow the ideas of~\cite[Proposition~12.1]{AM2003}.
    Let \(\omega\) be a KMS state, denote its restriction \(\omega_L:= \omega|_{B_L}\) and abbreviate \(\rho\supbulk_L = \rho\supbulk_L(\beta,\mu,b)\).
    The difference
    \begin{equation*}
        \beta \, P(\omega_L) - \beta \, P(\rho\supbulk_L)
        =
        \tr(\omega_L \ln \omega_L)- \tr(\omega_L \ln \rho\supbulk_L)
    \end{equation*}
    equals the relative entropy \(S(\omega_L | \rho\supbulk_L) \geq 0\).
    Since the relative entropy is monotone under restrictions (see~\cite[Theorem~6.2.33]{BR1997}), we have
    \begin{equation*}
        S(\omega_L | \rho\supbulk_L) \leq S(\omega|\rho)
    \end{equation*}
    for any extension \(\rho\) of \(\rho\supbulk_L\) from \(\alg_{B_L}\) to \(\alg\).
    By~\cite[Theorem~7.5]{AM2003}, \(\omega\) satisfies the Gibbs condition and a natural choice for this extension is given by the perturbation of~\(\omega\) where all interactions between \(B_L\) and the rest of the system are deleted (compare~\cite[Corollary~7.8]{AM2003}).
    To be precise, let
    \begin{equation*}
        W_L
        =
        \sumstack[lr]{
            x,y \in \Z^2 \colon
            \\ \{x,y\} \cap B_L \neq \emptyset,
            \\ \{x,y\}\cap \Z^2 \setminus B_L \neq \emptyset
        } a^*_x \, T\supbulk(x,y) \, a^{}_y
        + \sumstack[lr]{
            X \subset \Z^2 \colon
            \\ X \cap B_L \neq \emptyset,
            \\ X \cap \Z^2 \setminus B_L \neq \emptyset
        } \Phi\supbulk(X)
    \end{equation*}
    be the surface interaction, which is an element of \(\alg\) with norm bounded by \({4 \, \CHbulk \, R \, (2L+1)}\) since all interactions are of finite range~\(R\).
    The state corresponding to subtraction of \(W_L\) from the Hamiltonian can be expressed in the GNS representation \((\mathfrak{h}_{\omega}, \pi_{\omega}, \Omega)\) for \(\omega\) by
    \begin{equation*}
        \rho(A)
        =
        \Bigl\langle
            \E^{-\beta (H_\omega-\pi_\omega(W_L))/2}
            \, \E^{\beta H_\omega/2}
            \, \Omega,
            \pi_\omega(A)
            \, \E^{-\beta (H_\omega-\pi_\omega(W_L))/2}
            \, \E^{\beta H_\omega/2}
            \, \Omega
        \Bigr\rangle \Big/ \mathcal{Z}^{W_L},
    \end{equation*}
    where \(H_\omega\) is the generator of the dynamics induced by \(\tau\supbulk\) in \(\mathfrak{h}_\omega\) and \(\mathcal{Z}^{W_L}\) the normalizing factor (see~\cite[Theorem~5.4.4]{BR1997} and note that \(\E^{\beta H_\omega/2}\) acts trivially on the cyclic vector \(\Omega\) since \(\omega\) is invariant).
    With this, we have (cf.~\cite[below Definition~6.2.29]{BR1997})
    \begin{equation*}
        S(\omega|\rho)
        \leq
        -\omega(\beta \, W_L) + \rho(\beta \, W_L)
        \leq
        2 \, \beta \, \lVert W_L \rVert,
    \end{equation*}
    and thus
    \begin{equation*}
        0
        \leq
        \frac{ P(\rho\supbulk_L) + P(\omega_L)}{(2L+1)^2}
        \leq
        \frac{8 \, \CHbulk\, R}{2L+1}
        .
        \qedhere
    \end{equation*}
\end{proof}

Now we are able to prove Theorem~\ref{thm:limit-p}.

\begin{proof}[Proof of Theorem~\ref{thm:limit-p}]
    The convergence of \(p_L(\beta,\mu,b)\) and independence of boundary terms follows from Proposition~\ref{prop:convergence-pressure-independence-boundary}.
    Equality of the pressure with the per volume pressure of any \((\tau\supbulk,\beta)\)-KMS state follows from Propositions~\ref{prop:convergence-pressure-independence-boundary} and~\ref{prop:limit-of-pressure-equals-infinitie-volume-pressure}.

    Now assume that \(\bigl(H_L(b)\bigr)_{L\in \N}\) satisfies local indistinguishability uniformly in \(b\).
    By Theorem~\ref{thm:limit-I}, \(m_L(\beta,\mu,b) = \partial_b \, p_L(\beta,\mu,b)\) converges to \(m(\beta,\mu,b)\) as \(L\to \infty\), and in view of the estimate~\eqref{eq:proof-explicit-bound-m_L-equals-I} this convergence is uniform in \(b\).
    Then, the convergence of the primitives \(p_L(\beta,\mu,b)\), which converge pointwise by Proposition~\ref{prop:convergence-pressure-independence-boundary}, must also be uniform and \(p(\beta,\mu, b)\) is differentiable with derivative \(m(\beta,\mu,b)\).
\end{proof}

\statement{Acknowledgments}

We thank Sven Bachmann for valuable comments on the first version of the manuscript.
This work was funded by the \foreignlanguage{ngerman}{Deutsche Forschungsgemeinschaft} (DFG, German Research Foundation) –
470903074; 
465199066
.
The work of M.\,M. has been supported by a fellowship of the Alexander von Humboldt Foundation during his stay at the University of Tübingen, where this work initiated.
M.\,M. gratefully acknowledges the support of \foreignlanguage{italian}{PNRR Italia Domani} and Next Generation EU through the \LTskip{ICSC} National Research Centre for High Performance Computing, Big Data and Quantum Computing and the support of the MUR grant \foreignlanguage{italian}{Dipartimento di Eccellenza}~2023–2027.
J.\,L. acknowledges additional support through the EUR-EIPHI Graduate School (ANR-17-EURE-0002).

\statement{Competing interests}
The authors have no relevant financial or non-financial interests to disclose.

\statement{Data availability}
No datasets were generated or analyzed during the current study.

\printbibliography[heading=bibintoc]

\end{document}